%% file: main.tex
\documentclass[11pt]{article}

\usepackage[letterpaper,margin=1in]{geometry}
\usepackage[T1]{fontenc}
\usepackage[utf8]{inputenc}
\usepackage{amsmath,amssymb,amsthm,mathtools}
\usepackage{aliascnt}
\usepackage{microtype}
\usepackage{enumitem}
\usepackage{mleftright}
\usepackage{booktabs}
\usepackage{float}
\usepackage{xcolor}
\usepackage[authoryear]{natbib}
\usepackage[colorlinks=true,linkcolor=magenta,citecolor=teal,urlcolor=magenta]{hyperref}
\usepackage{algorithm}
\usepackage{algpseudocode}
\usepackage{tikz}
\usepackage{nicefrac}
\usetikzlibrary{arrows.meta,calc,positioning}
\usepackage{multirow}
\usepackage{bm} %
\usepackage{tabularx}
\usepackage[most]{tcolorbox}

\setlist[itemize]{leftmargin=1.6em,itemsep=0.25em,topsep=0.25em}
\setlist[enumerate]{leftmargin=1.8em,itemsep=0.25em,topsep=0.25em}
\allowdisplaybreaks
\usepackage[capitalize,noabbrev]{cleveref}

\usepackage{xcolor}
\definecolor{myOrange}{RGB}{235, 105, 92}
\definecolor{myMaroon}{RGB}{128, 0, 0}
\definecolor{maroon}{rgb}{0.5,0.0,0.0}
\definecolor{heavymaroon}{rgb}{0.3,0.0,0.0}

\definecolor{mygreen}{RGB}{160, 200, 140}

\usepackage{hyperref}
\hypersetup{
colorlinks=true,
linkcolor=myOrange,
citecolor=myMaroon, filecolor=cyan,
urlcolor=mygreen,
hypertexnames=false
}

\makeatletter
\let\oldalglinenumber\alglinenumber
\renewcommand{\alglinenumber}[1]{%
  \protected@edef\@currentlabel{#1}%
  \oldalglinenumber{#1}%
}
\makeatother

\input{notation}

\title{Multiplicative Optimism for Constant Regret in Games}
\author{
Ashkan Soleymani\\
MIT\\
\texttt{ashkanso@mit.edu}
\and
Georgios Piliouras\\
Google Deepmind\\
\texttt{gpil@google.com}
}
\date{}

\begin{document}
\maketitle

\begin{abstract}
We introduce Multiplicatively Optimistic Regret Matching (\algname{}), an uncoupled learning rule for finite general-sum games.
Under simultaneous full-information self-play, every player achieves
external regret $\mathcal{O}(\sqrt n\log d)$
uniformly over all horizons, using only one-step optimism. The analysis
combines a potential-based regret-matching argument with multiplicative
stability and Hellinger control of strategy movement. A learning-rate
safeguard additionally gives
$\mathcal{O}(\sqrt{T\log d})$ regret in the face of adversarial utilities. 
\end{abstract}

\clearpage
\setcounter{tocdepth}{2}
\tableofcontents
\newpage

\section{Introduction}\label{sec:intro}

A central question in game theory is whether meaningful strategic
behavior can emerge from the independent decisions of learning agents
whose interests need not align. Equilibrium provides a static
description of such behavior, but it does not by itself specify how
agents can reach it through repeated interaction. A learning rule must
therefore serve two purposes. Each player should perform well according
to its own objective, while the joint behavior of the players should
satisfy useful equilibrium guarantees
\citep{fudenberg1998theory,shoham2009multiagent}. We study how quickly
these two goals can be achieved by decentralized learning.

The classical benchmark for stable behavior is Nash equilibrium. At a
Nash equilibrium, no player can improve its expected utility by
unilaterally changing its strategy, and every finite game admits such an
equilibrium \citep{nash1950equilibrium}. For a learner, however,
existence is only part of the problem. A player may not know which
equilibrium to target, how the other players will adapt, or even their
utility functions. This motivates \emph{uncoupled learning}, in which a
player's update does not require access to the objectives of the other
players \citep{hart2003uncoupled}.

Learning in this setting is inherently nonstationary. Even when the
underlying game is fixed, each player's utility vector changes as the
other players change their strategies. These changes in turn affect the
players' future updates, so the environment faced by each learner is
generated by the learning process itself. This feedback is also present
in modern multiagent systems, including self-play and population-based
learning
\citep{silver2017mastering,lanctot2017unified}. Finite games with exact
utility observations remove many statistical and computational
complications of these applications and isolate this strategic source
of nonstationarity.

Regret minimization provides a natural individual performance criterion
for such interactions. A player's external regret compares its
cumulative utility with that of the best fixed action in hindsight,
evaluated against the same sequence of play by the other players.
Sublinear regret means that no fixed action retains a positive
per-round advantage in the long run. Importantly, classical no-regret
algorithms provide this guarantee against arbitrary utility sequences,
without assuming a model of the opponents or that they follow a
prescribed learning rule
\citep{hannan1957approximation,blackwell1956analog,cesa2006prediction}.
The guarantee therefore belongs to each learner individually rather
than only to a coordinated collection of players.

In two-player zero-sum games, this individual guarantee also gives a
direct route to equilibrium. If both players have sublinear external
regret, the saddle-point gap of their average strategies is bounded by
the sum of their average regrets, and the averages approach the set of
Nash equilibria \citep{freund1999adaptive}. This connection underlies
a long line of learning-based methods for solving games, from fictitious
play \citep{robinson1951iterative} to counterfactual regret minimization
and its resulting breakthroughs in large imperfect-information games
\citep{zinkevich2007regret,bowling2015heads,brown2018superhuman,
moravcik2017deepstack,brown2019superhuman}.

The situation is different in general-sum and multiplayer games.
Computing a Nash equilibrium is PPAD-complete already in two-player
general-sum games
\citep{chen2006settling,daskalakis2009complexity}, and multi-agent
learning faces additional obstacles
\citep{hart2003uncoupled,milionis2023impossibility}. More directly for our purposes, small
external regret does not generally imply that the players' average
strategies form an approximate Nash equilibrium. The natural
equilibrium consequence of external regret is instead a
\emph{coarse correlated equilibrium}.

A coarse correlated equilibrium is a distribution over joint play in
which no player can gain by committing in advance to a fixed action
\citep{aumann1974subjectivity,moulin1978strategically,
blum2007external}. This matches external regret exactly. A correlated
equilibrium imposes a stronger condition. After observing its
recommended action, a player should still not benefit from replacing
that recommendation according to a fixed rule. Thus, CCE corresponds to
fixed deviations chosen before the recommendation is known, whereas CE
allows deviations that depend on the recommendation. Stronger notions
such as internal or swap regret lead to correlated-equilibrium
guarantees \citep{hart2000simple,blum2007external}.

This connection is important because it turns an individual learning guarantee into a statement about the joint behavior of all players. Although players randomize independently within each
round, averaging the product distributions generated over time need
not produce another product distribution. This average can therefore
satisfy a CCE guarantee even if the current strategy profile itself
does not converge.

The regret rate translates directly into the rate of convergence to
CCE. With at most $d$ actions and full-information observations, Hedge
guarantees $\mathcal{O}(\sqrt{T\log d})$ regret against an arbitrary, potentially
adversarial utility sequence \citep{freund1997decision,cesa2006prediction}. Averaging
$T$ rounds therefore gives CCE error $\mathcal{O}(\sqrt{\log d/T})$, so reaching
error $\eps$ requires $\mathcal{O}(\log d/\eps^2)$ rounds. If cumulative regret
can instead be bounded uniformly in $T$, the CCE error decreases as
$\mathcal{O}(1/T)$, and an $\eps$-CCE can be reached in $\mathcal{O}(1/\eps)$ rounds up to
the dependence on the game size. Achieving this faster rate under
\emph{self-play} has been a longstanding goal in the study of no-regret
learning in games.

Why might such a faster rate be possible in self-play? Against an
arbitrary sequence of utility vectors, a learner cannot expect one
round to be informative about the next. In a fixed game, however, the
utility sequence is generated by the players' own strategy changes.
An action's expected utility changes only when the opponents change
their mixed strategies. If those strategies move gradually, the utility
sequence becomes predictable from recent observations.

Gradual variation and \emph{optimistic} online learning formalize this idea by
making regret depend on how accurately the next utility vector can be
predicted from the past
\citep{chiang2012online,rakhlin2013online}. The simplest predictor is
the preceding utility vector
\citep{rakhlin2013optimization}. In self-play, however, predictability
is endogenous. A player's prediction is accurate when the other players
move slowly, while their movement is itself determined by their learning
rules. A fast self-play guarantee must therefore make these two effects
reinforce each other. Small strategy movement keeps the utility vectors
predictable, and the optimistic regret bound must in turn provide enough
control on movement to keep the prediction errors small.

\citet{syrgkanis2015fast} made this feedback between predictability and
movement precise through \emph{regret-bounded-by-variation-in-utilities} (RVU)
inequalities. For each player, the regret bound contains a positive term
that depends on changes in the observed utility vectors and a negative
term that depends on changes in the player's own strategy. In a fixed
game, these two quantities are linked because changes in one player's
utilities are caused by movement of the other players. After the
playerwise inequalities are combined, the negative movement terms can
therefore offset part of the positive variation terms. This yields
individual regret of order $T^{1/4}$ in general games and establishes
the basic mechanism behind fast learning in self-play. This was a major
improvement over treating every player's utility sequence as fully
adversarial.

Subsequent work pushed this frontier much further. 
\citet{daskalakis2021near} showed that the same one-step Optimistic
Hedge algorithm achieves
$\mathcal{O}(n\log d\log^4 T)$ individual regret in multiplayer
general-sum games. The algorithm itself remains simple; the main
advance is in the analysis, which shows that self-play generates
considerable higher-order regularity in the utility and strategy
sequences. By controlling higher-order finite differences of these
sequences, they reduce the polynomial dependence on $T$ to a
polylogarithmic one.

A different route was developed by \citet{farina2022near}. Their
Log-Regularized Lifted Optimistic FTRL (LRL-OFTRL) operates in a \emph{lifted
space} in which the relevant regret quantity is nonnegative. This is
useful because external regret itself can be negative, so strong bounds
on the sum of the players' regrets need not translate into comparable
bounds for every player. The lifting avoids this cancellation, while
logarithmic regularization provides the multiplicative stability needed
to control the learning trajectory. The resulting algorithm achieves
$\mathcal{O}(nd\log T)$ individual regret and extends beyond finite
normal-form games to general convex games.

A later line of work addressed negative regret more directly.
\citet{soleymani2025faster} introduced \emph{Cautious Optimism}, which equips
Optimistic Multiplicative Weights Update (OMWU) with an adaptive, non-monotone
learning rate. When a player's regret becomes too negative, the learning
rate is reduced, this \emph{pacing} mechanism limits how much negative regret can offset positive
regret elsewhere in the analysis. This yields
$\mathcal{O}(n\log^2 d\log T)$ individual regret, improving the
dependence on $d$ of LRL-OFTRL and the dependence on $T$ of Optimistic
Hedge. \citet{soleymani2025cautious} extended this idea to a general framework
for optimistic FTRL, allowing different regularizers across players
while retaining an $\mathcal{O}(\sqrt T)$ adversarial regret guarantee.
Together, these works show that fast individual regret requires not
only predictable utilities, but also a mechanism that prevents large
negative regret from undermining playerwise guarantees.

Two recent works, developed independently and concurrently with ours, removed the remaining dependence on the
horizon through \emph{high-order optimism}
\citep{liu2026constant,abbadi2026constant}. Despite being developed
independently, the two algorithms have a closely related structure.
Both use optimistic FTRL on a lifted simplex, where an additional mass
variable turns external regret into a nonnegative lifted regret. Their
entropic regularizers also make this mass affect the sensitivity of the
played strategy, effectively pacing the response even though the
nominal learning rate is fixed. The main difference lies in how high-order optimism is implemented.
ECHO-OFTRL uses a cascade of exponential moving averages to produce a
geometrically stabilized $n$th-order prediction error, while HOOD uses
a discounted $(n+1)$st-order recurrence. These constructions yield
horizon-independent individual regret of
$\mathcal{O}(n^{21}\log^4 d)$ and
$\mathcal{O}(n^3\log^2 d)$, respectively. The results establish that
constant individual regret is possible in general finite games, but
they rely on high-order predictors whose order grows with the number
of players together with a complicated lifted response that adaptively moderates
the sensitivity of the strategy. This leaves open whether the same
horizon independence can be obtained from a substantially simpler
form of optimism.

The two horizon-independent results change the nature of the question.
They show that constant individual regret is possible in general
finite-game self-play, but they obtain it through high-order optimism,
lifted dynamics, and fairly large polynomial dependence on the number
of players. This suggests looking more closely at what is actually
needed for fast no-regret learning.

A particularly appealing target would retain the simplicity of the
earlier optimistic algorithms. The prediction should use only the
previous utility vector, rather than an order that grows with the
number of players, and the update itself should remain easy to
implement. At the same time, removing the dependence on $T$ is only
one part of the rate. The dependence on the size of the game also
matters. Logarithmic dependence on the number of actions is natural
from classical multiplicative-weights guarantees, while the
$\sqrt n$ dependence already appearing in the RVU analysis of
\citet{syrgkanis2015fast} suggests that a linear dependence on the
number of players should not be necessary.

A separate issue is robustness to adversarial utility sequences. Ideally, protection
against arbitrary utility sequences should not require replacing the
self-play algorithm by a different procedure. The adaptive construction
of \citet{daskalakis2021near}, which modifies only the learning rate,
suggests that such protection can be added with a minimal change to the
underlying dynamics. More generally, one would like the resulting
guarantee to have explicit, moderate constants and to come from a proof
whose main mechanism is easy to identify.

These considerations lead to the following question.

\begin{tcolorbox}[
    colback=myMaroon!5,
    colframe=myMaroon,
    boxrule=1.4pt,
    arc=2mm,
    left=-0.5pt,
    right=-0.5pt,
    top=3pt,
    bottom=3pt,
    width=\linewidth,
    halign=center
]
\emph{Can a simple \textbf{one-step optimistic} learner achieve \textbf{horizon-independent} individual regret, \textbf{logarithmic dependence} on the
number of actions \textbf{$d$}, and \textbf{sublinear dependence} on the number of players \textbf{$n$},
while retaining adversarial protection through only a \textbf{learning-rate
safeguard}?}
\end{tcolorbox}

\begin{table}[!t]
\centering
\small
\renewcommand{\arraystretch}{1.10}
\setlength{\tabcolsep}{3.5pt}

\begin{tabularx}{\textwidth}{
    @{}
    >{\raggedright\arraybackslash}p{0.27\textwidth}
    >{\raggedright\arraybackslash}p{0.18\textwidth}
    >{\centering\arraybackslash}p{0.16\textwidth}
    >{\raggedright\arraybackslash}p{0.15\textwidth}
    >{\centering\arraybackslash}X
    @{}
}
\toprule
\textbf{Method}
&
\textbf{Regret in Games}
&
\textbf{Iteration Cost}
&
\textbf{Adversarial Regret}
&
\textbf{Optimism Order}
\\
\midrule

Hedge\\[-2pt]
{\footnotesize\citep{cesa2006prediction}}
&
$\mathcal{O}(\sqrt{T\log d})$
&
$\mathcal{O}(d)$
&
$\mathcal{O}(\sqrt{T\log d})$
&
$0$
\\
\midrule

Optimistic FTRL / OMD\\[-2pt]
{\footnotesize\citep{syrgkanis2015fast}}
&
$\mathcal{O}(\sqrt n\log d\,T^{1/4})$
&
Reg. dep.
&
$\mathcal{O}(\sqrt{T\log d})$
&
$1$
\\
\midrule

Optimistic Hedge\\[-2pt]
{\footnotesize\citep{chen2020hedging}}
&
$\mathcal{O}(n\log^{5/6}d\,T^{1/6})^{\dagger}$
&
$\mathcal{O}(d)$
&
$\mathcal{O}(\sqrt{T\log d})$
&
$1$
\\
\midrule

Optimistic Hedge\\[-2pt]
{\footnotesize\citep{daskalakis2021near}}
&
$\mathcal{O}(n\log d\log^4 T)$
&
$\mathcal{O}(d)$
&
$\mathcal{O}(\sqrt{T\log d})$
&
$1$
\\
\midrule

Clairvoyant MWU\\[-2pt]
{\footnotesize\citep{piliouras2022beyond}}
&
$\mathcal{O}(n\log d)^{\ddagger}$
&
$\mathcal{O}(d)$
&
No guarantee
&
$-$
\\
\midrule

LRL-OFTRL\\[-2pt]
{\footnotesize\citep{farina2022near}}
&
$\mathcal{O}(nd\log T)$
&
$\mathcal{O}(d\log\log T)$
&
$\mathcal{O}(\sqrt{T\log d})$
&
$1$
\\
\midrule

Cautious Optimism\\[-2pt]
{\footnotesize\citep{soleymani2025faster,soleymani2025cautious}}
&
$\mathcal{O}(n\log^2 d\log T)$
&
$\mathcal{O}(d\log\log T)$
&
$\mathcal{O}(\sqrt{T\log d})$
&
$1$
\\
\midrule

ECHO-OFTRL\\[-2pt]
{\footnotesize\citep{liu2026constant}}
&
$\mathcal{O}(n^{21}\log^4 d)$
&
Unknown
&
No guarantee
&
$n$
\\
\midrule

HOOD\\[-2pt]
{\footnotesize\citep{abbadi2026constant}}
&
$\mathcal{O}(n^3\log^2 d)$
&
Unknown
&
$\mathcal{O}(\sqrt{T\log d})$
&
$n+1$
\\
\specialrule{0.9pt}{1pt}{1pt}

\textbf{MORM}\\[-2pt]
{\footnotesize\textbf{[This work]}}
&
$\boldsymbol{\mathcal{O}(\sqrt n\log d)}$
&
$\boldsymbol{\mathcal{O}(d)}$
&
$\boldsymbol{\mathcal{O}(\sqrt{T\log d})}^{\S}$
&
$\boldsymbol{1}$
\\
\bottomrule
\end{tabularx}

\caption{Comparison of no-regret learning algorithms in finite
general-sum games. Here $n$ is the number of players, $T$ the number
of rounds, and $d$ the maximum number of actions. Universal constants
are suppressed. Computational cost is the per-player cost of one
learning update after the utility vector is available and does not
include the cost of computing that vector. Optimism order $0$ denotes
a non-optimistic method, while order $1$ uses one-step prediction.
$\dagger$ applies only to two-player games.
$\ddagger$ Clairvoyant MWU gives the stated bound only on a selected
subsequence of iterates.
$\S$ The adversarial guarantee for Multiplicatively Optimistic Regret
Matching (\algname{}) uses the learning-rate safeguard in
\Cref{app:safeguard}, which leaves the self-play run unchanged.}
\label{tab:regret-comparison}
\end{table}

We answer this question affirmatively. We introduce
\emph{Multiplicatively Optimistic Regret Matching} (\algname{}), a
deterministic uncoupled rule satisfying
\[
    \Reg_i^{(T)}
      \leq
      96\sqrt n\,(2+\log d)
    \qquad
    \text{for every player $i$ and every $T\geq1$}.
\]
The guarantee holds in any fixed $n$-player finite game with utilities
in $[0,1]$ and at most $d\geq2$ actions per player, when players choose
mixed strategies simultaneously and observe their own exact
expected-utility vectors after play. The rule uses the fixed learning
rate $\eta=1/(32\sqrt n)$ and predicts the current centered utility
using only its value from the preceding round. It requires neither
knowledge of the horizon nor high-order prediction. To our knowledge,
this is the first uncoupled algorithm in this setting with
$\mathcal{O}(\sqrt n\log d)$ individual external regret.

For adversarial protection, we use a safeguard that adapts only the
learning rate to the observed dynamics. The response itself, the
cumulative state, and the update rule remain unchanged. Against
arbitrary, possibly adaptive, full-information utility sequences, the
safeguarded rule satisfies
\[
    \Reg_i^{(T)}
      \leq
      96\sqrt n\,(2+\log d)
      +
      21\sqrt{T(2+\log d)}.
\]
Under self-play, the safeguard never changes the learning rate, so the
trajectory is exactly the same as for the fixed-rate algorithm.
\Cref{thm:self-play,thm:adversarial}, together with
\Cref{prop:safeguard-selfplay}, give the precise statements.

\Cref{tab:regret-comparison} summarizes the progression of finite-game
regret bounds. Compared with the
$\mathcal{O}(n\log^2 d\log T)$ guarantee of
\citet{soleymani2025faster}, our bound removes the dependence on the
horizon, improves the action dependence to $\log d$, and reduces the
player dependence from $n$ to $\sqrt n$. Compared with the recent
horizon-independent results
\citep{liu2026constant,abbadi2026constant}, it achieves substantially
smaller dependence on both $n$ and $d$ while using only one-step
optimism.

The regret bound immediately gives an
$\mathcal{O}(\sqrt n\log d/T)$ CCE guarantee for the average
distribution of play. Hence,
$\mathcal{O}(\sqrt n\log d/\eps)$ rounds suffice to reach CCE error
at most $\eps$.

Our starting point is regret matching. Each action is tracked through
its cumulative advantage over the mixed strategy actually played, and
the response is formed from these cumulative advantages
\citep{hart2001general,cesa2003potential}. A potential on this vector is
chosen so that controlling the potential also controls the largest
cumulative advantage, and hence the player's external regret. This potential-based viewpoint gives us room to redesign both the
response and the analysis. Classical regret matching uses the centering
identity to cancel the first-order change of the potential. We keep this
basic structure, but modify the response so that recent information
enters the cancellation in a more useful way. The potential is then
constructed to support the additional stability and curvature
properties needed by that modified update.

Optimism has previously been incorporated into regret matching through
an additive prediction. Predictive regret matching forms its response
from cumulative regret plus a prediction of the next regret vector
before the positive-part operation and normalization
\citep{farina2021faster}. Our first change is to place the prediction
somewhere else. In \algname{}, the cumulative state remains untouched,
and the preceding centered utility instead multiplies the positive
weight assigned to each action. We refer to this as
\emph{multiplicative optimism}. The distinction is small at the level
of the update, but it changes the potential calculation in an important
way. The optimistic correction produces a negative weighted term whose
weights are exactly the derivatives of the potential. Those same
weights can then be used to pay for both the curvature of the potential
and the movement of the strategy.

This viewpoint also changes how the dependence on the number of players
enters the proof. In the standard RVU arguments, including the
$\ell_1$ path-length analysis used in Cautious Optimism, the change in a
player's utility is controlled by the sum of the opponents' strategy
movements. Squaring that sum introduces one factor of $n$, and summing
the playerwise inequalities introduces another. The resulting closure
requires a learning rate of order $1/n$, which is reflected in the
linear player dependence of the final regret bound
\citep{farina2022near,soleymani2025faster,
soleymani2025cautious}. This dependence might therefore appear to be an
unavoidable price of multiplayer interaction.

The second idea is to measure strategy movement in square-root
coordinates (Hellinger distance) instead. Squared Hellinger distance behaves particularly
well for product distributions. The distance between two product
distributions is controlled by the sum of the squared distances between
their factors. Applying this comparison before separating the opponents
removes the first player factor in the usual $\ell_1$ argument. After
summing over players, only one factor of $n$ remains. This is what makes
a learning rate of order $1/\sqrt n$ possible and ultimately gives the
$\sqrt n$ dependence in the regret bound. The improvement comes from matching the geometry used to
measure strategy movement with the product structure of the game.

These two ideas impose concrete requirements on the potential. Its
partial derivatives must stay positive and change in a controlled
multiplicative way, because the response is obtained by normalizing
those derivatives after the optimistic correction. Its curvature must
be controlled by the same derivative weights that appear in the
regret-matching cancellation. Finally, those weights must control
square-root movement even when their total mass becomes very small.
These requirements are stronger than ordinary smoothness, and they are
the reason we do not start from a standard potential and then try to
adapt the analysis around it.

Instead, we construct the potential step by step from the properties
needed by the proof. The scalar building block is linear on positive
inputs, which preserves the regret certificate, and reciprocal on
negative inputs, which keeps small positive weights multiplicatively
stable. A power-norm aggregation then combines the coordinates while
keeping the initial potential logarithmic in the number of actions.
This is where the $\log d$ dependence enters. The same construction
also supplies the weighted curvature and normalization properties
needed for the Hellinger movement bound. The proof overview in
\Cref{sec:overview} develops this construction in the same order in
which the requirements arise.

The resulting analysis remains potential based throughout. The
one-round inequality has the familiar variation-versus-movement shape
of an RVU bound, but the nonnegative quantity carried across
rounds is the potential itself. We therefore do not need to make the
players' ordinary regrets nonnegative or lift the decision space for
that purpose. Under self-play, Hellinger movement controls prediction
error, while the potential controls that same movement. Combining the
two closes the argument and bounds each player's terminal potential,
which then converts directly into regret.

The potential also makes the adversarial safeguard simple. We monitor
the same potential bound that is guaranteed under self-play and reduce
only the learning rate when that bound is violated, following the
monitoring principle of \citet[Appendix~D]{daskalakis2021near}. The
cumulative state is never reset and the algorithm keeps the same form.
Under self-play the threshold is never crossed, so the safeguard is
inactive. Against arbitrary utility sequences, the rate adjustment
keeps the potential under control and recovers the usual
$\sqrt{T}$-type adversarial dependence.

Although the potential is designed around the proof, the resulting
algorithm is simple to implement. In each iteration, the learner
computes one scalar weight for each action from its cumulative centered
utility, multiplies it by the one-step optimistic correction, and
normalizes the resulting vector to obtain the next mixed strategy. Thus,
for a player with at most $d$ actions, each iteration requires
$\mathcal{O}(d)$ arithmetic operations and $\mathcal{O}(d)$ memory. The
stored state consists of two real numbers per action, namely the
cumulative centered utility and the centered utility from the previous
round, i.e., the optimistic prediction. The adversarial safeguard, if used, adds only one scalar learning
rate.

These ingredients play separate roles in the analysis. The
multiplicative correction uses only the previous observation. The
potential gives the stability needed to control Hellinger movement and
obtain the $\sqrt n$ dependence, while its initial value yields the
logarithmic dependence on $d$. The safeguard changes only the learning
rate and leaves the rest of the update unchanged.

The organization of the paper is as follows.
\Cref{sec:related} places our result within the literature on
regret matching, optimistic learning, fast self-play, and stronger
notions of regret. \Cref{sec:algorithm} introduces the learning model, \algname{}, and its
formal guarantees. \Cref{sec:overview} explains how multiplicative
optimism and the potential are constructed from the curvature,
stability, and movement properties needed in the proof.
\Cref{sec:analysis} proves the self-play regret and CCE guarantees, and
\Cref{app:safeguard} establishes the adversarial extension. Throughout
the self-play analysis, players receive exact utility vectors
in a fixed finite game.

\section{Multiplicatively Optimistic Regret Matching and Its Guarantees}\label{sec:algorithm}

We first describe the uncoupled learning setting of learning dynamics and the information available to each player, then introduce the Multiplicatively Optimistic Regret Matching (\algname{}) update and its explicit implementation. We conclude the section by stating its individual-regret and equilibrium guarantees.

\subsection{Finite Games and Uncoupled Learning}\label{sec:game}

Consider a finite $n$-player game. Each player $i\in[n]$ has a nonempty
finite action set $\mathcal A_i$ and mixed-strategy space
$\mathcal X_i=\Delta(\mathcal A_i)$. The utility function of player $i$ is $\mathcal U_i:\prod_{j=1}^n\mathcal A_j\to[0,1]$.
We assume that there is a public bound $d\geq2$ such that
$|\mathcal A_i|\leq d$ for every player $i$. We write
$\vx=(\vx_1,\ldots,\vx_n)$ for a mixed-strategy profile and $\vx_i[a]$
for the probability that player $i$ assigns to action
$a\in\mathcal A_i$. All logarithms are natural.

We study repeated self-play in this fixed game. At each round $t$, all
players simultaneously choose their mixed strategies $\xt_i$. After the
strategies are chosen, player $i$ observes its expected-utility vector
$\nut_i$. For every action $a\in\mathcal A_i$,
\[
    \nut_i[a]
      =\E_{\vec s_{-i}\sim\bigotimes_{j\ne i}\xt_j}
          [\mathcal U_i(a,\vec s_{-i})],
    \qquad a\in\mathcal A_i.
    \numberthis{eq:feedback}
\]
Thus, $\nut_i[a]$ is the expected utility that player $i$ would obtain
by playing action $a$ against the opponents' current mixed strategies.
We assume exact full-information observations, so player $i$ observes
the entire vector $\nut_i$ rather than a single sampled payoff.

The learning dynamics are uncoupled. When choosing $\xt_i$, player $i$
may use its own past observations, its action set, and the public
parameters. It does not need to know the other players' utility
functions, current mixed strategies, or internal learning states.

We define the
centered utility vector
\[
    \ut_i
      =\nut_i-\ip{\xt_i}{\nut_i}\one,
    \qquad
    \vec u_i^{(0)}=\zero.
    \numberthis{eq:centered}
\]
The coordinate $\ut_i[a]\in[-1,1]$ measures the \emph{instantaneous advantage} of action $a$
over the mixed strategy $\xt_i$ played on round $t$. In particular,
$\ut_i[a]>0$ means that action $a$ would have yielded a higher expected
utility than $\xt_i$ against the opponents' round-$t$ mixed strategies.

We accumulate these advantages over time. The cumulative centered
utility available before round $t$ is
\[
    \vec U_i^{(t)}
      =\sum_{s=1}^{t-1}\vec u_i^{(s)},
    \qquad
    \vec U_i^{(1)}=\zero,
    \qquad
    \vec U_i^{(t+1)}=\vec U_i^{(t)}+\ut_i.
    \numberthis{eq:cumulative}
\]
Thus, $\vec U_i^{(t)}$ depends only on observations from rounds
$1,\ldots,t-1$ and is available when player $i$ chooses $\xt_i$.
After round $T$, the vector $\vec U_i^{(T+1)}$ contains exactly the
centered utilities accumulated through round $T$.

By construction, the centered utility has zero expectation under the
played strategy. Moreover, player $i$'s external regret after $T$ rounds
is the largest cumulative centered utility,
\[
    \ip{\xt_i}{\ut_i}=0,
    \qquad
    \Reg_i^{(T)}
      =\max_{a\in\mathcal A_i}\vec U_i^{(T+1)}[a].
    \numberthis{eq:two-identities}
\]
Allowing the comparator to be any fixed mixed strategy gives the same
regret, since its cumulative utility is a convex combination of the
cumulative utilities of the pure actions. Notice that
$\Reg_i^{(T)}$ need not be nonnegative.

\subsection{The Uncoupled Learning Update of \algname{}}

We now define the learning rule of \algname{}. Set
\[
    c \coloneqq 2+\log d,
    \qquad
    \eta \coloneqq \frac{1}{32\sqrt n}.
    \numberthis{eq:parameters}
\]
For each player $i$, define the potential
\[
    \Psi_i(\vec U)
       =c\left(\sum_{a\in\mathcal A_i}
          f\!\left(\frac{\vec U[a]}c\right)^{c-1}
          \right)^{1/(c-1)},
    \qquad
    f(z)=
    \begin{cases}
       (1-z)^{-1}, & z\leq0,\\
       1+z, & z\geq0.
    \end{cases}
    \numberthis{eq:potential}
\]
The potential is defined on vectors of cumulative centered utilities. We write
$\partial_a\Psi_i$ for the derivative of $\Psi_i$ with respect to
coordinate $a$. Every coordinate of $\nabla\Psi_i(\vec U)$ is strictly
positive.

At the beginning of round $t$, player $i$ evaluates the gradient at the
scaled cumulative centered utility $\eta\vec U_i^{(t)}$. The gradient
provides a positive weight for each action. \algname{} then adjusts these
weights using the preceding centered utility $\vec u_i^{(t-1)}$ and
normalizes them to obtain the next mixed strategy,
\[
    \xt_i
    \ \propto\
    \nabla\Psi_i(\eta\vec U_i^{(t)})
    \odot
    \bigl(\one+4\eta\vec u_i^{(t-1)}\bigr).
\]
Here, $\odot$ denotes coordinatewise multiplication, while $\propto$
means that the resulting positive vector is normalized to sum to one.

\begin{algorithm}[H]
\caption{\algname{} for player $i$}
\label{alg:learning}
\begin{algorithmic}[1]
\Require learning rate $\eta$ from \eqref{eq:parameters} and potential $\Psi_i$ from \eqref{eq:potential}.
\State $\vec U_i^{(1)}\gets\zero$, $\vec u_i^{(0)}\gets\zero$.
\For{$t=1,2,\ldots$}
\State $\displaystyle
   \xt_i\ \propto\
   \nabla\Psi_i(\eta\vec U_i^{(t)})
   \odot\bigl(\one+4\eta\vec u_i^{(t-1)}\bigr)$.
\State Play $\xt_i$ and observe $\nut_i$.
\State $\ut_i\gets\nut_i-\ip{\xt_i}{\nut_i}\one$.
\State $\vec U_i^{(t+1)}\gets\vec U_i^{(t)}+\ut_i$.
\EndFor
\end{algorithmic}
\end{algorithm}

\begin{remark}[Connection to Optimistic MWU] \label{remark:OMWU}
Multiplicative optimism is not specific to the potential used by
\algname{}. With the log-sum-exp potential associated with negative
entropy, the same construction gives
\[
    x_i^{(t)}[a]
      \propto
      \exp\!\left(\eta\vec U_i^{(t)}[a]\right)
      \bigl(1+4\eta u_i^{(t-1)}[a]\bigr).
\]
Thus, the underlying response is MWU, with the previous utility entering
as a multiplicative optimistic correction. Standard Optimistic MWU, or Optimistic Hedge, instead uses an
exponential correction
\citep{rakhlin2013optimization,daskalakis2021near}. In our notation,
its update can be written as
\[
    x_i^{(t)}[a]
      \propto
      \exp\!\left(\eta\vec U_i^{(t)}[a]\right)
      \exp\!\left(\eta u_i^{(t-1)}[a]\right).
\]
Since $\exp(z)=1+z+\mathcal{O}(z^2)$, the correction used by
\algname{} is a first-order analogue of the standard optimistic
correction. The potential in this paper is chosen differently in order
to obtain the curvature and stability properties needed for our
analysis.
\end{remark}

\paragraph{Explicit implementation.}
The gradient of $\Psi_i$ has a closed form, so the update does not
require differentiation at runtime. Define the scalar weight function
\[
    a_c(z)=
    \begin{cases}
       (1-z/c)^{-c}, & z\leq0,\\
       (1+z/c)^{c-2}, & z\geq0.
    \end{cases}
    \numberthis{eq:weight}
\]
Differentiating \eqref{eq:potential} gives
\[
    \partial_a\Psi_i(\vec U)
       =\left(\frac{\Psi_i(\vec U)}c\right)^{2-c}
          a_c(\vec U[a]).
    \numberthis{eq:gradient}
\]
The factor $(\Psi_i(\vec U)/c)^{2-c}$ is the same for every action and
therefore disappears when the weights are normalized. Consequently, the
update can be implemented directly as
\[
    \xt_i[a]\ \propto\
       a_c\bigl(\eta\vec U_i^{(t)}[a]\bigr)
       \bigl(1+4\eta\vec u_i^{(t-1)}[a]\bigr),
    \qquad
    a\in\mathcal A_i.
    \numberthis{eq:explicit-rule}
\]

Thus, each action $\xt_i[a]$ receives a weight determined by two quantities. The
first depends on its cumulative centered utility $\vec U_i^{(t)}[a]$, while the second
depends on its centered utility in the preceding round $\vec u_i^{(t-1)}[a]$. Computing the
strategy requires only coordinatewise powers, multiplication, and a
single normalization. So, a round costs $\mathcal O(|\mathcal A_i|)$ time and $\mathcal O(|\mathcal A_i|)$ memory since the learner stores $\vec U^{(t)}_i$, $u^{(t-1)}_i$, and one scalar learning rate.
In particular, a player implementing the rule of \algname{} for self-play never has to evaluate $\Psi_i$ itself.

\begin{remark}
Since $u_i^{(t-1)}[a]\in[-1,1]$  and
$4\eta=1/(8\sqrt n)\leq1/8$, every optimistic correction satisfies
$1+4\eta u_i^{(t-1)}[a]\in[7/8,9/8]$.
Together with the strict positivity of every coordinate of
$\nabla\Psi_i$, this ensures that the normalization in
\eqref{eq:explicit-rule} is well defined on every round and that
$x_i^{(t)}$ has full support. On the first round,
$\vec U_i^{(1)}=\zero$ and $u_i^{(0)}=\zero$. Since $a_c(0)=1$, all
actions receive the same weight, so $x_i^{(1)}$ is uniform. The same
conclusions hold under the learning-rate safeguard, since the safeguard
can only decrease the learning rate.
\end{remark}

\begin{remark}
The factor
$a_c(\eta\vec U_i^{(t)}[a])$ captures the cumulative performance of
action $a$. It is increasing in $\vec U_i^{(t)}[a]$, so actions with
larger cumulative advantage receive larger weight. For actions with
large negative cumulative advantage, this weight approaches zero, but
it remains strictly positive at every finite input. The second factor,
$1+4\eta\vec u_i^{(t-1)}[a]$, provides the optimistic correction. It
increases the weight of an action that performed better than the
player's mixed strategy in the preceding round and decreases the weight
of an action that performed worse. Thus, the first factor summarizes
past performance over all previous rounds, while the second reacts to
the most recent observation. 
\end{remark}

\subsection{Regret and Equilibrium Guarantees}

We now state the main guarantees of \algname{}. Under self-play, every
player has external regret bounded independently of the time horizon. The
bound grows sublinearly with the number of players, as $\sqrt n$, and only
logarithmically with the maximum number of actions, as $\log d$.
A learning-rate safeguard, which only decreases the learning rate
adaptively (and keeps the update rule the same), also gives the optimal adversarial guarantee in the face of adversarial utilities. We first summarize the two results.

\begin{theorem}[Regret bounds for \algname{}]\label{thm:main}
Consider a fixed $n$-player finite game with at most $d\geq2$ actions
per player and utilities in $[0,1]$. If all players $i\in[n]$ follow
\algname{} (\Cref{alg:learning}) with $c=2+\log d$ and learning rate
$\eta=1/(32\sqrt n)$, observing their exact expected-utility vectors
after each round, then every player $i\in[n]$ has external regret
\[
    \Reg_i^{(T)}\leq \mathcal{O}(\sqrt n\log d).
\]
Moreover, \algname{} is adaptive to adversarial utilities through the
learning-rate safeguard in \Cref{app:safeguard}, which only decreases
the learning rate adaptively and leaves the rest of the update unchanged. The safeguarded rule for any individual
player $i$ guarantees
\[
    \Reg_i^{(T)}\leq \mathcal{O}(\sqrt n\log d+\sqrt{T\log d})
\]
against arbitrary, possibly adaptive, utility vectors
$\nut_i\in[0,1]^{|\mathcal A_i|}$. Both bounds hold uniformly over all $T\geq1$.
\end{theorem}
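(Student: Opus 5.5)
The plan is a potential-based argument carried out on $\Psi_i(\eta\vec U_i^{(t)})$. The first step converts potential into regret. Since $f(z)\geq 1+z$ for $z\ge0$ and the power-norm aggregate dominates its largest term, we have
\[
  \Psi_i(\vec U)\geq c\,f(\max_a\vec U[a]/c)\geq c+\max_a\vec U[a].
\]
Combined with \eqref{eq:two-identities}, this gives
\[
  \eta\Reg_i^{(T)}\leq \Psi_i(\eta\vec U_i^{(T+1)})-c .
\]
At the start, $\Psi_i(\zero)=c\,d^{1/(c-1)}\leq c\,e^{\log d/(1+\log d)}\leq ec$. Hence it suffices to show that the potential grows by at most $O(c)$ over the entire run, uniformly in $T$. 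Dividing by $\eta=1/(32\sqrt n)$ then gives $O(\sqrt n\log d)$.

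The second step is a one-round inequality for $\Phi_t:=\Psi_i(\eta\vec U_i^{(t)})$, with $\vec w_t:=\nabla\Psi_i(\eta\vec U_i^{(t)})$.
\begin{itemize}
\item \emph{First-order term.} By the update, $\vec w_t\odot(\one+4\eta\vec u^{(t-1)})\propto\xt_i$. Since $\ip{\xt_i}{\ut_i}=0$, the first-order term becomes
\[
  \eta\ip{\vec w_t}{\ut_i}=-4\eta^2\sum_a \vec w_t[a]\,\vec u^{(t-1)}[a]\,\ut[a].
\]
Writing $-ab=\tfrac12(a-b)^2-\tfrac12 a^2-\tfrac12 b^2$ turns this into a positive prediction-error term $2\eta^2\sum_a\vec w_t[a](\ut[a]-\vec u^{(t-1)}[a])^2$ plus negative terms $-2\eta^2\sum_a \vec w_t[a](\ut[a]^2+\vec u^{(t-1)}[a]^2)$.
\item \emph{Second-order term.} I would prove a weighted-curvature lemma: on steps of size $\le\eta$, the Hessian is bounded by a constant times $\mathrm{diag}(\vec w_t)$. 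This should follow from the reciprocal branch of $f$ on negatives, the linear branch on positives, and the $(c-1)$-power aggregation, which make each $\vec w[a]$ multiplicatively stable within factors like $(1\pm O(\eta))$. With a small enough constant in $\eta$, the quadratic term is then absorbed by half of the negative $\eta^2\sum_a\vec w_t[a]\ut[a]^2$ term.
\item \emph{Movement term.} The remaining negative terms must pay for Hellinger movement. Because $\xt_i\propto\vec w_t\odot(\text{factor in }[7/8,9/8])$ and $\vec w$ is multiplicatively stable, the ratio $x^{(t+1)}_i[a]/\xt_i[a]$ deviates from $1$ by $O(\eta(|\ut[a]|+|\vec u^{(t-1)}[a]|+|\ut[a]-\vec u^{(t-1)}[a]|))$. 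Hence $H^2(\vec x_i^{(t+1)},\xt_i)\lesssim\eta^2\sum_a \xt_i[a](\ldots)^2$. To compare $\xt_i$-weights with $\vec w_t$-weights I need $\|\vec w_t\|_1$ to be bounded below by a constant. I expect this to follow from the normalization property of the $(c-1)$-power aggregation: the derivative weights sum to at least about $1$, since $\Psi$ is $1$-homogeneous-like in its dominant coordinates.
\end{itemize}
This yields an RVU-shaped bound of the form
\[
  \Phi_{t+1}-\Phi_t\leq C\eta^2\|\vec w_t\|_\infty\,\|\ut_i-\vec u_i^{(t-1)}\|_\infty^2-\kappa\, H^2(\vec x_i^{(t+1)},\xt_i)-\kappa H^2(\xt_i,\vec x_i^{(t-1)}).
\]
In this bound the weights are kept against the potential value, so that $\Phi$ itself appears in the coefficients.

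The third step is the self-play closure. Because $\nut_i[a]$ is an expectation of a $[0,1]$ function under a product distribution, I bound $|\ut_i[a]-\vec u^{(t-1)}_i[a]|$ by total variation plus the change in $\ip{\xt_i}{\nut_i}$. Total variation is at most $\sqrt2$ times Hellinger, and Hellinger of products is subadditive, $H^2(\otimes P_j,\otimes Q_j)\le\sum_j H^2(P_j,Q_j)$. This gives
\[
  \|\ut_i-\vec u_i^{(t-1)}\|_\infty^2\lesssim\sum_{j}\bigl(H^2(\xt_j,\vec x_j^{(t-1)})+H^2(\vec x_j^{(t-1)},\vec x_j^{(t-2)})\bigr)
\]
with no factor $n$ inside; the own-strategy term comes from the centering. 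Summing the one-round inequalities over $i$ and $t$ produces positive terms $C\eta^2 n\sum_{t,j}H^2$ against negative terms $\kappa\sum_{t,j}H^2$, and these cancel when $\eta^2 n\le$ const, which is exactly $\eta=1/(32\sqrt n)$.

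The main obstacle is that the coefficient $\|\vec w_t\|_\infty$ and the movement coefficient $\kappa$ depend on the current potential level. The negative terms also belong to every player while each positive term refers to a single player, so the sum does not immediately bound an individual player. I would first run an induction (bootstrap) on the hypothesis $\Phi_t\le K c$ for all players. Under that hypothesis the weights are bounded above and below by constants, and the summed inequality gives $\sum_{t,j}H^2=O(nc)$, since the sum of potential increments is bounded by the initial potentials $n\cdot ec$. Plugging this back into player $i$'s own inequality, the positive part is at most $C\eta^2 n\cdot O(c)=O(c)$, which re-establishes $\Phi_t\le Kc$ for a suitable $K$ and closes the induction. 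The initial-step bound then yields $\Reg_i^{(T)}\le(Kc-c)/\eta=O(\sqrt n\log d)$.

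For the adversarial part, I would monitor the condition $\Phi_t\le Kc$ and halve $\eta$ whenever it is violated, keeping $\vec U$ unchanged. In self-play the bootstrap shows the condition is never violated, so the trajectory is unchanged. Against arbitrary utilities, the one-round inequality without the closure gives potential growth of at most $O(\eta^2)$ per round. A doubling-style accounting over the halvings then yields an additional $O(\sqrt{Tc})$ regret on top of the self-play bound.
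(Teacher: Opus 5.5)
\textbf{There is a genuine gap in your movement step.} To compare $\xt_i[a]$ with $\vec w_t[a]$ you need $\norm{\vec w_t}_1\gtrsim1$, and you expect this from near-homogeneity of the power aggregate. That intuition is correct for the gradient with respect to the inner variables $g_a=f(\vec U[a]/c)$. However, the chain rule multiplies each coordinate by $f'(\vec U[a]/c)=f(\vec U[a]/c)^2$ on the reciprocal branch. At $\vec U=cz\one$ with $z<0$ this gives
\[
\sum_a\partial_a\Psi_i(cz\one)=|\mathcal A_i|^{1/(c-1)}(1-z)^{-2}\longrightarrow0\qquad(z\to-\infty).
\]
So the total gradient mass vanishes whenever the player's regret is very negative, and nothing in self-play rules this out. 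Your bootstrap hypothesis $\Phi_t\le Kc$ is an upper bound, so it cannot supply the needed lower bound; here $\Psi_i$ itself tends to $0$. In this regime $\xt_i[a]\ge\frac79\vec w_t[a]/\norm{\vec w_t}_1$. Hence the term $\eta^2\sum_a\xt_i[a]\ut_i[a]^2$ in your movement bound exceeds the available negative term $\eta^2\sum_a\vec w_t[a]\ut_i[a]^2$ by the unbounded factor $1/\norm{\vec w_t}_1$. Your $\kappa$ therefore degenerates, and the summed inequality can no longer absorb $n\eta^2\sum_{t,j}H^2$.

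\textbf{The missing idea.} The loss comes from bounding the change of $\log a_c$ crudely by $\eta|\ut_i[a]|$; you need to keep the logarithmic sensitivity instead. Freeze the correction and move along $\theta\mapsto\eta\vec U_i^{(t)}+\theta\eta\ut_i$. The squared speed of $\sqrt{\vx(\theta)}$ is then $\frac{\eta^2}{4}\Var_{a\sim\vx(\theta)}\bigl[(a_c'/a_c)\,\ut_i[a]\bigr]$, with $a_c'/a_c$ evaluated along the path. On the negative branch $a_c'/a_c=f(\vec U[a]/c)\le\min\{\Psi_i/c,1\}$, while H\"older gives $\sum_b\partial_b\Psi_i\ge\frac13\min\{(\Psi_i/c)^2,1\}$. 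Hence $(a_c'/a_c)^2\le3\sum_b\partial_b\Psi_i$, which cancels the normalizing denominator exactly. The result is movement $\lesssim\eta^2\sum_a\partial_a\Psi_i(\eta\vec U_i^{(t)})\ut_i[a]^2$ with no lower bound on the mass required; this is precisely why the branch $f'=f^2$ was chosen. The change of the optimistic correction is handled on a separate path and costs only $O(\eta^2\norm{\ut_i-\vec u_i^{(t-1)}}_\infty^2)$.

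\textbf{The bootstrap is unnecessary.} Your concern about the positive coefficient is unfounded: $\sum_a\partial_a\Psi_i\le|\mathcal A_i|^{1/(c-1)}<3$ everywhere, by H\"older with $f'\le1$. Summing the per-player inequalities and using $\frac14-50n\eta^2>0$ therefore closes the argument directly.

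\textbf{Two smaller remarks.} First, absorbing the Taylor remainder is governed by the factor $4$ in the correction versus the curvature constant, since the remainder $\frac23\eta^2\sum_a\vec w_t[a]\ut_i[a]^2$ and the gain $2\eta^2\sum_a\vec w_t[a]\ut_i[a]^2$ are both $O(\eta^2)$. Smallness of $\eta$ enters only through multiplicative stability along the segment. Second, your halving safeguard with doubling accounting is a workable alternative to the paper's $\eta/(1+\delta/c)$ rule, once you have the uniform per-round increase $\Phi_{t+1}-\Phi_t\le24\eta^2$ (from the one-round bound and the mass bound above).
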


The precise self-play and adversarial bounds are proved in
\Cref{thm:self-play} and \Cref{thm:adversarial}, respectively. Together,
these results establish \Cref{thm:main}. The self-play bound also yields
convergence of the time-average of the played distributions to the set
of coarse correlated equilibria (CCE).

\begin{corollary}[CCE]\label{cor:cce}
Under the assumptions of \Cref{thm:self-play}, the distribution
\[
    \overline\sigma^{(T)}
      =\frac1T\sum_{t=1}^T\bigotimes_{j=1}^n\xt_j
    \numberthis{eq:average}
\]
is a $96\sqrt n\,(2+\log d)/T$-coarse correlated equilibrium. Hence, no
player can improve its expected utility by more than
$96\sqrt n\,(2+\log d)/T$ by committing instead to any fixed action.
\end{corollary}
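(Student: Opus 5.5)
The plan is to reduce the corollary directly to the self-play regret bound of \Cref{thm:self-play}, which gives $\Reg_i^{(T)}\le 96\sqrt n\,(2+\log d)$ for every player $i$ and every $T\ge1$. No new dynamical analysis should be needed; the whole argument is a linearity-of-expectation identity connecting the CCE deviation gain under $\overline\sigma^{(T)}$ to player $i$'s external regret.

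\emph{Step 1: write out the CCE condition.} Fix a player $i$ and a deviation $a\in\mathcal A_i$. The CCE condition requires
\[
\E_{\vec s\sim\overline\sigma^{(T)}}\bigl[\mathcal U_i(a,\vec s_{-i})\bigr]-\E_{\vec s\sim\overline\sigma^{(T)}}\bigl[\mathcal U_i(\vec s)\bigr]\le\eps .
\]

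\emph{Step 2: expand both expectations.} Since $\overline\sigma^{(T)}$ is the uniform mixture of the product distributions $\bigotimes_j\xt_j$, the first expectation equals $\frac1T\sum_t\E_{\vec s_{-i}\sim\bigotimes_{j\ne i}\xt_j}[\mathcal U_i(a,\vec s_{-i})]$. By \eqref{eq:feedback} this is $\frac1T\sum_t\nut_i[a]$. For the second expectation, use the product structure within each round: $\vec s_i\sim\xt_i$ is independent of $\vec s_{-i}$. Hence it equals $\frac1T\sum_t\ip{\xt_i}{\nut_i}$.

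\emph{Step 3: identify the regret.} Subtracting the two expressions gives
\[
\frac1T\sum_{t=1}^T\ut_i[a]=\frac1T\vec U_i^{(T+1)}[a].
\]
By \eqref{eq:two-identities}, this is at most $\Reg_i^{(T)}/T$.

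\emph{Step 4: apply the theorem.} Invoking \Cref{thm:self-play} bounds this by $96\sqrt n\,(2+\log d)/T$. Fixed mixed deviations are covered by convexity, as already noted after \eqref{eq:two-identities}.

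The only step needing care is Step 2. There one must keep the time-averaged, generally non-product distribution separate from the per-round product structure. The independence of $\vec s_i$ and $\vec s_{-i}$ must be applied within each round before averaging over rounds. Everything else is immediate from the earlier identities.
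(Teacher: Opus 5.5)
Your proposal is correct and follows the paper's argument. The paper states that the corollary follows immediately from \Cref{thm:self-play}, and the reduction behind it is exactly yours: expand the averaged product distribution round by round to get the deviation gain $\frac1T\vec U_i^{(T+1)}[a]\le\Reg_i^{(T)}/T$, then apply the regret bound.
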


\Cref{sec:overview} explains the main ideas behind the construction and
analysis of \algname{}. \Cref{sec:analysis} proves the self-play regret
bound and the equilibrium guarantee. \Cref{app:safeguard} defines the
learning-rate safeguard and proves the adversarial guarantee.

\section{Proof Overview}\label{sec:overview}

The construction of \algname{} starts from the standard regret-matching
argument \citep{hart2001general,cesa2003potential}. Regret matching
chooses the strategy so that the linear change of its potential cancels
exactly, but it still pays a positive quadratic remainder on every
round. Our first step is to modify the response so that this linear term
becomes negative when the current centered utility is predictable from
the previous round. Before choosing the potential, we compare the usual
$\ell_1$ movement argument with a squared Hellinger comparison of the
players' product distributions. The latter avoids an extra player
factor in the prediction-error bound. We then construct $\Psi_i$ so
that its negative weighted square controls this same squared Hellinger
movement. Combining the potential and prediction-error estimates gives
the self-play regret bound.

Fix a player $i$. Throughout the single-player calculations below, we
omit the player index from vectors and potentials, and all sums over
actions are over $\mathcal A_i$. A vector $\vec U$ without a round
superscript denotes an arbitrary input to the potential. Along the
learning trajectory, the input to $\Psi$ is
$\eta\vec U^{(t)}$.

\subsection{The Classical Regret-matching Analysis}\label{sec:rm}

The external-regret version of regret matching chooses
\[
    \xt[a]\ \propto\
       \left[\vec U^{(t)}[a]\right]_+,
    \qquad [z]_+=\max\{z,0\},
\]
using any distribution if all weights are zero. Consider its quadratic
potential
\[
    \Phi(\vec U)=\frac12\sum_a[\vec U[a]]_+^2.
\]
Since $\partial_a\Phi(\vec U)=[\vec U[a]]_+$, regret matching obtains
$\xt$ by normalizing $\nabla\Phi(\vec U^{(t)})$. The centering identity in
\eqref{eq:two-identities} gives
\[
    \left\langle
       \nabla\Phi(\vec U^{(t)}),\ut
    \right\rangle=0.
\]
This identity holds for every utility vector $\nut$ observed after play.
When $\nabla\Phi(\vec U^{(t)})=\zero$, it holds regardless of the chosen
strategy $\xt$.

The derivative of $[z]_+^2/2$ is $1$-Lipschitz, so
$[z+h]_+^2/2\leq[z]_+^2/2+[z]_+h+h^2/2$. Summing over actions gives
\[
    \Phi(\vec U^{(t+1)})-\Phi(\vec U^{(t)})
    \leq
       \left\langle\nabla\Phi(\vec U^{(t)}),\ut\right\rangle
       +\frac12\sum_a\ut[a]^2
       \leq\frac d2.
\]
Telescoping from $\vec U^{(1)}=\zero$ yields
$[\Reg^{(T)}]_+^2/2\leq\Phi(\vec U^{(T+1)})\leq dT/2$, and therefore
$\Reg^{(T)}\leq\sqrt{dT}$ \citep{hart2001general,cesa2003potential}. The linear term
cancels exactly, but the curvature term $\tfrac12\sum_a\ut[a]^2$
still costs up to $d/2$ on each round. We seek a response that makes the
linear term negative enough to offset this cost when the centered utility
$\ut$ is predictable.

\subsection{The Multiplicative Correction}\label{sec:optimism}

We now ask how to modify the regret-matching response so that the
first-order change of the potential can offset its second-order
remainder. For the moment, leave the potential $\Psi$ unspecified,
except that its partial derivatives are strictly positive.

Recall that $\vec U^{(t+1)}=\vec U^{(t)}+\ut$. Hence, over one round,
the input to the potential moves from $\eta\vec U^{(t)}$ to
$\eta\vec U^{(t)}+\eta\ut$. Taylor's theorem gives
\[
    \Psi(\eta\vec U^{(t+1)})-\Psi(\eta\vec U^{(t)})
    =
    \eta\left\langle
       \nabla\Psi(\eta\vec U^{(t)}),\ut
    \right\rangle
    +\text{second-order remainder}.
\]
The first term describes the linear change of the potential and, as in
regret matching, depends directly on the response chosen by the player.
Our goal is to choose the response so that this term becomes negative
enough to compensate for the second-order remainder.

To make this possible, we construct $\Psi$ so that its curvature is
controlled by its own gradient coordinates. This is related in spirit
to the intrinsic Lipschitz viewpoint of
\citet{soleymani2025cautious}, where local variation is controlled using
the underlying geometry. Here, the property we need is a direct
comparison between the Hessian and the gradient weights. In particular,
along the update segment we establish
\[
    {\ut}^\top\nabla^2\Psi(\vec V)\ut
    \leq
    \sum_a \partial_a\Psi(\vec V)\ut[a]^2.
\]
We also show that the gradient coordinates change only
multiplicatively as $\vec V$ moves from $\eta\vec U^{(t)}$ to
$\eta\vec U^{(t)}+\eta\ut$. Combining these two properties with the
integral form of the Taylor remainder gives
\[
    \Psi(\eta\vec U^{(t+1)})-\Psi(\eta\vec U^{(t)})
    \leq
    \eta\left\langle
       \nabla\Psi(\eta\vec U^{(t)}),\ut
    \right\rangle
    +\frac{2\eta^2}{3}\sum_a
       \partial_a\Psi(\eta\vec U^{(t)})\ut[a]^2.
    \numberthis{eq:taylor}
\]
We prove this estimate formally in \Cref{lem:taylor}. The key feature of
\eqref{eq:taylor} is that the second-order term is not bounded simply by
a constant multiple of $\norm{\ut}_2^2$. Instead, each $\ut[a]^2$ is
weighted by $\partial_a\Psi(\eta\vec U^{(t)})$. These are precisely the
gradient weights that we will use to construct the mixed strategy. We
can therefore try to make the first-order term
$\eta\langle\nabla\Psi(\eta\vec U^{(t)}),\ut\rangle$ produce a negative
square with the same weights.

The quadratic regret-matching potential $\Phi$ from \Cref{sec:rm} does
not have this gradient-weighted curvature property. At
$\vec U=\zero$, we have $\nabla\Phi(\zero)=\zero$, so a bound of the
desired form would force the second-order remainder to vanish.
Nevertheless, increasing any coordinate from zero in a positive
direction increases $\Phi$ quadratically. Thus, the curvature of $\Phi$
does not decrease together with its gradient weights. We therefore need
a potential whose curvature can be controlled by the same weights that
define the strategy as in \eqref{eq:taylor}.

\subsubsection{An Ideal Multiplicative Correction}

The form of \eqref{eq:taylor} suggests how the response should be
chosen. Suppose temporarily that the current centered utility $\ut$
were known before the player chooses $\xt$. Consider the response
\[
    \xt[a] \ \propto\
      \partial_a\Psi(\eta\vec U^{(t)})
      \bigl(1+4\eta\ut[a]\bigr).
\]
Thus, we start from the gradient weight
$\partial_a\Psi(\eta\vec U^{(t)})$ and multiply it by a correction that
depends on the current advantage $\ut[a]$.

Let
$
    Z = \sum_b \partial_b\Psi(\eta\vec U^{(t)})
      \bigl(1+4\eta\ut[b]\bigr)
$
be the normalizing constant. Then
\[
    \xt[a] = \frac{
        \partial_a\Psi(\eta\vec U^{(t)})
        \bigl(1+4\eta\ut[a]\bigr)}
      {Z}.
\]
Since the centered utility satisfies $\ip{\xt}{\ut}=0$, we have
\[
    0 = \sum_a\xt[a]\ut[a] =
      \frac1Z\sum_a
      \partial_a\Psi(\eta\vec U^{(t)})
      \bigl(1+4\eta\ut[a]\bigr)\ut[a].
\]
Multiplying by $Z$, expanding, and rearranging gives
\[
    \sum_a
       \partial_a\Psi(\eta\vec U^{(t)})\ut[a] =
    -4\eta\sum_a \partial_a\Psi(\eta\vec U^{(t)})\ut[a]^2.
\]
Therefore, the first-order term in \eqref{eq:taylor} becomes
\[
    \eta\left\langle
       \nabla\Psi(\eta\vec U^{(t)}),\ut
    \right\rangle = -4\eta^2\sum_a
       \partial_a\Psi(\eta\vec U^{(t)})\ut[a]^2.
\]

This is the key observation. The negative first-order term uses exactly
the same weights $\partial_a\Psi(\eta\vec U^{(t)})$ as the positive
second-order term in \eqref{eq:taylor}, but with a larger coefficient.
Thus, if the current centered utility were known in advance, the
first-order decrease would more than compensate for the Taylor
remainder.

Of course, this response cannot be implemented because $\ut$ is only
known after $\xt$ is chosen. Moreover, $\ut$ depends on $\xt$ through
the centering term $\ip{\xt}{\nut}\one$, while $\nut$ depends on the
opponents' current mixed strategies. We therefore use this calculation
only to identify the ideal multiplicative correction
$1+4\eta\ut[a]$.

The executable algorithm replaces the unavailable $\ut$ by the
preceding centered utility $\vec u^{(t-1)}$, which is known before round
$t$. The next step shows that this replacement preserves the useful
negative weighted square up to an error controlled by
$\ut-\vec u^{(t-1)}$.

\subsubsection{From the Ideal Correction to an Optimistic Update}

The ideal correction derived above depends on the current centered
utility $\ut$, which is not available when the player chooses $\xt$.
We therefore predict $\ut$ using the preceding centered utility
$\vec u^{(t-1)}$. This use of a prediction for the current utility is
the \emph{optimistic step} in the update. Substituting this prediction into
the ideal correction gives
\[
    \xt[a]\ \propto\
       \partial_a\Psi(\eta\vec U^{(t)})
       \bigl(1+4\eta\vec u^{(t-1)}[a]\bigr).
    \numberthis{eq:gradient-rule}
\]

This is precisely the update used by \algname{}. It has the same form
as the ideal response, but replaces the unavailable current advantage
$\ut[a]$ by the optimistic prediction $\vec u^{(t-1)}[a]$.

The same normalization argument as before now produces a cross term
between the predicted and realized centered utilities. In particular,
using $\ip{\xt}{\ut}=0$ gives
\[
    \eta\left\langle
       \nabla\Psi(\eta\vec U^{(t)}),\ut
    \right\rangle
    =
    -4\eta^2\sum_a
       \partial_a\Psi(\eta\vec U^{(t)})
       \vec u^{(t-1)}[a]\ut[a].
\]
For the ideal correction, the corresponding product was $\ut[a]^2$.
The price of using the prediction $\vec u^{(t-1)}[a]$ is therefore
determined by how close it is to $\ut[a]$.

To make this precise, for every action $a$,
\[
    -4\vec u^{(t-1)}[a]\ut[a]
    \leq
    2\bigl(\ut[a]-\vec u^{(t-1)}[a]\bigr)^2
    -2\ut[a]^2.
\]
Indeed, the difference between the right-hand side and the left-hand
side is $2\vec u^{(t-1)}[a]^2\geq0$. Thus, the cross term decomposes
into two useful pieces. The first is a positive squared prediction
error, while the second retains a negative multiple of $\ut[a]^2$.

Substituting this inequality into \eqref{eq:taylor} gives
\[
 \hspace{-0.4cm}
    \Psi(\eta\vec U^{(t+1)})-\Psi(\eta\vec U^{(t)})
    & \leq
    2\eta^2\sum_a
       \partial_a\Psi(\eta\vec U^{(t)})
       \bigl(\ut[a]-\vec u^{(t-1)}[a]\bigr)^2
    -\frac{4\eta^2}{3}\sum_a
       \partial_a\Psi(\eta\vec U^{(t)})\ut[a]^2 \\
    & \leq 2\eta^2\sum_a
       \partial_a\Psi(\eta\vec U^{(t)})
       \bigl(\ut[a]-\vec u^{(t-1)}[a]\bigr)^2
    -\eta^2\sum_a
       \partial_a\Psi(\eta\vec U^{(t)})\ut[a]^2.
    \numberthis{eq:one-round}
\]

Inequality \eqref{eq:one-round} captures the role of optimism in the
update. When the prediction is accurate, the positive term involving
$\ut-\vec u^{(t-1)}$ is small, while the negative weighted square
remains. Under perfect prediction,
$\ut=\vec u^{(t-1)}$, the prediction-error term vanishes and the
potential decreases unless $\ut=\zero$.

More generally, the potential can increase only through the squared
prediction error, while the same gradient weights continue to multiply
the negative term involving $\ut[a]^2$. This matching of weights is the
reason for introducing the prediction multiplicatively through
$\one+4\eta\vec u^{(t-1)}$. The coefficient $4$ leaves enough slack for
a negative weighted square to remain after accounting for the Taylor
remainder.

\subsection{Measuring Strategy Movement with Hellinger Distance}
\label{sec:hellinger}

The positive term in \eqref{eq:one-round} is driven by the prediction
error $\ut_i-\vec u_i^{(t-1)}$. In a fixed game, this error is caused
by changes in the players' mixed strategies. We therefore need to
control one-round strategy movement, measured by a squared distance such
as
\[
    \norm{\xt_i-\vx_i^{(t-1)}}_1^2
    \qquad\text{or}\qquad
    \norm{\sqrt{\xt_i}-\sqrt{\vx_i^{(t-1)}}}_2^2.
\]
As we show next, the choice of distance matters for the dependence on
the number of players. The usual $\ell_1$ comparison loses an extra
player factor, while squared Hellinger distance avoids this loss.

\subsubsection{Standard $\ell_1$ Comparison Loses a Player Factor}

To see the issue, first consider the uncentered utility vector $\nut_i$.
By multilinearity of expected utility in a normal-form game,
$\nut_i[a]$ is the expectation of the fixed payoff function
$\mathcal U_i(a,\cdot)$ under the opponents' product distribution.
Since $\mathcal U_i(a,\cdot)\in[0,1]$,
\[
    \norm{\nut_i-\vec\nu_i^{(t-1)}}_\infty
    \leq
    \left\|
       \bigotimes_{j\ne i}\xt_j
       -
       \bigotimes_{j\ne i}\vx_j^{(t-1)}
    \right\|_1.
\]
A telescoping comparison of the two product distributions then gives
\[
    \left\|
       \bigotimes_{j\ne i}\xt_j
       -
       \bigotimes_{j\ne i}\vx_j^{(t-1)}
    \right\|_1
    \leq
    \sum_{j\ne i}
       \norm{\xt_j-\vx_j^{(t-1)}}_1.
\]

Because the prediction-error term is squared, by Cauchy–Schwarz, this gives
\[
    \norm{\nut_i-\vec\nu_i^{(t-1)}}_\infty^2
    \leq
    (n-1)
    \sum_{j\ne i}
       \norm{\xt_j-\vx_j^{(t-1)}}_1^2. \numberthis{eq:extra_n}
\]
This is the key game-level step in the RVU analysis. Under self-play,
predictability of the utility vectors is controlled by the players'
strategy movement, thereby connecting utility variation to the
individual regret bounds through the accumulated squared path length.
The first factor of order $n$ above comes from converting the square of
a sum into a sum of squares. Summing the prediction-error bounds over
players introduces another factor of order $n$, since each player's
movement affects the utilities of every other player. These two factors
force the learning rate to scale as $O(1/n)$ in standard RVU-based
analyses of regularized learning in games
\citep{syrgkanis2015fast,farina2022near,anagnostides2022last,anagnostides2022uncoupled,soleymani2025faster,soleymani2025cautious}.

Our goal is to avoid the first of these two $n$ factors.

\subsubsection{Squared Hellinger Movement}

We instead measure movement using
\[
    \norm{\sqrt{\vx'}-\sqrt{\vx}}_2^2,
\]
where square roots are taken coordinatewise. Up to a conventional factor
of $1/2$, this is squared Hellinger distance.

The key property is that squared Hellinger distance behaves additively
across product distributions. In particular,
\[
    \left\|
       \sqrt{\bigotimes_{j\ne i}\xt_j}
       -
       \sqrt{\bigotimes_{j\ne i}\vx_j^{(t-1)}}
    \right\|_2^2
    \leq
    \sum_{j\ne i}
       \norm{\sqrt{\xt_j}-\sqrt{\vx_j^{(t-1)}}}_2^2.
\]
The crucial difference from the $\ell_1$ comparison is that the
right-hand side is already a sum of squared movements. No additional
Cauchy--Schwarz step, and hence no additional factor of $n$ as in \eqref{eq:extra_n}, is needed.

By standard properties of Hellinger distance, changes in the expectation
of a $[0,1]$-valued function are controlled by the square-root distance
between the underlying distributions. Applying this to
$\mathcal U_i(a,\cdot)$ under the opponents' product distributions on
rounds $t$ and $t-1$ gives, for every $a\in\mathcal A_i$,
\[
    \left|
       \nut_i[a]-\vec\nu_i^{(t-1)}[a]
    \right|
    \leq
    \left\|
       \sqrt{\bigotimes_{j\ne i}\xt_j}
       -
       \sqrt{\bigotimes_{j\ne i}\vx_j^{(t-1)}}
    \right\|_2.
\]
Taking the maximum over actions, squaring, and using the product
comparison above yields
\[
    \norm{\nut_i-\vec\nu_i^{(t-1)}}_\infty^2
    \leq
    \sum_{j\ne i}
       \norm{\sqrt{\xt_j}-\sqrt{\vx_j^{(t-1)}}}_2^2.
\]
These properties of Hellinger distance are proved in \Cref{app:game}.

Our analysis uses centered utilities $\ut_i$ rather than the uncentered vectors
$\nut_i$. Centering introduces an additional dependence on player $i$'s
own strategy, but only changes the constant in the prediction-error
bound. Combining the opponents' product-distribution comparison with
the centering term gives
\[
    \norm{\ut_i-\vec u_i^{(t-1)}}_\infty^2
    \leq
    5\sum_{j=1}^n
       \norm{\sqrt{\xt_j}-\sqrt{\vx_j^{(t-1)}}}_2^2,
    \qquad t\geq2.
    \numberthis{eq:game-error}
\]
\Cref{lem:game} proves this bound in detail. The important point is that
the prediction error is controlled by a sum of squared strategy
movements without an additional player factor inside the inequality.
Summing \eqref{eq:game-error} over players therefore introduces only
one factor of $n$.

\subsubsection{What the Potential Must Control}

The argument above for efficient regret bounds in the self-play identifies the movement measure we need.
To close the self-play analysis, the negative gradient-weighted square
in \eqref{eq:one-round} must control
\[
    \norm{\sqrt{\vx_i^{(t+1)}}-\sqrt{\xt_i}}_2^2.
\]
As discussed after \eqref{eq:one-round}, using the preceding centered
utility $\vec u_i^{(t-1)}$ in place of the unavailable $\ut_i$
introduces the squared prediction-error term
$\norm{\ut_i-\vec u_i^{(t-1)}}_\infty^2$. We therefore seek a movement
bound involving these same two quantities.

This requirement guides the potential construction. In \algname{}, the
positive coordinates $\partial_a\Psi(\eta\vec U^{(t)})$ serve as the
base weights assigned to the actions before normalization. Since
square-root movement depends on relative changes in these weights, we
need their ratios across consecutive rounds to remain controlled even
when the weights themselves become very small. In other words, these
coordinates must be multiplicatively stable. The next subsection
constructs $\Psi$ to ensure this property.

\subsection{Choosing the Potential}\label{sec:potential}

We now construct the potential $\Psi$ used by \algname{}. The preceding
discussion identifies three properties that the potential must satisfy.
First, $\Psi$ should dominate the largest cumulative advantage
$\max_a\vec U[a]$ while having initial value only $\mathcal{O}(\log d)$, so that
a bound on the potential yields a regret bound with logarithmic
dependence on the number of actions. Second, the curvature of $\Psi$
should be controlled by its coordinates
$\partial_a\Psi(\vec U)$, as required for the weighted Taylor estimate
\eqref{eq:taylor}. Third, these same coordinates must remain sufficiently
stable under normalization (multiplicatively stable) so that the negative weighted square in
\eqref{eq:one-round} can control the squared Hellinger movement from
\Cref{sec:hellinger}. We construct $\Psi$ step by step to satisfy these
three requirements.

\subsubsection{The Scalar Function}
\label{sec:potential-scalar}

We begin by designing a one-dimensional function $f$ that will underlie
the potential and its action weights. To convert a potential bound into
a regret bound, we need
\[
    f(z)\geq1+z.
\]
On the nonnegative half-line, the simplest choice is therefore
\[
    f(z)=1+z,
    \qquad z\geq0.
\]
This gives the desired linear control of positive cumulative advantages
and has constant derivative $f'(z)=1$.

The negative half-line requires more care. Extending the linear branch
as $[1+z]_+$ would make the derivative vanish for $z<-1$, so actions
with sufficiently negative cumulative advantage would receive zero
weight. More importantly, we need the resulting positive weights to
remain stable after normalization. As discussed in
\Cref{sec:hellinger}, the negative weighted square in
\eqref{eq:one-round} must ultimately control squared Hellinger movement,
which in turn controls the prediction error in a fixed game.

Absolute stability is not enough after normalization. For example, the
positive vectors $(\eps,\eps)$ and $(2\eps,\eps)$ become arbitrarily
close as $\eps\to0$, while their normalizations remain
$(1/2,1/2)$ and $(2/3,1/3)$. Thus, when weights become small, their
relative changes must become small as well. This type of multiplicative
stability also appears in the analysis of log-regularized FTRL and
Cautious Optimism
\citep{farina2022near,soleymani2025faster,soleymani2025cautious}.

We therefore seek a positive, convex, continuously differentiable
function $f$ that agrees with $1+z$ for $z\geq0$, satisfies $f'(z)>0$
at every finite input, approaches zero as $z\to-\infty$, and becomes
increasingly stable in relative terms as $f'(z)$ becomes small.

To identify a sufficient condition for the last property, consider a
scalar prototype. For an input vector $\vec z$, normalize the derivative
weights according to
\[
    \vx[a]
      =
      \frac{f'(\vec z[a])}
           {\sum_b f'(\vec z[b])}.
\]
If the input changes by a small vector $\vec h$, then the relative
change of the weight of action $a$ is, to first order,
\[
    \frac{f'(\vec z[a]+\vec h[a])-f'(\vec z[a])}
         {f'(\vec z[a])}
    \approx
    \frac{f''(\vec z[a])}{f'(\vec z[a])}\vec h[a].
\]
Hence, $f''(z)/f'(z)$ measures the local relative sensitivity of the
derivative weight.

Let $\vx'$ denote the normalized weights after the perturbation and set
\[
    S=\sum_b f'(\vec z[b]).
\]
The local normalization calculation underlying squared Hellinger
movement gives
\[
    \norm{\sqrt{\vx'}-\sqrt{\vx}}_2^2
    \lesssim
    \frac1S
    \sum_a
       f'(\vec z[a])
       \left(
          \frac{f''(\vec z[a])}{f'(\vec z[a])}
       \right)^2
       \vec h[a]^2.
\]
The important feature is the factor $1/S$ created by normalization.
We want to control this expression by $
    \sum_a f'(\vec z[a])\vec h[a]^2$, which is the scalar analogue of the gradient-weighted square retained
in \eqref{eq:one-round}.

A convenient sufficient condition is
\[
    \left(\frac{f''(z)}{f'(z)}\right)^2
      \lesssim f'(z).
    \numberthis{eq:suff_cond_f}
\]
Indeed, under \eqref{eq:suff_cond_f},
\[
    \frac1S
    \sum_a
       f'(\vec z[a])
       \left(
          \frac{f''(\vec z[a])}{f'(\vec z[a])}
       \right)^2
       \vec h[a]^2
    \lesssim
    \frac1S
    \sum_a f'(\vec z[a])^2\vec h[a]^2
    \leq
    \sum_a f'(\vec z[a])\vec h[a]^2,
\]
where the last inequality uses $f'(\vec z[a])\leq S$. Thus, the extra
factor of $f'(z)$ in \eqref{eq:suff_cond_f} compensates for the
normalizing denominator. This calculation only motivates the scalar
design. After aggregation, \eqref{eq:compensation} provides the
corresponding bound for the actual gradient weights of $\Psi$.

We now choose the negative branch to satisfy \eqref{eq:suff_cond_f}.
A particularly simple choice is
\[
    f'(z)=f(z)^2,
    \qquad
    f(0)=1,
    \qquad
    z\leq0.
\]
Indeed, $f''(z)=2 f'(z) f(z) = 2f(z)^3$, so
\[
    \left(\frac{f''(z)}{f'(z)}\right)^2
      =4f(z)^2
      =4f'(z).
\]
Thus, the squared relative sensitivity decreases at exactly the same
rate as the derivative weight.

The differential equation satisfies
\[
    \frac{d}{dz}\frac1{f(z)}
      =-1,
\]
and the condition $f(0)=1$ therefore gives
\[
    f(z)=\frac1{1-z},
    \qquad z\leq0.
\]
Joining the reciprocal negative branch with the linear positive branch
yields
\[
    f(z)
      =
      \begin{cases}
         (1-z)^{-1}, & z\leq0,\\
         1+z, & z\geq0.
      \end{cases}
\]
The two branches agree in value and first derivative at zero, so $f$ is
positive, convex, and continuously differentiable, with
$f'(z)=\min\{f(z)^2,1\}$.

Finally, the reciprocal branch preserves the lower bound needed for
regret. For $z\leq0$,
\[
    f(z)-(1+z)
      =
      \frac{z^2}{1-z}
      \geq0,
\]
while equality holds for $z\geq0$. Hence, $f(z)\geq1+z$ everywhere.

The reciprocal tail therefore provides both ingredients we need from the
scalar construction. It preserves the linear lower bound used to control
regret, while making small derivative weights sufficiently stable under
normalization. The next step aggregates these scalar values across
actions while retaining these properties and keeping the initial
potential only logarithmic in $d$.

\subsubsection{Aggregation and Scaling}
\label{sec:potential-aggregation}

We now aggregate the scalar values $f(\vec U[a]/c)$ across actions.
The aggregation must preserve the lower bound on the largest cumulative
advantage while keeping the initial potential small. A direct sum would
not achieve the latter, since $f(0)=1$ gives
\[
    \sum_{a\in\mathcal A_i} f(0)
      =
      |\mathcal A_i|,
\]
which can be of order $d$.

Instead, we use an $\ell_{c-1}$ norm. This follows the classical
$p$-norm idea of choosing the norm order logarithmic in the dimension,
so that a high-order norm approximates the maximum while having only
logarithmic dimension dependence
\citep{gentile2003robustness,shalev2012online}. For any nonnegative vector,
\[
    \max_a f\!\left(\frac{\vec U[a]}c\right)
    \leq
    \left(
       \sum_{a\in\mathcal A_i}
       f\!\left(\frac{\vec U[a]}c\right)^{c-1}
    \right)^{1/(c-1)}
    \leq
    |\mathcal A_i|^{1/(c-1)}
    \max_a f\!\left(\frac{\vec U[a]}c\right).
\]
With $c=2+\log d$, we have $c-1=1+\log d$, and therefore
\[
    |\mathcal A_i|^{1/(c-1)}
      \leq
      d^{1/(1+\log d)}
      <e<3.
\]
Thus, the $\ell_{c-1}$ norm approximates the largest scalar value within
a constant factor. In particular, at the origin its value is less than
three. This is the reason for choosing an aggregation exponent of order
$\log d$.

We also divide the scalar inputs by $c$ and multiply the resulting norm
by $c$. These two scalings are chosen so that differentiation does not
introduce an additional factor depending on $c$. Indeed, the factor
$1/c$ from differentiating $f(\vec U[a]/c)$ is canceled by the outer
multiplier $c$, while the factors $c-1$ from the inner and outer powers
cancel each other. This leads to
\[
    \Psi_i(\vec U)
       =
       c\left(
          \sum_{a\in\mathcal A_i}
          f\!\left(\frac{\vec U[a]}c\right)^{c-1}
       \right)^{1/(c-1)},
\]
which is the potential defined in \eqref{eq:potential}.

The construction now gives the two properties needed for the regret
analysis. First, since $f(0)=1$,
\[
    \Psi_i(\zero)
      =
      c|\mathcal A_i|^{1/(c-1)}
      <3c.
\]
Second, the $\ell_{c-1}$ norm dominates each coordinate and
$f(z)\geq1+z$, so for every action $a$,
\[
    \Psi_i(\vec U)
      \geq
      c f\!\left(\frac{\vec U[a]}c\right)
      \geq
      c+\vec U[a].
\]
Hence,
\[
    0<\Psi_i(\vec U),
    \qquad
    \Psi_i(\zero)<3c,
    \qquad
    \Psi_i(\vec U)
      \geq
      c+\max_a\vec U[a].
    \numberthis{eq:certificate}
\]

The last inequality is the regret certificate. If
$\Psi_i(\eta\vec U_i^{(T+1)})\leq4c$, then
\[
    \eta\Reg_i^{(T)}
      =
      \eta\max_a\vec U_i^{(T+1)}[a]
      \leq
      \Psi_i(\eta\vec U_i^{(T+1)})-c
      \leq
      3c,
\]
where we used \eqref{eq:two-identities}. Therefore,
\[
    \Reg_i^{(T)}
      \leq
      \frac{3c}{\eta}.
\]
Since $c=2+\log d$, the initial potential is only $O(\log d)$.
Thus, the power-norm aggregation is what ultimately gives the logarithmic
dependence on the number of actions.

\subsubsection{Recovering the Power Weights}
\label{sec:potential-weights}

The potential construction also explains the explicit power weights used
by \algname{}. Differentiating $\Psi_i$ with respect to coordinate $a$
gives
\[
    \partial_a\Psi_i(\vec U)
      =
      \left(\frac{\Psi_i(\vec U)}c\right)^{2-c}
      f\!\left(\frac{\vec U[a]}c\right)^{c-2}
      f'\!\left(\frac{\vec U[a]}c\right).
\]
The first factor is common to all actions. The action-specific part is
determined entirely by the two branches of $f$.

On the negative branch, $f'=f^2$, while on the positive branch,
$f'=1$. Therefore,
\[
    f\!\left(\frac{\vec U[a]}c\right)^{c-2}
    f'\!\left(\frac{\vec U[a]}c\right)
    =
    \begin{cases}
       \left(1-\frac{\vec U[a]}c\right)^{-c},
          & \vec U[a]\leq0,\\
       \left(1+\frac{\vec U[a]}c\right)^{c-2},
          & \vec U[a]\geq0.
    \end{cases}
\]
This is exactly the weight function $a_c$ defined in
\eqref{eq:weight}. Hence, we recover
\[
    \partial_a\Psi_i(\vec U)
      =
      \left(\frac{\Psi_i(\vec U)}c\right)^{2-c}
      a_c(\vec U[a]),
\]
which is \eqref{eq:gradient}.

The prefactor $(\Psi_i(\vec U)/c)^{2-c}$ is independent of the action,
so it disappears when the weights are normalized. Substituting
\eqref{eq:gradient} into the gradient response
\eqref{eq:gradient-rule} therefore gives
\[
    \xt_i[a]
      \propto
      a_c\!\left(\eta\vec U_i^{(t)}[a]\right)
      \bigl(1+4\eta\vec u_i^{(t-1)}[a]\bigr),
\]
which is precisely the explicit update in \eqref{eq:explicit-rule}.

Thus, the two powers appearing in $a_c$ arise directly from the scalar
construction. The reciprocal branch $f'=f^2$ produces the exponent
$c$ for negative cumulative advantages, while the linear branch
$f'=1$ produces the exponent $c-2$ for positive cumulative advantages.

\subsubsection{Weighted Curvature and Stability}
\label{sec:potential-curvature}

We now verify the geometric properties of $\Psi_i$ needed in the
analysis. The first is a weighted curvature bound, which gives the
Taylor estimate \eqref{eq:taylor}. The second is multiplicative
stability of the coordinates $\partial_a\Psi_i$, which allows us to
compare these weights along a finite update. We also record a uniform
bound on their total mass, which will later be used to control
prediction-error terms.

Recall the scalar weight $a_c$ from \eqref{eq:weight}. Away from zero,
its logarithmic derivative is
\[
    \frac{a_c'(z)}{a_c(z)}
      =
      \begin{cases}
         (1-z/c)^{-1}, & z<0,\\
         (c-2)/(c+z), & z>0.
      \end{cases}
    \numberthis{eq:weight-log-derivative}
\]
Both branches lie in $[0,1]$, independently of $d$. Thus, the relative
sensitivity of each scalar weight is uniformly bounded.

This bound also controls the curvature of the aggregated potential.
Differentiating \eqref{eq:gradient} at points with no zero coordinates
gives
\[
    \nabla^2\Psi_i(\vec U)
      =
      \diag\!\left(
         \partial_a\Psi_i(\vec U)
         \frac{a_c'(\vec U[a])}{a_c(\vec U[a])}
      \right)
      -
      \frac{c-2}{\Psi_i(\vec U)}
      \nabla\Psi_i(\vec U)\nabla\Psi_i(\vec U)^\top.
\]
The rank-one matrix in the second term is positive semidefinite, so
subtracting it can only decrease the Hessian in the positive
semidefinite order. Using \eqref{eq:weight-log-derivative}, the diagonal
term is bounded above by
$\diag(\partial_a\Psi_i(\vec U))$. Since $\Psi_i$ is convex, its Hessian
is positive semidefinite wherever it exists. Hence,
\[
    0
      \preceq
      \nabla^2\Psi_i(\vec U)
      \preceq
      \diag\bigl(\partial_a\Psi_i(\vec U)\bigr).
\]
Equivalently, for every vector $\vec v$,
\[
    \vec v^\top\nabla^2\Psi_i(\vec U)\vec v
      \leq
      \sum_a
      \partial_a\Psi_i(\vec U)\vec v[a]^2.
\]
This is exactly the gradient-weighted curvature property motivated in
\Cref{sec:optimism}.

We will also use that the gradient has uniformly bounded total mass.
Applying H\"older's inequality to the derivative formula gives
\[
    \sum_a\partial_a\Psi_i(\vec U)\leq3.
\]
We record the two bounds together as
\[
    \sum_a\partial_a\Psi_i(\vec U)\leq3,
    \qquad
    0\preceq\nabla^2\Psi_i(\vec U)
      \preceq
      \diag\bigl(\partial_a\Psi_i(\vec U)\bigr)
    \quad\text{where the Hessian exists}.
    \numberthis{eq:mass-curvature}
\]
The Hessian bound controls the Taylor remainder, while the total-mass
bound will later convert gradient-weighted prediction errors into
$\ell_\infty$ prediction errors.

The remaining ingredient is multiplicative stability. By
\Cref{lem:multiplicative}, moving the potential input from
$\eta\vec U_i^{(t)}$ in the direction $\eta\ut_i$ changes each gradient
coordinate only multiplicatively. In particular, for
$0\leq\theta\leq1$,
\[
    \partial_a\Psi_i
       \bigl(\eta\vec U_i^{(t)}+\theta\eta\ut_i\bigr)
    \leq
    e^{2\theta\eta\norm{\ut_i}_\infty}
    \partial_a\Psi_i(\eta\vec U_i^{(t)}).
\]
Since $\norm{\ut_i}_\infty\leq1$ and $\eta\leq1/8$,
\[
    e^{2\theta\eta}
      \leq
      e^{2\eta}
      <
      \frac43.
\]
Thus, every gradient coordinate along the update segment is at most
$4/3$ times its value at the beginning of the round.

Combining this multiplicative stability with
\eqref{eq:mass-curvature} gives the desired Taylor estimate. Indeed, the
integral remainder is at most
\[
    \eta^2
    \int_0^1
       (1-\theta)
       \sum_a
       \partial_a\Psi_i
          \bigl(\eta\vec U_i^{(t)}+\theta\eta\ut_i\bigr)
       \ut_i[a]^2
    \,d\theta,
\]
which is bounded by
\[
    \frac{4\eta^2}{3}
    \int_0^1(1-\theta)\,d\theta
    \sum_a
       \partial_a\Psi_i(\eta\vec U_i^{(t)})
       \ut_i[a]^2
    =
    \frac{2\eta^2}{3}
    \sum_a
       \partial_a\Psi_i(\eta\vec U_i^{(t)})
       \ut_i[a]^2.
\]
Adding the first-order term gives \eqref{eq:taylor}.
\Cref{app:calculus} proves these properties formally, including the
regularity needed when a coordinate crosses zero.

\subsubsection{Stability After Normalization}
\label{sec:normalization}

We now lift the scalar stability condition from
\Cref{sec:potential-scalar} to the gradient weights of the aggregated
potential. The main difficulty is normalization. When the total
derivative mass
$\sum_a\partial_a\Psi_i(\vec U)$ becomes small, controlling only the
absolute or multiplicative change of each weight is not enough. We need
the sensitivity of the weights to decrease at the same time.

The role of the reciprocal negative branch is especially transparent
when every coordinate of the input equals $cz$ with $z<0$. In this
case,
\[
    \sum_a\partial_a\Psi_i(cz\one)
      =
      |\mathcal A_i|^{1/(c-1)}f(z)^2,
    \qquad
    \left(
       \frac{a_c'(cz)}{a_c(cz)}
    \right)^2
      =
      f(z)^2.
\]
Thus, the total derivative mass and the squared logarithmic sensitivity
vanish at the same rate. For arbitrary inputs, the corresponding bound
is
\[
    \frac{
       \bigl(a_c'(\vec U[a])/a_c(\vec U[a])\bigr)^2
    }{
       \sum_b\partial_b\Psi_i(\vec U)
    }
    \leq3
    \quad\text{where the derivatives exist}.
    \numberthis{eq:compensation}
\]
This is the aggregated analogue of the scalar condition developed in
\Cref{sec:potential-scalar}. Importantly, it does not require the total
derivative mass to be bounded away from zero. Instead, the numerator
shrinks whenever the denominator does. \Cref{app:movement} proves
\eqref{eq:compensation}.

To see why this is exactly the normalization bound we need, first ignore
the optimistic correction. By \eqref{eq:gradient}, normalizing the
power weights $a_c(\vec U[a])$ is equivalent to normalizing the gradient
coordinates,
\[
    \vx[a]
      =
      \frac{a_c(\vec U[a])}{\sum_b a_c(\vec U[b])}
      =
      \frac{\partial_a\Psi_i(\vec U)}
           {\sum_b\partial_b\Psi_i(\vec U)}.
\]
For a small input change in direction $\vec v$, the square-root
movement is controlled locally by
\[
    \frac14
    \sum_a
       \frac{\partial_a\Psi_i(\vec U)}
            {\sum_b\partial_b\Psi_i(\vec U)}
       \left(
          \frac{a_c'(\vec U[a])}{a_c(\vec U[a])}
       \right)^2
       \vec v[a]^2.
\]
Applying \eqref{eq:compensation} removes the normalizing denominator and
gives
\[
    \frac34
    \sum_a
       \partial_a\Psi_i(\vec U)\vec v[a]^2.
\]
Thus, after normalization, strategy movement is controlled by the same
gradient-weighted square that appears in \eqref{eq:one-round}.

Along the learning trajectory, the cumulative input changes in direction
$\eta\ut_i$. Integrating the preceding local estimate and using the
multiplicative stability established in
\Cref{sec:potential-curvature} gives a finite-step bound in terms of
\[
    \eta^2
    \sum_a
       \partial_a\Psi_i(\eta\vec U_i^{(t)})\ut_i[a]^2.
\]
The optimistic correction also changes between rounds, from
$\one+4\eta\vec u_i^{(t-1)}$ to $\one+4\eta\ut_i$. As discussed after
\eqref{eq:one-round}, this second change is controlled by the squared
prediction error
$\norm{\ut_i-\vec u_i^{(t-1)}}_\infty^2$.

Consequently, the movement between consecutive strategies is controlled
by exactly the two quantities already present in the optimistic
potential inequality. \Cref{lem:movement} makes this statement precise.

\subsection{Bounding Regret in Self-play}

We now combine the potential inequality with the strategy-movement and
prediction-error bounds developed above. The key point is that the
negative gradient-weighted square in \eqref{eq:one-round} controls the
same squared Hellinger movement that, in a fixed game, controls the
prediction error.

By \Cref{lem:movement}, consecutive strategies satisfy
\[
    \norm{\sqrt{\vx_i^{(t+1)}}-\sqrt{\xt_i}}_2^2
    \leq
    4\eta^2\sum_a
       \partial_a\Psi_i(\eta\vec U_i^{(t)})\ut_i[a]^2
    +
    16\eta^2
       \norm{\ut_i-\vec u_i^{(t-1)}}_\infty^2.
\]
The first term is exactly the gradient-weighted square that appears with
a negative sign in \eqref{eq:one-round}, while the second is the same
squared prediction error that appears on its positive side.

Summing the movement estimate over rounds and combining it with the
telescoped version of \eqref{eq:one-round} gives
\[
    \Psi_i(\eta\vec U_i^{(T+1)})
    \leq
    3c
    +
    10\eta^2\sum_{t=1}^T
       \norm{\ut_i-\vec u_i^{(t-1)}}_\infty^2 -  \frac14\sum_{t=2}^T
       \norm{\sqrt{\xt_i}-\sqrt{\vx_i^{(t-1)}}}_2^2.
    \numberthis{eq:rvu}
\]
We refer to \eqref{eq:rvu} as a \emph{potential-level RVU inequality}. It is
derived directly from the potential argument and holds for arbitrary
bounded utility vectors. The left-hand side contains both the terminal
potential and accumulated squared Hellinger movement, while the
right-hand side contains the initial-potential bound and accumulated
squared prediction error. No self-play in games assumption has been used yet.

We now use self-play in a fixed game. By \eqref{eq:game-error},
for every $t\geq2$,
\[
    \norm{\ut_i-\vec u_i^{(t-1)}}_\infty^2
    \leq
    5\sum_{j=1}^n
       \norm{\sqrt{\xt_j}-\sqrt{\vx_j^{(t-1)}}}_2^2.
\]
Thus, the prediction error on the right-hand side of \eqref{eq:rvu} is
controlled by exactly the same movement quantity that appears on its
left-hand side. For the first round,
$\vec u_i^{(0)}=\zero$ and
$\norm{\vec u_i^{(1)}}_\infty\leq1$.

Substituting \eqref{eq:game-error} into \eqref{eq:rvu}, summing over
players, and discarding the nonnegative terminal potentials gives
\[
    \left(\frac14-50n\eta^2\right)
    \sum_{j=1}^n\sum_{t=2}^T
       \norm{\sqrt{\xt_j}-\sqrt{\vx_j^{(t-1)}}}_2^2
    \leq
    3nc+10n\eta^2.
\]
Only one factor of $n$ appears in the movement coefficient because
summing the prediction-error bounds contributes one factor of $n$.
The Hellinger product comparison in \Cref{sec:hellinger} avoids the
additional player factor that arises in the corresponding $\ell_1$
argument.

With $\eta=1/(32\sqrt n)$,
\[
    n\eta^2=\frac1{1024},
    \qquad
    \frac14-50n\eta^2
      =\frac{103}{512}
      >\frac15.
\]
The movement term can therefore be absorbed onto the left-hand side,
yielding
\[
    \sum_{j=1}^n\sum_{t=2}^T
       \norm{\sqrt{\xt_j}-\sqrt{\vx_j^{(t-1)}}}_2^2
    \leq
    16nc.
\]
In particular, the accumulated squared movement is bounded uniformly
over the horizon.

We can now return to the potential of a single player. Substituting the
movement bound into \eqref{eq:rvu} and using
\eqref{eq:game-error} gives
\[
    \Psi_i(\eta\vec U_i^{(T+1)})
    \leq
    4c.
\]
Finally, the certificate \eqref{eq:certificate} and the regret identity
\eqref{eq:two-identities} imply
\[
    \eta\Reg_i^{(T)}
      \leq
      \Psi_i(\eta\vec U_i^{(T+1)})-c
      \leq
      3c.
\]
Hence,
\[
    \Reg_i^{(T)}
      \leq
      \frac{3c}{\eta}
      =
      96\sqrt n\,(2+\log d).
\]

The dependence on $d$ comes from the initial potential
$c=2+\log d$, while the $\sqrt n$ dependence comes from choosing
$\eta$ so that $n\eta^2$ remains a sufficiently small constant. The
argument applies to every finite horizon and bounds accumulated squared
strategy movement.

\section{Analysis of \algname{}}\label{sec:analysis}

We now make the proof overview rigorous. We first establish the
geometric properties of the potential needed for the weighted Taylor
bound and for stability after normalization. These properties yield a
one-round potential inequality with a negative gradient-weighted square.
We then show that the same weighted square controls strategy movement.
Combining the resulting potential and movement bounds gives a
potential-level inequality in terms of the squared prediction errors.
Finally, in a fixed game, we bound these prediction errors by the
players' squared Hellinger movement and close the argument to obtain a
uniform bound on the potential and regret.

Until the self-play analysis in \Cref{sec:analysis-game}, we fix a player $i$ and allow the observed
utility vectors $\nut_i$ to be an arbitrary sequence in
$[0,1]^{|\mathcal A_i|}$. Throughout this part, we set $c=2+\log d$
and use a fixed learning rate $0<\eta\leq1/16$. We specialize to
self-play in a fixed game and set $\eta=1/(32\sqrt n)$ only in the
final part of the analysis.

Recall from \eqref{eq:centered} and \eqref{eq:cumulative} that
\[
    \ut_i
      =
      \nut_i-\ip{\xt_i}{\nut_i}\one,
    \qquad
    \vec U_i^{(t+1)}
      =
      \vec U_i^{(t)}+\ut_i.
\]
Hence,
\[
    \norm{\ut_i}_\infty\leq1,
    \qquad
    \ip{\xt_i}{\ut_i}=0,
    \qquad
    \vec U_i^{(1)}=\vec u_i^{(0)}=\zero.
\]
All sums over actions in a single-player calculation are over
$\mathcal A_i$, and every horizon considered below is finite.

The basic calculus properties of the potential, including its weighted
curvature and multiplicative stability, are proved in
\Cref{app:calculus}. The corresponding stability bounds for the
normalized response are proved in \Cref{app:movement}. The elementary
square-root-distance identities used to compare expectations and product
distributions (properties of Hellinger distance) are collected in \Cref{app:game}. We state the results we
need below before applying them.

The adversarial guarantee is treated separately in
\Cref{app:safeguard}. The safeguard leaves the \algname{} update
unchanged and only decreases the learning rate when necessary under
adversarial utilities.

\subsection{Potential Geometry and Multiplicative Stability}
\label{sec:analysis-geometry}

We now make precise the potential properties motivated in the proof
overview in \Cref{sec:potential}. In \Cref{sec:optimism}, we required the curvature of $\Psi_i$
to be controlled by its partial derivatives
$\partial_a\Psi_i$, so that the Taylor remainder carries the same
weights as the optimistic potential decrease. In
\Cref{sec:normalization}, we saw that these weights must also remain
stable under finite changes of the potential input. We establish these
properties below and then combine them to obtain the finite-step Taylor
bound.

We begin with the basic geometry of the potential, collecting its
convexity, positivity of the derivatives, and weighted curvature bound.

\begin{lemma}
\label{lem:calculus}
Fix a player $i$ and let $c=2+\log d$. The potential $\Psi_i$ is
convex and continuously differentiable on $\R^{|\mathcal A_i|}$, with
locally Lipschitz gradient. For every $\vec U$,
\[
    \Psi_i(\vec U)>0,
    \qquad
    \Psi_i(\zero)
      =c|\mathcal A_i|^{1/(c-1)}
      <3c,
    \qquad
    \Psi_i(\vec U)
      \geq c+\max_a\vec U[a].
\]
Moreover,
\[
    \partial_a\Psi_i(\vec U)
      =
      \left(\frac{\Psi_i(\vec U)}c\right)^{2-c}
      a_c(\vec U[a])
      >0,
    \qquad
    \sum_a\partial_a\Psi_i(\vec U)\leq3.
\]
Where the Hessian exists,
\[
    0
      \preceq
      \nabla^2\Psi_i(\vec U)
      \preceq
      \diag\bigl(\partial_a\Psi_i(\vec U)\bigr).
\]
The same directional second-derivative bound holds almost everywhere
along every line segment.
\end{lemma}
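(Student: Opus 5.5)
Write $p=c-1=1+\log d\geq 2$, $g_a=f(\vec U[a]/c)$, and $N=(\sum_a g_a^{p})^{1/p}$, so that $\Psi_i=cN$. The plan is to treat $\Psi_i$ as the composition of the scalar map $f$, the coordinatewise scaling by $1/c$, and the $\ell_p$ norm on the positive orthant, and to read off each claim from elementary properties of these three pieces.

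\emph{Regularity and convexity.} On $(-\infty,0]$ we have $f(z)=(1-z)^{-1}$, with $f'=f^2$ and $f''=2f^3$. On $[0,\infty)$ we have $f(z)=1+z$, with $f'=1$ and $f''=0$. The two branches match in value and slope at $0$, so $f$ is $C^1$ with a locally Lipschitz derivative. Moreover $f$ is positive, convex and increasing. The $\ell_p$ norm is convex and nondecreasing in each coordinate on the positive orthant, and it is $C^\infty$ away from the origin. A convex nondecreasing function of convex functions is convex, so $\Psi_i$ is convex and $C^1$. Its gradient is locally Lipschitz as a composition of such maps.

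\emph{Value bounds.} These come from the inequalities $\max_a g_a\leq N\leq|\mathcal A_i|^{1/p}\max_a g_a$, together with $f(z)\geq 1+z$, $f(0)=1$, and $d^{1/(1+\log d)}<e$. This is exactly the computation already recorded in \eqref{eq:certificate}.

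\emph{Gradient formula.} By the chain rule,
\[
\partial_a\Psi_i = N^{1-p}g_a^{p-1}f'(\vec U[a]/c) = (\Psi_i/c)^{2-c}\,a_c(\vec U[a]).
\]
The branchwise identities $f'=f^2$ and $f'=1$ convert $g_a^{c-2}f'$ into $a_c$. Each factor is strictly positive.

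\emph{Mass bound.} On both branches $f'\leq f$: for $z\leq 0$, $f^2\leq f$ because $f\leq 1$; for $z\geq 0$, $1\leq f$. Hence
\[
\sum_a\partial_a\Psi_i \leq N^{1-p}\sum_a g_a^{p}=N\cdot N^{-p}\sum_a g_a^{p}\cdot N^{0}=\ldots
\]
This crude route only gives the bound $N$, which is not uniform, so it has to be refined. Using instead $f'\leq\min\{f^2,1\}$, I apply H\"older's inequality with exponents $p/(p-1)$ and $p$:
\[
\sum_a g_a^{p-1}f'_a\leq\Bigl(\sum_a g_a^{p}\Bigr)^{(p-1)/p}\Bigl(\sum_a (f'_a)^p\Bigr)^{1/p}.
\]
This gives $\sum_a\partial_a\Psi_i\leq\|f'\|_p\leq|\mathcal A_i|^{1/p}\max_a f'_a\leq e<3$, since $f'\leq 1$.

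\emph{Hessian bound.} On the open set where no coordinate of $\vec U$ is zero, write $w_a=\partial_a\Psi_i$. Differentiating the formula $w_a=(\Psi_i/c)^{2-c}a_c(\vec U[a])$ gives
\[
\partial_b w_a = \delta_{ab}\,w_a\,\frac{a_c'}{a_c}(\vec U[a]) + (2-c)\,\frac{w_a w_b}{\Psi_i}.
\]
This is the diagonal-minus-rank-one form displayed in \Cref{sec:potential-curvature}. The log-derivative \eqref{eq:weight-log-derivative} is $(1-z/c)^{-1}\in(0,1]$ for $z<0$ and $(c-2)/(c+z)\in[0,1]$ for $z>0$. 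Since $c\geq 2$, the rank-one term is negative semidefinite, so $\nabla^2\Psi_i\preceq\diag(w)$. Positive semidefiniteness follows from convexity.

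\emph{Segments crossing zero.} On a line segment, the set of parameters at which some coordinate vanishes is either finite, or it consists of the whole segment for coordinates that are constant along it. In the constant case, the segment lies in a hyperplane $\{\vec U[a]=0\}$ and that coordinate does not move. The directional derivative of the gradient in that coordinate is then unaffected, because the one-sided second derivatives of $f$ at $0$ are both finite and the direction has zero component there. Since $\nabla\Psi_i$ is locally Lipschitz, the restricted gradient is absolutely continuous, so the directional second derivative exists almost everywhere. It equals the formula above wherever the coordinates are nonzero, and in the degenerate constant-coordinate case it is computed with that coordinate's curvature contribution multiplied by $v_a=0$.

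\emph{Main obstacle.} I expect the almost-everywhere statement along segments to be the only delicate point. The clean way to handle it is to view $f$ as piecewise $C^2$, compute $\frac{d}{d\theta}\langle\nabla\Psi_i(\vec U+\theta\vec v),\vec v\rangle$ off the finitely many crossing times, and note that only a null set is excluded.
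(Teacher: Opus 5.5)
Your proposal is correct and follows the paper's proof essentially step for step. The matching steps are: the composition of $f$ with the monotone convex $\ell_{c-1}$ norm for convexity and the certificate, the chain-rule gradient formula, H\"older for the mass bound, and the diagonal-minus-rank-one Hessian with $a_c'/a_c\in[0,1]$, with convexity giving the lower bound. The zero-crossing argument also matches: finitely many crossing times, and $\vec v[a]=0$ for coordinates that stay at zero. In the mass bound you apply H\"older to $(f'_a)$ and then use $f'\le1$, whereas the paper uses $f'\le1$ first; both give the same estimate $|\mathcal A_i|^{1/(c-1)}<3$.

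One small slip: $p=1+\log d\geq 2$ is false for $d=2$ (there $p\approx 1.69$), but nothing in your argument uses more than $p>1$. You should also delete the abandoned ``crude route'' display that ends in $\ldots$.
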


The bounds
$\Psi_i(\zero)<3c$ and
$\Psi_i(\vec U)\geq c+\max_a\vec U[a]$ give the regret certificate
from \eqref{eq:certificate}. In particular, controlling the terminal
potential will directly control the largest cumulative regret.

The remaining bounds describe the geometry needed for the potential
analysis. The total derivative mass satisfies
$\sum_a\partial_a\Psi_i(\vec U)\leq3$, which will allow us to bound
gradient-weighted prediction errors by their $\ell_\infty$ counterparts.
More importantly, the Hessian bound gives, for every vector $\vec v$,
\[
    \vec v^\top\nabla^2\Psi_i(\vec U)\vec v
      \leq
      \sum_a
         \partial_a\Psi_i(\vec U)\vec v[a]^2.
\]
Thus, the curvature of $\Psi_i$ is controlled by the same coordinates
$\partial_a\Psi_i(\vec U)$ that determine the action weights in
\algname{}. This is precisely the weighted-curvature property motivated
in \Cref{sec:optimism}. \Cref{app:calculus} proves
\Cref{lem:calculus} directly from the scalar function $f$ and the
power-norm construction of $\Psi_i$.

The Hessian bound is pointwise, whereas a finite update moves through
an entire segment of potential inputs. To control the Taylor remainder
using the gradient at the beginning of the update, we therefore need to
compare $\partial_a\Psi_i$ at different points along this segment. This
is the role of multiplicative stability.

\begin{lemma}[Multiplicative stability]\label{lem:multiplicative}
For every $\vec U,\vec u\in\R^{|\mathcal A_i|}$, every $\eta>0$,
and every action $a\in\mathcal A_i$,
\[
    \left|
       \log
       \frac{\partial_a\Psi_i(\vec U+\eta\vec u)}
            {\partial_a\Psi_i(\vec U)}
    \right|
    \leq
    2\eta\norm{\vec u}_\infty.
    \numberthis{eq:gradient-multiplicative}
\]
For \algname{} with a fixed rate $0<\eta\leq1/16$ and any sequence of
utility vectors in $[0,1]^{|\mathcal A_i|}$ observed after play,
\[
    \left|
       \log
       \frac{\vx_i^{(t+1)}[a]}{\xt_i[a]}
    \right|
    \leq
    2\eta\norm{\ut_i}_\infty
    +
    \frac{8\eta}{1-4\eta}
       \norm{\ut_i-\vec u_i^{(t-1)}}_\infty.
    \numberthis{eq:iterate-multiplicative}
\]
In particular, if $\eta\leq1/32$, then
$\vx_i^{(t+1)}[a]/\xt_i[a]\in[1/2,2]$ for every action $a$.
\end{lemma}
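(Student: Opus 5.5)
The plan is to work with logarithms of the gradient coordinates. We integrate their derivative along the segment $\theta\mapsto \vec U+\theta\eta\vec u$, $\theta\in[0,1]$, and then transfer the bound to the normalized iterates by writing $\vx_i^{(t+1)}[a]/\xt_i[a]$ as a ratio of unnormalized weights divided by a ratio of normalizers.

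\emph{Step 1: the gradient bound \eqref{eq:gradient-multiplicative}.} By \Cref{lem:calculus},
\[
\log\partial_a\Psi_i(\vec V)=(2-c)\log\bigl(\Psi_i(\vec V)/c\bigr)+\log a_c(\vec V[a]).
\]
We bound the derivative of each term in $\theta$.

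For the second term, \eqref{eq:weight-log-derivative} gives $a_c'/a_c\in[0,1]$ away from $0$. Hence the $\theta$-derivative is $(a_c'/a_c)\,\eta\vec u[a]$, which has the sign of $\vec u[a]$ and magnitude at most $\eta|\vec u[a]|$. Since $\log a_c$ is continuous and piecewise $C^1$, it is Lipschitz, so integration is legitimate. Integrating, the increment $\log a_c(\vec U[a]+\eta\vec u[a])-\log a_c(\vec U[a])$ lies between $0$ and $\eta\vec u[a]$, in particular in $[-\eta\norm{\vec u}_\infty,\eta\norm{\vec u}_\infty]$.

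For the first term, $\Psi_i$ is $C^1$, so the derivative is $(2-c)\eta\langle\nabla\Psi_i,\vec u\rangle/\Psi_i$, of magnitude at most $(c-2)\eta\norm{\vec u}_\infty\sum_b\partial_b\Psi_i/\Psi_i$. The key claim is $\sum_b\partial_b\Psi_i(\vec V)\leq\Psi_i(\vec V)/c$. To see it, note that $\sum_b\partial_b\Psi_i=(\Psi_i/c)^{2-c}\sum_b f^{c-2}f'$ with $f$ evaluated at $\vec V[b]/c$. Moreover $f'\leq f$ pointwise: on $z\le0$ we have $f'=f^2\le f$ because $f\le 1$, and on $z\ge0$ we have $f'=1\le f$. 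Therefore
\[
\sum_b f^{c-2}f'\leq\sum_b f^{c-1}=(\Psi_i/c)^{c-1}.
\]
This gives the first term's contribution at most $\frac{c-2}{c}\eta\norm{\vec u}_\infty\le\eta\norm{\vec u}_\infty$. Adding the two contributions yields $2\eta\norm{\vec u}_\infty$.

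\emph{Step 2: the iterate bound \eqref{eq:iterate-multiplicative}.} By \eqref{eq:explicit-rule}, $\xt_i[a]=w_t[a]/Z_t$ with $w_t[a]=a_c(\eta\vec U_i^{(t)}[a])(1+4\eta\vec u_i^{(t-1)}[a])$ and $Z_t=\sum_b w_t[b]$. Thus
\[
\frac{\vx_i^{(t+1)}[a]}{\xt_i[a]}=\frac{\rho_a\sigma_a}{\sum_b\xt_i[b]\rho_b\sigma_b},
\]
where $\rho_a=a_c(\eta\vec U_i^{(t+1)}[a])/a_c(\eta\vec U_i^{(t)}[a])$ and $\sigma_a=(1+4\eta\ut_i[a])/(1+4\eta\vec u_i^{(t-1)}[a])$.

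Using the scalar estimate from Step 1 directly on $a_c$, rather than on $\partial_a\Psi_i$, avoids paying the global factor twice: $\log\rho_a\in[-\eta\norm{\ut_i}_\infty,\eta\norm{\ut_i}_\infty]$ for every $a$. Both correction factors lie in $[1-4\eta,1+4\eta]$, so the mean value theorem for $\log$ gives $|\log\sigma_a|\le \frac{4\eta}{1-4\eta}\norm{\ut_i-\vec u_i^{(t-1)}}_\infty$.

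Every $\log(\rho_b\sigma_b)$ therefore lies in the interval $[-A,A]$ with $A=\eta\norm{\ut_i}_\infty+\frac{4\eta}{1-4\eta}\norm{\ut_i-\vec u_i^{(t-1)}}_\infty$. Since the normalizer ratio is a convex combination of the $\rho_b\sigma_b$, its logarithm also lies in $[-A,A]$. The difference of two numbers in $[-A,A]$ is at most $2A$ in absolute value, which is exactly \eqref{eq:iterate-multiplicative}.

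\emph{Step 3: the numerical consequence.} For $\eta\le1/32$, $\norm{\ut_i}_\infty\le1$ and $\norm{\ut_i-\vec u_i^{(t-1)}}_\infty\le2$, the bound is at most
\[
\tfrac1{16}+\tfrac{16/32}{7/8}=\tfrac1{16}+\tfrac47<\log 2,
\]
which gives the ratio in $[1/2,2]$.

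The step I expect to be the main obstacle is obtaining the constant $2$ rather than $4$ in front of $\eta\norm{\ut_i}_\infty$ in \eqref{eq:iterate-multiplicative}. A naive approach bounds the gradient ratio by \eqref{eq:gradient-multiplicative} and then the normalizer ratio by the same quantity, which doubles the constant. The remedy is the observation above: the common factor $(\Psi_i/c)^{2-c}$ cancels in the normalization, so only the scalar $a_c$ ratios matter, and each of those is confined to a symmetric window of width $\eta\norm{\ut_i}_\infty$. A secondary technical point is justifying the integration across points where a coordinate crosses zero. This I would handle via Lipschitz continuity of $\log a_c$ and the $C^1$ regularity of $\Psi_i$ from \Cref{lem:calculus}.
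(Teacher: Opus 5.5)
Your proposal is correct and follows the paper's proof essentially step by step. It uses the same factorization of $\log\partial_a\Psi_i$, the same $1$-Lipschitz bound on $\log a_c$, the same unnormalized-weight and convex-combination argument for the iterates (which, as you note, avoids paying the common factor twice), and the same numerical check $\tfrac1{16}+\tfrac47<\log 2$. The only cosmetic difference is in bounding $\log\Psi_i$: you do it infinitesimally via $\sum_b\partial_b\Psi_i\le\Psi_i/c$, whereas the paper obtains the same $\eta\norm{\vec u}_\infty/c$ bound from $f'/f\le1$ applied coordinatewise, together with monotonicity and homogeneity of the $\ell_{c-1}$ norm.
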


The first inequality gives multiplicative stability of the potential
weights $\partial_a\Psi_i$. The second shows that the played
probabilities inherit a corresponding stability bound, accounting for
both the cumulative-utility update and the optimistic correction. For
the Taylor argument, we only need
\eqref{eq:gradient-multiplicative}; the probability-ratio bound will be
used later in the movement analysis. Both statements are proved in
\Cref{app:calculus}.

Combining the pointwise curvature bound from \Cref{lem:calculus} with
the multiplicative stability of the gradient coordinates from \Cref{lem:multiplicative} gives the
finite-step Taylor estimate used throughout the analysis.

\begin{lemma}[One-step Taylor bound]\label{lem:taylor}
For every $\vec U,\vec u\in\R^{|\mathcal A_i|}$ with
$\norm{\vec u}_\infty\leq1$ and every $0<\eta\leq1/8$,
\[
    \Psi_i(\vec U+\eta\vec u)-\Psi_i(\vec U)
    \leq
    \eta\ip{\nabla\Psi_i(\vec U)}{\vec u}
    +
    \frac{2\eta^2}{3}
    \sum_a
       \partial_a\Psi_i(\vec U)\vec u[a]^2.
\]
\end{lemma}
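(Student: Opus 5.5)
We restrict $\Psi_i$ to the segment $\theta\mapsto\vec U+\theta\eta\vec u$, $\theta\in[0,1]$, and apply Taylor's theorem with integral remainder. Define $g(\theta)=\Psi_i(\vec U+\theta\eta\vec u)$. By \Cref{lem:calculus}, $\Psi_i$ is continuously differentiable with locally Lipschitz gradient. Hence $g'(\theta)=\eta\ip{\nabla\Psi_i(\vec U+\theta\eta\vec u)}{\vec u}$ is absolutely continuous on $[0,1]$, and its almost-everywhere derivative is $g''(\theta)=\eta^2\,\vec u^\top\nabla^2\Psi_i(\vec U+\theta\eta\vec u)\vec u$. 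The last sentence of \Cref{lem:calculus} bounds this directional second derivative almost everywhere along the segment. Integrating by parts, which is valid for absolutely continuous $g'$, gives
\[
    g(1)-g(0)=g'(0)+\int_0^1(1-\theta)\,g''(\theta)\,d\theta .
\]
Here $g'(0)=\eta\ip{\nabla\Psi_i(\vec U)}{\vec u}$, which is exactly the first-order term in the statement.

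For the remainder, we apply the weighted curvature bound of \Cref{lem:calculus} at each point $\vec U+\theta\eta\vec u$ where the directional second derivative exists. This gives
\[
    g''(\theta)\leq\eta^2\sum_a\partial_a\Psi_i(\vec U+\theta\eta\vec u)\,\vec u[a]^2 .
\]
Next we invoke \Cref{lem:multiplicative}, specifically \eqref{eq:gradient-multiplicative}, with step $\theta\eta$ in place of $\eta$. Together with $\norm{\vec u}_\infty\leq1$, it yields
\[
    \partial_a\Psi_i(\vec U+\theta\eta\vec u)\leq e^{2\theta\eta}\,\partial_a\Psi_i(\vec U)\leq e^{2\eta}\,\partial_a\Psi_i(\vec U).
\]
When $\theta=0$ this is trivial, so positive steps suffice. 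Since $\eta\leq1/8$, we have $e^{2\eta}\leq e^{1/4}<4/3$. Substituting, the remainder is at most
\[
    \frac{4\eta^2}{3}\int_0^1(1-\theta)\,d\theta\sum_a\partial_a\Psi_i(\vec U)\,\vec u[a]^2
    =\frac{2\eta^2}{3}\sum_a\partial_a\Psi_i(\vec U)\,\vec u[a]^2 ,
\]
which is the claimed bound.

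The only real obstacle is regularity. The Hessian fails to exist where some coordinate $\vec U[a]+\theta\eta\vec u[a]$ crosses zero, because $f''$ jumps at $0$. The Taylor formula must therefore be justified using the almost-everywhere statement in \Cref{lem:calculus} rather than pointwise twice-differentiability. Along the segment each coordinate is affine in $\theta$, so it crosses zero at most once unless it vanishes identically. Hence $g$ is piecewise $C^2$ with finitely many breakpoints, and $g'$ is continuous across them. The integration by parts then holds piece by piece and sums, and the breakpoints form a null set that does not affect the integral. A coordinate that is identically zero lies on one branch with $\vec u[a]=0$, so it contributes nothing. With this justification, the argument uses only results already stated.
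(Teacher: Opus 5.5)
Your proposal is correct and follows the paper's proof essentially step for step. You use the integral-form Taylor expansion along the segment $\vec U+\theta\eta\vec u$, the almost-everywhere weighted curvature bound from \Cref{lem:calculus}, the multiplicative-stability factor $e^{2\theta\eta}\le e^{1/4}<4/3$ from \eqref{eq:gradient-multiplicative}, and $\int_0^1(1-\theta)\,d\theta=1/2$; your piecewise-$C^2$ handling of the zero crossings is a slightly more explicit version of the paper's absolute-continuity argument.
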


The coefficient $2/3$ comes from combining the weighted curvature bound
with multiplicative stability along the update segment. By
\eqref{eq:gradient-multiplicative},
\[
    \partial_a\Psi_i(\vec U+\theta\eta\vec u)
      \leq
      \frac43\partial_a\Psi_i(\vec U),
    \qquad 0\leq\theta\leq1,
\]
when $\norm{\vec u}_\infty\leq1$ and $\eta\leq1/8$. Hence, the
integral Taylor remainder is at most
\[
    \frac{4\eta^2}{3}
    \int_0^1(1-\theta)\,d\theta
    \sum_a
       \partial_a\Psi_i(\vec U)\vec u[a]^2
    =
    \frac{2\eta^2}{3}
    \sum_a
       \partial_a\Psi_i(\vec U)\vec u[a]^2.
\]
Adding the first-order term proves \Cref{lem:taylor}. Taking
$\vec U=\eta\vec U_i^{(t)}$ and $\vec u=\ut_i$ recovers
\eqref{eq:taylor}. The full argument, including the regularity needed
when a coordinate crosses zero, is given in \Cref{app:calculus}.

\subsection{One-round and Cumulative Potential Bounds}
\label{sec:analysis-potential}

We now make the cancellation from \Cref{sec:optimism} rigorous. We first
derive a one-round potential inequality. This argument is local to a
single round and therefore remains valid when the safeguard changes the
learning rate between rounds. We then specialize to a fixed learning
rate and telescope the one-round inequalities to obtain the cumulative
potential bound used in the self-play analysis.

\begin{lemma}[One-round potential change]\label{lem:one-round}
Fix a player $i$ and a round $t$. Suppose that on this round the player
uses the \algname{} response \eqref{eq:gradient-rule} with some
$0<\eta\leq1/16$. After observing
$\nut_i\in[0,1]^{|\mathcal A_i|}$ and updating according to
\eqref{eq:centered} and \eqref{eq:cumulative},
\[
    \Psi_i(\eta\vec U_i^{(t+1)})
      -
    \Psi_i(\eta\vec U_i^{(t)})
    \leq
    2\eta^2
    \sum_a
       \partial_a\Psi_i(\eta\vec U_i^{(t)})
       \bigl(\ut_i[a]-\vec u_i^{(t-1)}[a]\bigr)^2
    -
    \eta^2
    \sum_a
       \partial_a\Psi_i(\eta\vec U_i^{(t)})
       \ut_i[a]^2.
\]
The bound is local to round $t$ and remains valid even if the learning rate changes across rounds.
\end{lemma}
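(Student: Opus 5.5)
The plan is to combine the one-step Taylor bound of \Cref{lem:taylor} with the centering identity $\ip{\xt_i}{\ut_i}=0$, exactly as in the heuristic derivation of \eqref{eq:one-round}. Throughout, write $w_a=\partial_a\Psi_i(\eta\vec U_i^{(t)})>0$ and $p_a=\vec u_i^{(t-1)}[a]$.

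\emph{Step 1 (Taylor).} Since $\vec U_i^{(t+1)}=\vec U_i^{(t)}+\ut_i$ with $\norm{\ut_i}_\infty\leq1$ and $\eta\leq1/16\leq1/8$, \Cref{lem:taylor} applies with $\vec U=\eta\vec U_i^{(t)}$ and $\vec u=\ut_i$. It gives
\[
\Psi_i(\eta\vec U_i^{(t+1)})-\Psi_i(\eta\vec U_i^{(t)})\leq\eta\sum_a w_a\ut_i[a]+\frac{2\eta^2}{3}\sum_a w_a\ut_i[a]^2 .
\]
Only the value of $\eta$ on this round enters. The learning rate used on other rounds plays no role, which gives the locality claim.

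\emph{Step 2 (centering identity).} By \eqref{eq:gradient-rule}, the strategy is $\xt_i[a]=w_a(1+4\eta p_a)/Z$, where $Z=\sum_b w_b(1+4\eta p_b)$. We have $Z>0$ because $1+4\eta p_b\geq 3/4$ and every $w_b>0$. Multiplying $0=\ip{\xt_i}{\ut_i}$ by $Z$ yields
\[
\sum_a w_a\ut_i[a]=-4\eta\sum_a w_a p_a\ut_i[a].
\]
Hence the linear term equals $-4\eta^2\sum_a w_a p_a\ut_i[a]$.

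\emph{Step 3 (pointwise inequality and collection).} Use the identity $-4p u = 2(u-p)^2-2u^2-2p^2$, which implies
\[
-4pu\leq 2(u-p)^2-2u^2 .
\]
Multiply by $\eta^2 w_a\geq0$, sum over $a$, and add the Taylor remainder. This gives
\[
2\eta^2\sum_a w_a(\ut_i[a]-p_a)^2-\Bigl(2-\tfrac23\Bigr)\eta^2\sum_a w_a\ut_i[a]^2 .
\]
Finally, since $\tfrac43\geq1$ and the weighted square is nonnegative, the coefficient can be weakened to $-\eta^2$.

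\emph{Main obstacle.} There is no serious one. The only points that need care are that $Z>0$, so the normalization is legitimate, and that the hypotheses of \Cref{lem:taylor} hold. The genuine analytic difficulty, namely the Taylor remainder across coordinate sign changes, is already contained in \Cref{lem:taylor}.
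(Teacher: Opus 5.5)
Your proposal is correct and follows the paper's proof essentially step for step. It uses the $3/4$ lower bound on the correction to justify the normalization, then the centering identity to turn the linear Taylor term into $-4\eta^2\sum_a w_a p_a\ut_i[a]$, and finally the pointwise identity $-4pu=2(u-p)^2-2u^2-2p^2$ together with \Cref{lem:taylor} to reach the coefficient $-\tfrac43\eta^2$, which is then weakened to $-\eta^2$.
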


\begin{proof}
Since $\norm{\vec u_i^{(t-1)}}_\infty\leq1$ and
$\eta\leq1/16$, every optimistic correction factor is at least $3/4$.
Together with the positivity of
$\partial_a\Psi_i$ from \Cref{lem:calculus}, this ensures that the
response is well defined.

By the centering identity in \eqref{eq:two-identities},
$\ip{\xt_i}{\ut_i}=0$. Substituting the normalized response and
clearing its positive normalizing denominator gives
\[
    0
      =
      \sum_a
         \partial_a\Psi_i(\eta\vec U_i^{(t)})\ut_i[a]
      +
      4\eta
      \sum_a
         \partial_a\Psi_i(\eta\vec U_i^{(t)})
         \vec u_i^{(t-1)}[a]\ut_i[a].
\]
Hence, the first-order term in \Cref{lem:taylor} satisfies
\[
    \eta
    \ip{\nabla\Psi_i(\eta\vec U_i^{(t)})}{\ut_i}
      =
      -4\eta^2
      \sum_a
         \partial_a\Psi_i(\eta\vec U_i^{(t)})
         \vec u_i^{(t-1)}[a]\ut_i[a].
\]
For every action $a$,
\[
    -4\vec u_i^{(t-1)}[a]\ut_i[a]
      =
      2\bigl(\ut_i[a]-\vec u_i^{(t-1)}[a]\bigr)^2
      -
      2\ut_i[a]^2
      -
      2\vec u_i^{(t-1)}[a]^2.
\]
Dropping the last, nonpositive contribution and applying
\Cref{lem:taylor} with input $\eta\vec U_i^{(t)}$ and direction
$\ut_i$ gives
\[
    \Psi_i(\eta\vec U_i^{(t+1)})
      -
    \Psi_i(\eta\vec U_i^{(t)})
    \leq
    2\eta^2
    \sum_a
       \partial_a\Psi_i(\eta\vec U_i^{(t)})
       \bigl(\ut_i[a]-\vec u_i^{(t-1)}[a]\bigr)^2
    -
    \frac{4\eta^2}{3}
    \sum_a
       \partial_a\Psi_i(\eta\vec U_i^{(t)})
       \ut_i[a]^2.
\]
Only the learning rate used on round $t$ enters the argument.
\end{proof}

This is the rigorous version of the one-round estimate
\eqref{eq:one-round} from the proof overview. We now fix the learning
rate so that the potential differences telescope across rounds.

\begin{lemma}[Potential telescope bound]\label{lem:budget}
For any fixed $0<\eta\leq1/16$, any sequence of utility vectors in
$[0,1]^{|\mathcal A_i|}$ observed after play, and every $T\geq1$,
\[
    \Psi_i(\eta\vec U_i^{(T+1)})
    +
    \eta^2
    \sum_{t=1}^T\sum_a
       \partial_a\Psi_i(\eta\vec U_i^{(t)})\ut_i[a]^2
    \leq
    3c
    +
    6\eta^2
    \sum_{t=1}^T
       \norm{\ut_i-\vec u_i^{(t-1)}}_\infty^2.
    \numberthis{eq:budget}
\]
\end{lemma}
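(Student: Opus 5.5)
Telescope \Cref{lem:one-round} over rounds $t=1,\ldots,T$; the fixed learning rate $\eta$ makes the potential inputs at consecutive rounds coincide. Summing gives
\[
    \Psi_i(\eta\vec U_i^{(T+1)})-\Psi_i(\eta\vec U_i^{(1)})
    \leq
    2\eta^2\sum_{t=1}^T\sum_a
       \partial_a\Psi_i(\eta\vec U_i^{(t)})
       \bigl(\ut_i[a]-\vec u_i^{(t-1)}[a]\bigr)^2
    -\eta^2\sum_{t=1}^T\sum_a
       \partial_a\Psi_i(\eta\vec U_i^{(t)})\ut_i[a]^2 .
\]
Moving the negative weighted square to the left-hand side produces exactly the second term on the left of \eqref{eq:budget}. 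Since $\vec U_i^{(1)}=\zero$, \Cref{lem:calculus} gives $\Psi_i(\eta\vec U_i^{(1)})=\Psi_i(\zero)<3c$. This term supplies the $3c$ on the right.

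It then remains to convert the gradient-weighted prediction errors into $\ell_\infty$ errors. For each round,
\[
    \sum_a\partial_a\Psi_i(\eta\vec U_i^{(t)})
       \bigl(\ut_i[a]-\vec u_i^{(t-1)}[a]\bigr)^2
    \leq
    \Bigl(\sum_a\partial_a\Psi_i(\eta\vec U_i^{(t)})\Bigr)
    \norm{\ut_i-\vec u_i^{(t-1)}}_\infty^2
    \leq
    3\norm{\ut_i-\vec u_i^{(t-1)}}_\infty^2 ,
\]
using the positivity of the partial derivatives and the total-mass bound $\sum_a\partial_a\Psi_i\leq3$ from \Cref{lem:calculus}. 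Multiplying by $2\eta^2$ gives the coefficient $6\eta^2$, which completes \eqref{eq:budget}.

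There is no real obstacle here. The only points to check are that the hypotheses of \Cref{lem:one-round} hold on every round: $0<\eta\leq1/16$, utilities lie in $[0,1]$, and the response is the \algname{} response with the same $\eta$. Including $t=1$ is also fine, because $\vec u_i^{(0)}=\zero$ by convention, so the $t=1$ prediction error is just $\norm{\vec u_i^{(1)}}_\infty^2$.
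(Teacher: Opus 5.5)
Your proof is correct and follows the paper's argument: telescope \Cref{lem:one-round} at the fixed rate, use $\Psi_i(\zero)<3c$, and convert the gradient-weighted prediction errors into $\ell_\infty$ errors via the total-mass bound $\sum_a\partial_a\Psi_i\leq3$. The only difference is cosmetic: the paper applies the mass bound round by round before summing, whereas you sum first.
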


\begin{proof}
Apply \Cref{lem:one-round} on each round and move the negative weighted
square to the left. By the total derivative-mass bound in
\Cref{lem:calculus},
\[
    \sum_a
       \partial_a\Psi_i(\eta\vec U_i^{(t)})
       \bigl(\ut_i[a]-\vec u_i^{(t-1)}[a]\bigr)^2
    \leq
    3\norm{\ut_i-\vec u_i^{(t-1)}}_\infty^2.
\]
Therefore,
\[
    \Psi_i(\eta\vec U_i^{(t+1)})
      -
    \Psi_i(\eta\vec U_i^{(t)})
    +
    \eta^2
    \sum_a
       \partial_a\Psi_i(\eta\vec U_i^{(t)})\ut_i[a]^2
    \leq
    6\eta^2
    \norm{\ut_i-\vec u_i^{(t-1)}}_\infty^2.
\]
Because the learning rate is fixed, summing over
$t=1,\ldots,T$ telescopes the potential terms,
\[
    \sum_{t=1}^T
    \left(
       \Psi_i(\eta\vec U_i^{(t+1)})
       -
       \Psi_i(\eta\vec U_i^{(t)})
    \right)
    =
    \Psi_i(\eta\vec U_i^{(T+1)})-\Psi_i(\zero).
\]
Using $\Psi_i(\zero)<3c$ from \Cref{lem:calculus} proves
\eqref{eq:budget}.
\end{proof}

Both terms on the left of \eqref{eq:budget} are nonnegative. In the
next step, the gradient-weighted square provides the control on strategy
movement needed for the self-play analysis.

\subsection{From Weighted Squares to Strategy Movement}
\label{sec:analysis-movement}

We now show that the gradient-weighted square in \eqref{eq:budget}
controls the movement of the player's strategy. Since \algname{} may
assign very small probability to some actions, we work in square-root
coordinates and measure movement by $\norm{\sqrt{\vx}-\sqrt{\vx'}}_2^2$.
As motivated in \Cref{sec:normalization}, the stability built into the
potential is precisely what allows the gradient-weighted square to
control this movement.

\begin{lemma}\label{lem:movement}
Suppose player $i$ uses \algname{} with a fixed rate
$0<\eta\leq1/16$ and receives any sequence of utility vectors in
$[0,1]^{|\mathcal A_i|}$ after play. For every $t\geq1$,
\[
    \norm{\sqrt{\vx_i^{(t+1)}}-\sqrt{\xt_i}}_2^2
    \leq
    4\eta^2\sum_a
       \partial_a\Psi_i(\eta\vec U_i^{(t)})\ut_i[a]^2
    +16\eta^2
       \norm{\ut_i-\vec u_i^{(t-1)}}_\infty^2.
    \numberthis{eq:movement}
\]
\end{lemma}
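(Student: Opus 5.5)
Write $\vec g=\nabla\Psi_i(\eta\vec U_i^{(t)})$ and $\vec g'=\nabla\Psi_i(\eta\vec U_i^{(t+1)})$. Let $\vec p=\one+4\eta\vec u_i^{(t-1)}$ and $\vec p'=\one+4\eta\ut_i$ be the optimistic factors. Then
\[
\xt_i=\frac{\vec g\odot\vec p}{\langle\vec g,\vec p\rangle},
\qquad
\vx_i^{(t+1)}=\frac{\vec g'\odot\vec p'}{\langle\vec g',\vec p'\rangle}.
\]
I would split the movement into two steps through an intermediate distribution $\vec y\propto\vec g'\odot\vec p$. The first step changes only the gradient weights; the second changes only the optimistic factor. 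By the triangle inequality for $\norm{\sqrt{\cdot}-\sqrt{\cdot}}_2$ and $(a+b)^2\le 2a^2+2b^2$, it suffices to show
\[
\norm{\sqrt{\vec y}-\sqrt{\xt_i}}_2^2\le 2\eta^2\sum_a g_a\ut_i[a]^2,
\qquad
\norm{\sqrt{\vx_i^{(t+1)}}-\sqrt{\vec y}}_2^2\le 8\eta^2\norm{\ut_i-\vec u_i^{(t-1)}}_\infty^2.
\]

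\textbf{Second step (the optimistic factor).} I would use a general fact about reweighting. If $\vec q'=\vec q\odot\vec r/\langle\vec q,\vec r\rangle$ with every ratio $r_a$ in $[e^{-\delta},e^{\delta}]$, then
\[
\norm{\sqrt{\vec q'}-\sqrt{\vec q}}_2^2=2-2\sum_a q_a\sqrt{r_a}\Big/\sqrt{\textstyle\sum_a q_a r_a}\le\delta^2/4\cdot O(1).
\]
To see this, write $\log r_a=\ell_a$. The Hellinger affinity is $\E_q[e^{\ell/2}]/\sqrt{\E_q[e^{\ell}]}$. This equals $\exp\!\big(K(1/2)-\tfrac12K(1)\big)$, where $K$ is the cumulant generating function of $\ell$ under $q$. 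Since $K$ is convex with $K(0)=0$ and $K''\le\mathrm{Var}\le\delta^2$, the standard bound gives $1-\text{affinity}\le\delta^2/8$. Here $r_a=p'_a/p_a$, and $|\log r_a|\le 4\eta\norm{\ut_i-\vec u_i^{(t-1)}}_\infty/(1-4\eta)$ because each factor is at least $3/4$. This gives the second bound with room to spare: $2\cdot\tfrac{(16/3)^2}{8}\eta^2\le 8\eta^2$.

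\textbf{First step (the gradient weights).} This is the main obstacle, because the crude bound $\delta=2\eta\norm{\ut_i}_\infty$ from \Cref{lem:multiplicative} gives $O(\eta^2)$ rather than the gradient-weighted square. The argument must be local, using \eqref{eq:compensation}.

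I would interpolate $\vec V_\theta=\eta\vec U_i^{(t)}+\theta\eta\ut_i$ for $\theta\in[0,1]$ and set $\vec y_\theta\propto\nabla\Psi_i(\vec V_\theta)\odot\vec p$. Because the common prefactor cancels, $\vec y_\theta[a]\propto a_c(\vec V_\theta[a])p_a$. Hence $\tfrac{d}{d\theta}\log \vec y_\theta[a]=\eta\lambda_a(\theta)\ut_i[a]-(\text{mean})$, where $\lambda_a=a_c'/a_c$, defined almost everywhere. The Fisher-information speed of the path then satisfies
\[
\Big\|\tfrac{d}{d\theta}\sqrt{\vec y_\theta}\Big\|_2^2\le\tfrac{\eta^2}{4}\sum_a\vec y_\theta[a]\lambda_a^2\ut_i[a]^2,
\]
since subtracting the mean only lowers the variance. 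Next I would write $\vec y_\theta[a]\le \tfrac43\,\partial_a\Psi_i(\vec V_\theta)\cdot\tfrac{p_a}{\langle\nabla\Psi_i(\vec V_\theta),\vec p\rangle}$ and bound $\langle\nabla\Psi_i,\vec p\rangle\ge\tfrac34\sum_b\partial_b\Psi_i$. Applying \eqref{eq:compensation}, which gives $\lambda_a^2\le 3\sum_b\partial_b\Psi_i$, the denominator cancels and each term is at most roughly $\tfrac{4}{3}\cdot\tfrac54\cdot 3\,\eta^2\,\partial_a\Psi_i(\vec V_\theta)\ut_i[a]^2/4$.

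Multiplicative stability from \Cref{lem:multiplicative} gives $\partial_a\Psi_i(\vec V_\theta)\le\tfrac43 g_a$. I would then use Cauchy--Schwarz along the path, $\norm{\sqrt{\vec y_1}-\sqrt{\vec y_0}}^2\le\int_0^1\|\tfrac{d}{d\theta}\sqrt{\vec y_\theta}\|^2d\theta$. The constants come out to about $2\eta^2\sum_a g_a\ut_i[a]^2$; if they are slightly off, the slack in the second step absorbs it. Absolute continuity of $\theta\mapsto\sqrt{\vec y_\theta}$ across coordinate sign changes follows from the local Lipschitz gradient in \Cref{lem:calculus}.
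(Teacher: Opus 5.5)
Your proposal is correct and is essentially the paper's proof. It uses the same two-leg decomposition through an intermediate distribution, combined by the triangle inequality. For the cumulative-input leg it uses the same normalized-path speed bound, together with \eqref{eq:compensation} and the $\tfrac43$ multiplicative-stability factor, giving $\tfrac53\eta^2\sum_a g_a\ut_i[a]^2\le2\eta^2\sum_a g_a\ut_i[a]^2$. The leading $\tfrac43$ in your display for $\vec y_\theta[a]$ should be dropped, since that relation is an equality; your own tally correctly counts it only once.

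The remaining differences are cosmetic. You pass through $\vec g'\odot\vec p$ instead of the paper's old-input/new-correction point $a_c(\eta\vec U_i^{(t)})\odot(\one+4\eta\ut_i)$. You also bound the correction leg by a cumulant-generating-function estimate on the Hellinger affinity rather than by integrating the path speed; this yields the same constant $\tfrac{64}{9}\eta^2$.
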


\begin{proof}
Fix $t$. Between $\xt_i$ and $\vx_i^{(t+1)}$, both the cumulative
input and the optimistic correction change. We separate these two
effects using the intermediate distribution
\[
    \frac{
       a_c(\eta\vec U_i^{(t)}[a])
       (1+4\eta\ut_i[a])
    }{
       \sum_b
       a_c(\eta\vec U_i^{(t)}[b])
       (1+4\eta\ut_i[b])
    }.
\]
This distribution keeps the old cumulative input
$\eta\vec U_i^{(t)}$ but uses the new correction $\ut_i$.

By \Cref{lem:correction-change}, changing only the correction contributes
at most
\[
    8\eta^2
    \norm{\ut_i-\vec u_i^{(t-1)}}_\infty^2
\]
to the squared distance between the square-root vectors. By
\Cref{lem:input-change}, changing the cumulative input from
$\eta\vec U_i^{(t)}$ to $\eta\vec U_i^{(t+1)}$ while keeping the
correction fixed contributes at most
\[
    2\eta^2
    \sum_a
       \partial_a\Psi_i(\eta\vec U_i^{(t)})\ut_i[a]^2.
\]
Both estimates are proved in \Cref{app:movement} by differentiating
normalized paths and integrating their square-root speeds.

Applying the triangle inequality to the square-root vectors and then
$(x+y)^2\leq2x^2+2y^2$ gives
\[
    \norm{\sqrt{\vx_i^{(t+1)}}-\sqrt{\xt_i}}_2^2
    \leq
    4\eta^2\sum_a
       \partial_a\Psi_i(\eta\vec U_i^{(t)})\ut_i[a]^2
    +16\eta^2
       \norm{\ut_i-\vec u_i^{(t-1)}}_\infty^2.
\]
This proves \eqref{eq:movement}. In particular, the same
gradient-weighted square that appears in the potential bound
\eqref{eq:budget} controls strategy movement.
\end{proof}

\subsection{The Potential-level RVU Inequality}
\label{sec:analysis-rvu}

We now combine the cumulative potential bound \eqref{eq:budget} with the
movement estimate \eqref{eq:movement}. The gradient-weighted square in
\eqref{eq:budget} provides exactly the budget needed to control
accumulated squared strategy movement.

\begin{proposition}[Potential-level RVU inequality]\label{prop:rvu}
For a fixed rate $0<\eta\leq1/16$, any sequence of utility vectors in
$[0,1]^{|\mathcal A_i|}$ observed after play, and every integer
$T\geq1$,
\[
    \Psi_i(\eta\vec U_i^{(T+1)})
    \leq
    3c
       +10\eta^2\sum_{t=1}^T
          \norm{\ut_i-\vec u_i^{(t-1)}}_\infty^2 - \frac14\sum_{t=2}^T
          \norm{\sqrt{\xt_i}-\sqrt{\vx_i^{(t-1)}}}_2^2.
\]
\end{proposition}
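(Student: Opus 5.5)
We combine \Cref{lem:budget} with \Cref{lem:movement}; the only freedom is how much of the weighted-square budget to spend on movement. The movement bound uses the index $t$ for the pair $(\xt_i,\vx_i^{(t+1)})$, while the proposition sums $\norm{\sqrt{\xt_i}-\sqrt{\vx_i^{(t-1)}}}_2^2$ over $t=2,\ldots,T$. We therefore first reindex: the movement terms in the proposition correspond to applying \eqref{eq:movement} at rounds $s=t-1\in\{1,\ldots,T-1\}$. Summing \eqref{eq:movement} over $s=1,\ldots,T-1$ and multiplying by $1/4$ gives
\[
    \frac14\sum_{t=2}^T\norm{\sqrt{\xt_i}-\sqrt{\vx_i^{(t-1)}}}_2^2
    \leq
    \eta^2\sum_{s=1}^{T-1}\sum_a\partial_a\Psi_i(\eta\vec U_i^{(s)})\ut[s]_i[a]^2
    +4\eta^2\sum_{s=1}^{T-1}\norm{\vec u_i^{(s)}-\vec u_i^{(s-1)}}_\infty^2 .
\]

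Next we bound the first sum on the right. Each summand is nonnegative by positivity of $\partial_a\Psi_i$ from \Cref{lem:calculus}, so we may extend the range to $s\le T$. The resulting sum is exactly the weighted square appearing on the left of \eqref{eq:budget}. From \eqref{eq:budget},
\[
    \eta^2\sum_{t=1}^T\sum_a\partial_a\Psi_i(\eta\vec U_i^{(t)})\ut_i[a]^2
    \leq
    3c+6\eta^2\sum_{t=1}^T\norm{\ut_i-\vec u_i^{(t-1)}}_\infty^2-\Psi_i(\eta\vec U_i^{(T+1)}).
\]
In the second sum on the right of the movement display, we likewise extend the range to $T$, which only increases it.

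Substituting these two bounds, the movement term is at most
\[
    3c-\Psi_i(\eta\vec U_i^{(T+1)})+10\eta^2\sum_{t=1}^T\norm{\ut_i-\vec u_i^{(t-1)}}_\infty^2 .
\]
Rearranging gives the proposition, with constant $6+4=10$ in front of the prediction-error sum. The factor $1/4$ is chosen precisely so that $\tfrac14\cdot4\eta^2=\eta^2$ matches the weighted square available in \eqref{eq:budget}.

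There is no real obstacle; the step requiring care is bookkeeping. The indices must align so that the movement at round $s$ is charged to the weighted square and prediction error at the same round $s$. This matters because \eqref{eq:movement} pairs $\vx_i^{(s+1)}$ with the weights at $\eta\vec U_i^{(s)}$ and the error $\vec u_i^{(s)}-\vec u_i^{(s-1)}$. With that alignment, enlarging the ranges of the nonnegative sums is legitimate.
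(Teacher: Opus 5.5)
Your proposal is correct and matches the paper's proof: sum \eqref{eq:movement} over rounds $1,\ldots,T-1$, divide by four, extend the nonnegative sums through round $T$, and absorb the weighted square with \eqref{eq:budget}, which gives the constant $6+4=10$. The paper handles $T=1$ separately, but your empty-sum reading of the movement bound covers that case as well.
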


\begin{proof}
Suppose first that $T\geq2$. Summing \eqref{eq:movement} over
$t=1,\ldots,T-1$, dividing by four, and extending the two nonnegative
sums on the right through round $T$ gives
\[
    \frac14\sum_{t=2}^T
       \norm{\sqrt{\xt_i}-\sqrt{\vx_i^{(t-1)}}}_2^2
    \leq
    \eta^2\sum_{t=1}^T\sum_a
       \partial_a\Psi_i(\eta\vec U_i^{(t)})\ut_i[a]^2
    +4\eta^2\sum_{t=1}^T
       \norm{\ut_i-\vec u_i^{(t-1)}}_\infty^2.
\]
Adding $\Psi_i(\eta\vec U_i^{(T+1)})$ and applying
\eqref{eq:budget} to the potential and weighted-square terms yields
\[
    \Psi_i(\eta\vec U_i^{(T+1)})
       +\frac14\sum_{t=2}^T
          \norm{\sqrt{\xt_i}-\sqrt{\vx_i^{(t-1)}}}_2^2
    \leq
    3c
       +10\eta^2\sum_{t=1}^T
          \norm{\ut_i-\vec u_i^{(t-1)}}_\infty^2.
\]
For $T=1$, the movement sum is empty, and the same conclusion follows
directly from \eqref{eq:budget}.
\end{proof}

This is the potential-level RVU inequality previewed in
\eqref{eq:rvu}. Its right-hand side depends only on the accumulated
squared prediction errors, while its left-hand side controls both the
terminal potential and accumulated squared strategy movement. The prediction error is expressed in terms of the centered utilities
$\ut_i$. Thus, it captures both changes in the utility vector $\nut_i$
and changes in the centering term $\ip{\xt_i}{\nut_i}$. 

No self-play assumption has been used so far. In the next subsection, we specialize to self-play in a fixed game and use the Hellinger prediction-error bound to close the argument.

\subsection{Prediction Error under Self-play in a Fixed Game}
\label{sec:analysis-game}

We now specialize to self-play in a fixed game. This is the first point
where the game structure enters the analysis. As discussed in
\Cref{sec:overview}, changes in the opponents' strategies change the
player's action-utility vector, while movement of the player's own
strategy changes the centering term. The following lemma controls both
effects by the players' squared Hellinger movement.

\begin{lemma}\label{lem:game}
In the model of \Cref{sec:game}, for every player $i$ and every $t\geq2$,
\[
    \norm{\ut_i-\vec u_i^{(t-1)}}_\infty^2
    \leq
    5\sum_{j=1}^n
       \norm{\sqrt{\xt_j}-\sqrt{\vx_j^{(t-1)}}}_2^2.
\]
\end{lemma}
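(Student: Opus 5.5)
We bound the change in the centered utility coordinate by coordinate. For $a\in\mathcal A_i$, write
\[
    \ut_i[a]-\vec u_i^{(t-1)}[a]
      =
      \bigl(\nut_i[a]-\vec\nu_i^{(t-1)}[a]\bigr)
      -
      \bigl(\ip{\xt_i}{\nut_i}-\ip{\vx_i^{(t-1)}}{\vec\nu_i^{(t-1)}}\bigr).
\]
The first bracket is controlled by the opponents' movement. The second is split further as
\[
    \ip{\xt_i}{\nut_i-\vec\nu_i^{(t-1)}}
    +
    \ip{\xt_i-\vx_i^{(t-1)}}{\vec\nu_i^{(t-1)}}.
\]
So, up to the inner product with $\xt_i$, the centered error equals the uncentered error, plus a term coming from player $i$'s own movement.

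\emph{Opponent terms.} We use the two Hellinger facts previewed in \Cref{sec:hellinger} and proved in \Cref{app:game}. First, for any $g$ with values in $[0,1]$ and distributions $P,Q$,
\[
    |\E_P g-\E_Q g|\leq\norm{\sqrt P-\sqrt Q}_2 .
\]
To see this, write $P-Q=(\sqrt P-\sqrt Q)(\sqrt P+\sqrt Q)$, recenter $g$ by $1/2$ so that $|g-1/2|\leq 1/2$, and apply Cauchy--Schwarz with $\norm{\sqrt P+\sqrt Q}_2\leq 2$. Second, squared Hellinger distance is subadditive over product factors. This follows because $1-\sum\sqrt{PQ}$ is the complement of a product of affinities $\prod_j(1-h_j)$, and $1-\prod_j(1-h_j)\leq\sum_j h_j$.

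Together these give
\[
    A\coloneqq\norm{\nut_i-\vec\nu_i^{(t-1)}}_\infty
    \leq
    \Bigl(\sum_{j\ne i}\norm{\sqrt{\xt_j}-\sqrt{\vx_j^{(t-1)}}}_2^2\Bigr)^{1/2}.
\]
Since $\xt_i$ is a probability vector, $|\ip{\xt_i}{\nut_i-\vec\nu_i^{(t-1)}}|\leq A$ as well.

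\emph{Own term.} The quantity $\ip{\xt_i-\vx_i^{(t-1)}}{\vec\nu_i^{(t-1)}}$ is a difference of expectations of the $[0,1]$-valued vector $\vec\nu_i^{(t-1)}$ under $\xt_i$ and $\vx_i^{(t-1)}$. The same expectation inequality bounds it by
\[
    B\coloneqq\norm{\sqrt{\xt_i}-\sqrt{\vx_i^{(t-1)}}}_2 .
\]

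\emph{Constant bookkeeping.} This is the main obstacle. The naive bound is $|\ut_i[a]-\vec u_i^{(t-1)}[a]|\leq 2A+B$, and
\[
    (2A+B)^2\leq 5(A^2+B^2)
\]
by Cauchy--Schwarz with weights $(2,1)$. Since $A^2+B^2\leq\sum_{j=1}^n\norm{\sqrt{\xt_j}-\sqrt{\vx_j^{(t-1)}}}_2^2$, this gives exactly the constant $5$.

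I would double-check that the first Hellinger inequality really holds with constant $1$, not $2$, for $[0,1]$-valued functions. The recentering trick is what delivers constant $1$, because $\norm{g-1/2}_\infty\leq1/2$ and $\norm{\sqrt P+\sqrt Q}_2\leq2$, so the two factors of $2$ cancel. If only the bound with constant $2$ were available, the $5$ would not follow. In that case I would instead combine the two $A$-terms, $\nut_i[a]-\vec\nu_i^{(t-1)}[a]-\ip{\xt_i}{\nut_i-\vec\nu_i^{(t-1)}}$, as a single centered functional before bounding. Taking the maximum over $a$ and squaring then proves the lemma for every $t\geq2$.
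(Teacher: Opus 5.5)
Your proof is correct and follows the paper's argument. It uses the same split into opponent-driven terms and the own-movement term $\ip{\xt_i-\vx_i^{(t-1)}}{\vec\nu_i^{(t-1)}}$, the same two Hellinger facts (expectation comparison with constant $1$ for $[0,1]$-valued functions, and subadditivity over products), and the same weighted Cauchy--Schwarz step $(2A+B)^2\leq5(A^2+B^2)$. The only cosmetic difference is in the opponent terms: the paper groups them as an $\xt_i$-average of changes in the payoff differences $\mathcal U_i(a,\cdot)-\mathcal U_i(b,\cdot)\in[-1,1]$ and applies the factor-$2$ bound once, which is exactly your stated fallback, whereas you bound each piece by $A$ separately; both routes give the identical $2A+B$.
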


\begin{proof}
Fix $i$ and $t\geq2$. We first control the change in action utilities
caused by the opponents. For any $a,b\in\mathcal A_i$, the function
\[
    \mathcal U_i(a,\vec s_{-i})-\mathcal U_i(b,\vec s_{-i})
\]
takes values in $[-1,1]$. Applying the expectation comparison from
\Cref{lem:expectation-comparison} to the opponents' current and previous
product distributions gives
\[
    \left|
       (\nut_i[a]-\nut_i[b])
       -
       (\vec\nu_i^{(t-1)}[a]-\vec\nu_i^{(t-1)}[b])
    \right|
    \leq
    2
    \left\|
       \sqrt{\bigotimes_{j\ne i}\xt_j}
       -
       \sqrt{\bigotimes_{j\ne i}\vx_j^{(t-1)}}
    \right\|_2.
\]
Since the same payoff function is evaluated on both rounds, the
fixed-game assumption applies. Using the product comparison from
\Cref{lem:product-comparison},
\[
    \left|
       (\nut_i[a]-\nut_i[b])
       -
       (\vec\nu_i^{(t-1)}[a]-\vec\nu_i^{(t-1)}[b])
    \right|
    \leq
    2
    \left(
       \sum_{j\ne i}
       \norm{\sqrt{\xt_j}-\sqrt{\vx_j^{(t-1)}}}_2^2
    \right)^{1/2}.
\]

We next account for centering. Adding and subtracting
$\ip{\xt_i}{\vec\nu_i^{(t-1)}}$ gives
\[
    \ut_i[a]-\vec u_i^{(t-1)}[a]
    =
    \nut_i[a]-\vec\nu_i^{(t-1)}[a]
    -
    \ip{\xt_i}{\nut_i-\vec\nu_i^{(t-1)}}
    -
    \ip{\xt_i-\vx_i^{(t-1)}}{\vec\nu_i^{(t-1)}}.
\]
The first three terms can be written as
\[
    \sum_b\xt_i[b]
    \left[
       (\nut_i[a]-\nut_i[b])
       -
       (\vec\nu_i^{(t-1)}[a]-\vec\nu_i^{(t-1)}[b])
    \right],
\]
so their absolute value is bounded by the preceding opponent-movement
estimate.

For the remaining centering term,
$\ip{\xt_i-\vx_i^{(t-1)}}{\one}=0$, so
\[
    \left|
       \ip{\xt_i-\vx_i^{(t-1)}}{\vec\nu_i^{(t-1)}}
    \right|
    \leq
    \frac12\norm{\xt_i-\vx_i^{(t-1)}}_1
    \leq
    \norm{\sqrt{\xt_i}-\sqrt{\vx_i^{(t-1)}}}_2.
\]
Combining the two contributions and maximizing over $a$ yields
\[
    \norm{\ut_i-\vec u_i^{(t-1)}}_\infty
    \leq
    2
    \left(
       \sum_{j\ne i}
       \norm{\sqrt{\xt_j}-\sqrt{\vx_j^{(t-1)}}}_2^2
    \right)^{1/2}
    +
    \norm{\sqrt{\xt_i}-\sqrt{\vx_i^{(t-1)}}}_2.
\]
Applying Cauchy--Schwarz to the two terms, with coefficients $2$ and
$1$, gives
\[
    \norm{\ut_i-\vec u_i^{(t-1)}}_\infty^2
    \leq
    5\sum_{j=1}^n
       \norm{\sqrt{\xt_j}-\sqrt{\vx_j^{(t-1)}}}_2^2.
\]
\end{proof}

On the first round, $\vec u_i^{(0)}=\zero$ and
$\norm{\vec u_i^{(1)}}_\infty\leq1$. Therefore,
\[
    \sum_{t=1}^T
       \norm{\ut_i-\vec u_i^{(t-1)}}_\infty^2
    \leq
    1
    +
    5\sum_{j=1}^n\sum_{t=2}^T
       \norm{\sqrt{\xt_j}-\sqrt{\vx_j^{(t-1)}}}_2^2.
\]
Substituting this bound into \Cref{prop:rvu} gives
\[
    \Psi_i(\eta\vec U_i^{(T+1)})
    +
    \frac14\sum_{t=2}^T
       \norm{\sqrt{\xt_i}-\sqrt{\vx_i^{(t-1)}}}_2^2
    \leq
    3c+10\eta^2
    +
    50\eta^2
    \sum_{j=1}^n\sum_{t=2}^T
       \norm{\sqrt{\xt_j}-\sqrt{\vx_j^{(t-1)}}}_2^2.
    \numberthis{eq:self-play}
\]

The product-distribution comparison is the key game-level step. It
controls the prediction error by a sum of squared strategy movements,
without an additional factor of $n$ inside the inequality. Thus, after
summing \eqref{eq:self-play} over players, only one factor of $n$
appears. This is what permits the learning rate to scale as
$\eta=O(1/\sqrt n)$.

\subsection{Uniform Movement and Regret under Self-play}
\label{sec:analysis-closure}

We now complete the self-play argument with
$\eta=1/(32\sqrt n)$. The final step has the same absorption structure
as in standard RVU analyses: after summing the playerwise inequalities,
the prediction-error term is controlled by the same accumulated movement
that appears on the left. Here, however, this movement is measured by
the squared Hellinger path length, 
\[
    \sum_{j=1}^n\sum_{t=2}^T
       \norm{\sqrt{\xt_j}-\sqrt{\vx_j^{(t-1)}}}_2^2.
\]
Our choice of learning rate leaves a positive coefficient on this term,
yielding a uniform bound on the accumulated squared movement. Feeding
this bound back into the playerwise inequality then controls each
terminal potential, and hence individual regret.

\begin{proposition}[Path Length]
\label{prop:closure}
Suppose every player uses \algname{} with
$\eta=1/(32\sqrt n)$ in the same fixed game. Then, for every
integer $T\geq1$,
\[
    \sum_{j=1}^n\sum_{t=2}^T
       \norm{\sqrt{\xt_j}-\sqrt{\vx_j^{(t-1)}}}_2^2
    \leq
    16nc,
\]
and, for every player $i$,
\[
    \Psi_i(\eta\vec U_i^{(T+1)})\leq4c.
\]
\end{proposition}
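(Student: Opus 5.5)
The plan is to sum the playerwise inequality \eqref{eq:self-play} over all players, absorb the accumulated movement, and then feed the resulting movement bound back into the inequality for a single player. Write
\[
    M_T \coloneqq \sum_{j=1}^n\sum_{t=2}^T
       \norm{\sqrt{\xt_j}-\sqrt{\vx_j^{(t-1)}}}_2^2 ,
\]
which is finite for every finite $T$. Since $\eta=1/(32\sqrt n)\leq1/32\leq1/16$, \Cref{prop:rvu} and \Cref{lem:game} apply, so \eqref{eq:self-play} holds for each $i$.

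\textbf{Step 1 (sum and absorb).} Summing \eqref{eq:self-play} over $i\in[n]$ and dropping the terminal potentials, which are nonnegative by \Cref{lem:calculus}, gives
\[
    \frac14 M_T\leq 3nc+10n\eta^2+50n\eta^2 M_T .
\]
The movement sums on the left add up to exactly $M_T$. On the right, each player contributes the same full sum $M_T$, which is where the single factor of $n$ arises. With $n\eta^2=1/1024$, the coefficient $\tfrac14-\tfrac{50}{1024}=\tfrac{103}{512}$ exceeds $\tfrac15$. Because $M_T<\infty$, we may move $50n\eta^2M_T$ to the left and divide, obtaining
\[
    M_T\leq 5\Bigl(3nc+\tfrac{10}{1024}\Bigr)\leq 15nc+\tfrac{50}{1024}.
\]
Since $c\geq2$, the additive constant is far below $nc$, so $M_T\leq16nc$.

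\textbf{Step 2 (single-player potential).} Return to \eqref{eq:self-play} for a fixed $i$ and drop its nonnegative movement term on the left. Insert $M_T\leq 5\bigl(3nc+10/1024\bigr)$ from Step 1 rather than the rounded $16nc$, to keep the constants sharp. This gives
\[
    \Psi_i(\eta\vec U_i^{(T+1)})\leq 3c+10\eta^2+50\eta^2M_T,
\]
and here $50\eta^2M_T=\tfrac{50}{1024n}M_T\leq\tfrac{250}{1024}\bigl(3c+\tfrac{10}{1024n}\bigr)$. The total extra term is therefore at most about $0.74c+0.01$. Using $c\geq2$, this is at most $c$, which yields $\Psi_i(\eta\vec U_i^{(T+1)})\leq4c$.

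\textbf{Main obstacle.} The only delicate part is the constant bookkeeping. The factor $50\eta^2 M_T$ must stay below $c$, and this requires the non-rounded bound on $M_T$ together with the relation $\eta^2=1/(1024n)$, so that the $n$ in $M_T$ cancels. With the rounded bound $16nc$ one would get $50\cdot16c/1024\approx0.78c$, which also suffices once $10\eta^2\leq 0.01$ is added, so either route closes. Two further points need care. The absorption step relies on $M_T$ being finite, which holds because $T$ is finite. The case $T=1$ is trivial: $M_1=0$, and \Cref{prop:rvu} directly gives $\Psi_i\leq3c+10\eta^2\leq4c$.
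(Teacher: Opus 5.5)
Your proposal is correct and matches the paper's proof: sum \eqref{eq:self-play} over the players, drop the positive terminal potentials, and absorb the movement term using $\frac14-50n\eta^2=\frac{103}{512}>\frac15$ to get the $16nc$ path-length bound. Then substitute that bound back into the single-player inequality. The one difference is that you plug in the unrounded movement bound, whereas the paper uses $16nc$ directly and obtains $\frac{61}{16}c<4c$; as you note, both routes close.
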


\begin{proof}
Fix a finite horizon $T$. Summing \eqref{eq:self-play} over
$i=1,\ldots,n$ and collecting the movement terms gives
\[
    \sum_{i=1}^n\Psi_i(\eta\vec U_i^{(T+1)})
    +
    \left(\frac14-50n\eta^2\right)
    \sum_{j=1}^n\sum_{t=2}^T
       \norm{\sqrt{\xt_j}-\sqrt{\vx_j^{(t-1)}}}_2^2
    \leq
    3nc+10n\eta^2.
\]
Since each potential is positive, we may omit the first term on the
left. Moreover,
\[
    n\eta^2=\frac1{1024},
    \qquad
    \frac14-50n\eta^2
      =
      \frac{103}{512}
      >
      \frac15.
\]
It follows that
\[
    \sum_{j=1}^n\sum_{t=2}^T
       \norm{\sqrt{\xt_j}-\sqrt{\vx_j^{(t-1)}}}_2^2
    \leq
    15nc+50n\eta^2
    \leq
    16nc.
\]

We now return to \eqref{eq:self-play} for a fixed player $i$. Using
the movement bound above and the nonnegativity of the player's movement
term,
\[
    \Psi_i(\eta\vec U_i^{(T+1)})
    \leq
    3c+10\eta^2+50\eta^2(16nc)
    =
    3c+10\eta^2+\frac{25}{32}c.
\]
Since $10\eta^2\leq10/1024<c/32$,
\[
    \Psi_i(\eta\vec U_i^{(T+1)})
    \leq
    \left(
       3+\frac1{32}+\frac{25}{32}
    \right)c
    =
    \frac{61}{16}c
    <
    4c.
\]
Both bounds therefore hold for every player and every finite horizon.
\end{proof}

The potential bound immediately yields the individual regret guarantee
from \Cref{thm:main}.

\begin{theorem}[Individual regret]
\label{thm:self-play}
Consider a fixed $n$-player finite game with at most $d\geq2$ actions
per player and utilities in $[0,1]$. Suppose every player uses
\algname{} (\Cref{alg:learning}) with $c=2+\log d$ and
$\eta=1/(32\sqrt n)$, and observes its exact expected-utility vector
$\nut_i$ after each simultaneous play. Then, for every player $i$ and
every integer $T\geq1$,
\[
    \Reg_i^{(T)}
    \leq
    96\sqrt n\,(2+\log d).
\]
The algorithm is deterministic, uses only the player's own past
observations, and does not require knowledge of the horizon.
\end{theorem}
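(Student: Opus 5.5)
The plan is to read the regret bound off the terminal-potential bound of \Cref{prop:closure}, using the regret certificate of \Cref{lem:calculus}. All of the analytic work has already been done: \Cref{lem:one-round}, \Cref{lem:budget}, \Cref{lem:movement}, \Cref{prop:rvu} and \Cref{lem:game} combine in \Cref{prop:closure} to give, for every player $i$ and every $T\geq1$,
\[
    \Psi_i(\eta\vec U_i^{(T+1)})\leq 4c,
    \qquad \eta=\frac1{32\sqrt n}.
\]
So the remaining argument is short, and I would carry it out in three steps.

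First, I would check that the hypotheses of \Cref{prop:closure} hold. Every player runs \Cref{alg:learning} with the fixed rate $\eta=1/(32\sqrt n)$. This rate satisfies $\eta\leq1/32\leq1/16$, which is the condition required by \Cref{prop:rvu} and the lemmas it relies on. The observations $\nut_i$ are exact expected-utility vectors in the fixed game, so they lie in $[0,1]^{|\mathcal A_i|}$ and \Cref{lem:game} applies for every $t\geq2$.

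Second, I would apply the certificate $\Psi_i(\vec U)\geq c+\max_a\vec U[a]$ from \Cref{lem:calculus} at the input $\vec U=\eta\vec U_i^{(T+1)}$. Combined with the identity $\Reg_i^{(T)}=\max_a\vec U_i^{(T+1)}[a]$ from \eqref{eq:two-identities}, this gives
\[
    \eta\Reg_i^{(T)}\leq\Psi_i(\eta\vec U_i^{(T+1)})-c\leq3c.
\]
Dividing by $\eta$ yields $\Reg_i^{(T)}\leq 3c\cdot32\sqrt n=96\sqrt n\,(2+\log d)$. This inequality remains valid, though trivially so, when the regret is negative.

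Third, I would record the qualitative claims. The update \eqref{eq:explicit-rule} is deterministic. It uses only $\vec U_i^{(t)}$ and $\vec u_i^{(t-1)}$, which are built from player $i$'s own past observations, together with the public parameters $d$ and $n$. The horizon $T$ never enters the rule, and since \Cref{prop:closure} holds for every finite $T$, the bound is uniform in $T$.

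I expect no step to be a genuine obstacle, because the difficulty was absorbed into \Cref{prop:closure}. The only point needing care is to confirm that the constants match: the bound $\Psi_i\leq4c$, rather than some larger multiple of $c$, is exactly what produces the factor $3c/\eta=96\sqrt n\,c$ in the statement.
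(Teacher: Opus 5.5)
Your proposal is correct and follows the paper's proof essentially verbatim: you combine the terminal bound $\Psi_i(\eta\vec U_i^{(T+1)})\leq4c$ from \Cref{prop:closure} with the certificate $\Psi_i(\vec U)\geq c+\max_a\vec U[a]$ of \Cref{lem:calculus} and the identity \eqref{eq:two-identities} to get $\eta\Reg_i^{(T)}\leq3c$, and you read the qualitative claims off the form of \eqref{eq:explicit-rule}. The paper's proof also notes the degenerate case $|\mathcal A_i|=1$, where $\ut_i=\zero$ on every round, but your argument already covers it.
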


\begin{proof}
By the potential certificate in \Cref{lem:calculus}, for every action
$a$,
\[
    c+\eta\vec U_i^{(T+1)}[a]
    \leq
    \Psi_i(\eta\vec U_i^{(T+1)})
    \leq
    4c.
\]
Hence,
\[
    \eta\Reg_i^{(T)}
      =
      \eta\max_a\vec U_i^{(T+1)}[a]
      \leq
      3c,
\]
where we used \eqref{eq:two-identities}. Substituting
$\eta=1/(32\sqrt n)$ and $c=2+\log d$ gives
\[
    \Reg_i^{(T)}
    \leq
    96\sqrt n\,(2+\log d).
\]

The response on round $t$ depends only on
$\vec U_i^{(t)}$, $\vec u_i^{(t-1)}$, and the fixed public parameters,
so it is deterministic, uncoupled, and independent of $T$. If
$|\mathcal A_i|=1$, the unique action is always played and
$\ut_i=\zero$ on every round.
\end{proof}

The factor $2+\log d$ comes from the potential scale $c$, while the $\sqrt n$ dependence comes from the choice $\eta=1/(32\sqrt n)$.

\subsection{The Equilibrium Guarantee}
\label{sec:analysis-equilibrium}

The individual regret bound immediately yields the corresponding
coarse-correlated-equilibrium guarantee for the average distribution of
play.

\begin{repeatcorollary}{cor:cce}[CCE]
Under the assumptions of \Cref{thm:self-play}, the distribution
\[
    \overline\sigma^{(T)}
      =
      \frac1T\sum_{t=1}^T
         \bigotimes_{j=1}^n\xt_j
\]
is a $96\sqrt n\,(2+\log d)/T$-coarse correlated equilibrium. Hence, no
player can improve its expected utility by more than
$96\sqrt n\,(2+\log d)/T$ by committing to any fixed action.
\end{repeatcorollary}

This completes the fixed-game self-play analysis. For arbitrary utility
sequences, the potential-level RVU inequality remains valid, but the
fixed-game prediction-error bound from \Cref{lem:game} no longer
applies. The separate adversarial guarantee obtained through the
learning-rate safeguard is proved in \Cref{app:safeguard}.

\section{Conclusion}

We introduced \algname{}, an uncoupled one-step optimistic learning rule
with $\mathcal{O}(\sqrt n\log d)$ individual regret in finite
general-sum self-play. Several questions remain open. The most basic is whether the
$\sqrt n$ dependence is optimal. More generally, sharp lower bounds for
uncoupled learning in general games are largely missing. It is also
natural to ask whether horizon-independent individual regret is possible
under substantially weaker feedback, such as bandit or noisy utility
observations. Finally, it would be interesting to understand whether
one-step multiplicative optimism can simultaneously yield constant regret and stronger
last-iterate convergence guarantees in structured classes of games.

\section*{Acknowledgments}

The authors thank Gabriele Farina for insightful discussions on regret
matching and multiplicative optimism, and especially for pointing out
the connection in \Cref{remark:OMWU} between our construction under entropic geometry and
Optimistic Multiplicative Weights Update.

\bibliographystyle{plainnat}
\bibliography{refs}

\clearpage

\appendix

\section{Related Work}\label{sec:related}

The study of learning in games has developed through several closely
related viewpoints. Fictitious play uses past play to predict an
opponent's future behavior, while no-regret learning evaluates a
player's cumulative performance against fixed actions in hindsight
\citep{robinson1951iterative,hannan1957approximation,fudenberg1998theory}.
Blackwell approachability gives a geometric way to study repeated
vector-payoff problems \citep{blackwell1956analog}.
\citet{abernethy2011blackwell} establish reductions between
approachability and no-regret learning, while
\citet{perchet2014approachability} develops further connections with
calibration. These links make cumulative advantage vectors and their
potentials natural objects in the analysis of learning in games.
Recent work strengthens this connection by giving reductions that
preserve quantitative convergence rates, so bounds proved in one
formulation can be transferred to another without losing their
dependence on the horizon and problem parameters
\citep{dann2025rate}.

The notion of regret determines which equilibrium deviations are
controlled. Correlated equilibrium, introduced by
\citet{aumann1974subjectivity}, allows a player to condition its
deviation on the action it is recommended. Calibration and stronger
notions of regret, such as internal and swap regret, provide learning
procedures for this equilibrium concept
\citep{foster1997calibrated,hart2000simple}. External regret controls
only deviations to a fixed action chosen independently of the
recommendation, and therefore leads to coarse correlated equilibrium
\citep{moulin1978strategically,blum2007external}. Our results concern external regret and CCE. 

Multiplicative weights provides a particularly clear bridge between
learning and computation. In zero-sum games, adaptive multiplicative
updates can be used to compute approximate equilibria
\citep{freund1999adaptive}, while closely related ideas also underlie
boosting and a broad range of optimization methods
\citep{freund1997decision,arora2012multiplicative}. The possibility of
learning faster in self-play than against an arbitrary sequence was
already studied in zero-sum games \citep{daskalakis2011near}. General games are more delicate because there is no common saddle-point
objective whose decrease simultaneously controls every player. Each
player must instead obtain its own regret guarantee from its own
observations. Our analysis reflects this separation. We maintain a
nonnegative potential for each player and combine the resulting
playerwise inequalities directly, without reducing the game to a single
global optimization problem.

Prediction-sensitive regret bounds provide another important ingredient.
\citet{chiang2012online} show that regret can improve when consecutive
loss functions vary gradually, while
\citet{rakhlin2013online,rakhlin2013optimization} develop optimistic
online learning around a prediction available before the next
observation. The prediction need not be the previous utility vector;
what matters in the regret bound is how accurately it predicts the
realized sequence. Related optimistic methods also exploit predictability in saddle-point
problems. For example, \citet{mertikopoulos2019optimistic} study
optimistic mirror descent with an extra-gradient step, which queries the
gradient again at an intermediate point before forming the next iterate.
Such oracle access is different from the simultaneous uncoupled setting
considered here, where a player chooses its strategy before observing
the current utility vector.

\citet{syrgkanis2015fast} connect recency-biased regularized learning
to faster rates in general games through RVU inequalities. Their bounds
couple utility variation with a negative strategy-movement term. This
coupling is especially useful in self-play because one player's
movement determines how much the utility vectors of the other players
can change. A related but distinct line of work studies what no-regret dynamics
imply for welfare. Under suitable smoothness conditions, welfare
guarantees extend from equilibrium to no-regret outcomes
\citep{roughgarden2015intrinsic}. \citet{foster2016learning} obtain
faster convergence results using approximate regret, where the
comparison with a fixed action is relaxed multiplicatively. 

The regret-matching line takes a different starting point from
regularized learning on the strategy simplex. Regret matching builds the
response directly from actionwise cumulative regret
\citep{hart2001general}, while potential-based analyses relate the
evolution of these cumulative quantities to regret guarantees
\citep{cesa2003potential}. This viewpoint also underlies
counterfactual regret minimization, which decomposes regret in
extensive-form games into local regret-minimization problems
\citep{zinkevich2007regret}. Several later variants improve practical
performance by giving recent observations more influence, including
discounted regret minimization and predictive regret matching
\citep{brown2019discounted,farina2021faster}. These two forms of recency are conceptually different. Discounting
changes the contribution of past observations to the accumulated
regret, whereas predictive methods retain the accumulated history and
use recent information to anticipate the next update. \algname{} follows
the latter principle. Its cumulative centered-utility vector keeps every
past observation with its original weight, while the preceding centered
utility enters separately through the multiplicative optimistic
correction.

Predictive Blackwell approachability provides a useful point of
comparison. \citet{farina2021faster} connect regret matching and regret
matching$^+$ with FTRL and mirror descent through an approachability
formulation, and derive predictive variants from this connection. In
those methods, the prediction enters additively by shifting the
cumulative regret before the response is computed. \algname{} uses the previous observation in a different way. The
cumulative state is left unchanged, and the prediction instead modifies
the action weights multiplicatively before normalization. This choice is
closely tied to our potential-based analysis. We construct the potential
so that the effect of the optimistic correction, the curvature of the
potential, and the resulting strategy movement can all be controlled
within the same argument. In this sense, multiplicative optimism and
the potential are designed together.

Normalization introduces another difficulty. A collection of positive
action weights can change only slightly in absolute terms while the
normalized strategy changes substantially, especially when the total
weight becomes small. This issue also appears in regret-matching
dynamics. \citet{farina2023regret} identify instability in regret
matching$^+$ and its predictive variant and study modifications that
restore stability, while \citet{zhang2025scale} develop scale-invariant
predictive regret matching with optimal average convergence in zero-sum
games and extensions to other structured settings. For our analysis, stability must hold even when some action weights
become very small. We therefore construct the potential so that the
relative sensitivity of these weights decreases together with their
total mass. This allows normalization to preserve the movement control
needed in the regret analysis without resetting or discounting the cumulative state.

Lifted regularization offers another route to individual regret.
LRL-OFTRL achieves logarithmic regret in general convex games by using
a nonnegative regret formulation and logarithmic regularization
\citep{farina2022near}. Cautious Optimism interprets acceleration
through dynamic pacing and introduces intrinsic Lipschitzness to
relate regularizer variation to the geometry of its divergence
\citep{soleymani2025cautious}. Its diverse-self-play and convex-game
guarantees have broader scope than our guarantees. We instead
construct one specific potential to improve the joint dependence on
$n$, $d$, and $T$. A player's ordinary regret may remain negative
throughout our proof. The positive quantity that survives summation in our proof 
is its potential.

The order of a prediction and the order of an analysis are distinct.
\citet{chen2020hedging} improve Optimistic Hedge in two-player games,
and \citet{daskalakis2021near} prove a multiplayer polylogarithmic
bound by analyzing higher-order discrete differences of the trajectory.
Their algorithm still uses a one-step predictor. ECHO-OFTRL and HOOD
instead incorporate higher-order filtered predictions into the updates
themselves \citep{liu2026constant,abbadi2026constant}. Our comparison
therefore concerns both the observed rates and the mechanisms used to
obtain them. The temporal error in our proof is only
$\ut_i-\vec u_i^{(t-1)}$. The additional control comes from the
potential and the Hellinger comparison.

The dependence on the number of players is similarly tied to the
specific comparison used in the proof. Standard game-level
$\ell_1$ estimates can introduce two factors of order $n$ after
squaring and summing, as in RVU and path-length analyses
\citep{syrgkanis2015fast,anagnostides2022last,anagnostides2022uncoupled,farina2022near,farina2021faster}.
Those estimates do not prove that linear player dependence is necessary
for every learning rule. Our potential controls squared Hellinger
movement, whose product inequality avoids one of the two factors that leads to $\sqrt{n}$ improvement.

Several nearby results obtain fast convergence under a different
protocol or for a different notion of regret. Clairvoyant MWU
\citep{piliouras2022beyond} starts from an implicit update in which the
next strategy is evaluated against the next strategy profile itself.
Its uncoupled implementation recovers a bounded-regret guarantee on a
sparse subsequence of the realized play, leading to fast convergence to
CCE. This is a different guarantee from ordinary external regret on the
full sequence of simultaneous rounds considered here.

A separate line of work strengthens the class of deviations against
which the learner competes. \citet{anagnostides2022uncoupled} obtain
$\mathcal{O}(\log T)$ swap regret in multiplayer games using optimistic
regularized learning and self-concordant barriers. Their analysis
controls a second-order path length of the dynamics and also retains an
$\mathcal{O}(\sqrt T)$ adversarial guarantee. More recently,
\citet{tsuchiya2026sublogarithmic} obtain sublogarithmic swap regret by
combining entropy and log-barrier regularization with a sensitivity
bound for the stationary distributions arising in the swap-regret
reduction. These guarantees imply convergence to correlated equilibrium,
which controls a richer class of deviations than the CCE guarantee
associated with external regret.

Recent work has also explored intermediate deviation classes between
external and swap regret. \citet{ahunbay2024first} study
first-order equilibrium notions in smooth games and characterize the
continuous strategy modifications controlled by projected gradient
ascent through tangent gradient fields. In normal-form games, these
guarantees include deviation classes richer than external regret and
lead to refinements such as semicoarse correlated equilibrium.
\citet{cai2025proximal} introduce proximal regret, which lies strictly
between external and swap regret, and show that ordinary gradient
descent already achieves sublinear proximal regret. The resulting
proximal correlated equilibria form a refinement of CCE.
\citet{soleymani2026exact} give a broader geometric characterization
for mirror descent, and FTRL. Their exact-form
framework identifies the deviation classes controlled by each method
through the geometry of the corresponding update, with proximal
deviations appearing as a special case.

Finally, fast self-play is useful as an individual learning prescription
only if a player also has protection when its opponents behave
differently. Adaptive wrappers already appear in fast optimistic
learning \citep{syrgkanis2015fast,daskalakis2021near}. HOOD also
provides an adversarial extension that permanently switches to another
learner after a threshold violation
\citep[Appendix~D]{abbadi2026constant}. Our distinction is the
rate-only safeguard. A violation
of the potential threshold decreases only the scalar learning rate.
No cumulative observation is reset, and the update remains within the
same algorithm albeit with a smaller learning rate. The threshold is never violated in the
self-play, so the protected rule preserves every iterate of the constant-regret trajectory in the self-play.

\section{Potential Geometry and the Taylor Bound}\label{app:calculus}

This appendix proves the analytic properties of the potential used in
\Cref{sec:analysis-geometry}. We proceed from the scalar function to the
geometry of the aggregated potential, then to multiplicative stability
and the finite-step Taylor bound.

Throughout the proofs, once a player $i$ is fixed, we suppress its index
on $\Psi_i$ and on the game vectors. All sums over actions are over
$\mathcal A_i$. The vector $\vec U$ denotes an arbitrary input to the
potential; along the learning trajectory, this input is
$\eta\vec U^{(t)}$.

\subsection{Scalar Properties, Derivatives, and Curvature}
\label{app:calculus-geometry}

\begin{repeatlemma}{lem:calculus}
Fix a player $i$ and let $c=2+\log d$. The potential $\Psi_i$ is
convex and continuously differentiable on $\R^{|\mathcal A_i|}$, with
locally Lipschitz gradient. For every $\vec U$,
\[
    \Psi_i(\vec U)>0,
    \qquad
    \Psi_i(\zero)
      =c|\mathcal A_i|^{1/(c-1)}
      <3c,
    \qquad
    \Psi_i(\vec U)
      \geq c+\max_a\vec U[a].
\]
Moreover,
\[
    \partial_a\Psi_i(\vec U)
      =
      \left(\frac{\Psi_i(\vec U)}c\right)^{2-c}
      a_c(\vec U[a])
      >0,
    \qquad
    \sum_a\partial_a\Psi_i(\vec U)\leq3.
\]
Where the Hessian exists,
\[
    0
      \preceq
      \nabla^2\Psi_i(\vec U)
      \preceq
      \diag\bigl(\partial_a\Psi_i(\vec U)\bigr).
\]
The same directional second-derivative bound holds almost everywhere
along every line segment.
\end{repeatlemma}

\begin{proof}
\emph{Scalar properties.}
Recall that
\[
    f(z)=
    \begin{cases}
       (1-z)^{-1},&z\leq0,\\
       1+z,&z\geq0,
    \end{cases}
    \qquad
    f'(z)=
    \begin{cases}
       (1-z)^{-2},&z\leq0,\\
       1,&z\geq0.
    \end{cases}
\]
The two branches agree in value and first derivative at zero. Hence,
$f$ is positive and continuously differentiable, with
$0<f'(z)\leq1$. Away from zero,
\[
    f''(z)=
    \begin{cases}
       2(1-z)^{-3},&z<0,\\
       0,&z>0.
    \end{cases}
\]
Thus, $f$ is convex and $f'$ is globally $2$-Lipschitz. We will use
the identity
\[
    f'(z)=\min\{f(z)^2,1\}.
\]
We will also use the lower bound $f(z)\geq1+z$. It is immediate for
$z\geq0$, while for $z\leq0$,
\[
    f(z)-(1+z)
      =
      \frac{z^2}{1-z}
      \geq0.
\]

\emph{Convexity and the potential certificate.}
Since $c-1>1$, the $\ell_{c-1}$ norm is convex and coordinatewise
nondecreasing on the nonnegative orthant. For
$0\leq\alpha\leq1$, convexity of $f$ gives, coordinatewise,
\[
    f\!\left(
       \frac{\alpha\vec U[a]+(1-\alpha)\vec V[a]}c
    \right)
    \leq
    \alpha f\!\left(\frac{\vec U[a]}c\right)
    +(1-\alpha)f\!\left(\frac{\vec V[a]}c\right).
\]
Monotonicity and convexity of the norm therefore imply
\[
    \Psi(\alpha\vec U+(1-\alpha)\vec V)
    \leq
    \alpha\Psi(\vec U)+(1-\alpha)\Psi(\vec V),
\]
so $\Psi$ is convex.

Since the norm dominates each of its coordinates, for every action
$a$,
\[
    \Psi(\vec U)
      \geq
      c f\!\left(\frac{\vec U[a]}c\right)
      \geq
      c+\vec U[a].
\]
Taking the maximum over $a$ gives the stated lower bound. Moreover,
$f(0)=1$, so
\[
    \Psi(\zero)
      =
      c|\mathcal A_i|^{1/(c-1)}.
\]
Since $|\mathcal A_i|\leq d$ and $c-1=1+\log d$,
\[
    |\mathcal A_i|^{1/(c-1)}
    \leq
    d^{1/(1+\log d)}
    =
    \exp\!\left(\frac{\log d}{1+\log d}\right)
    <e<3.
\]
Hence $\Psi(\zero)<3c$. Positivity follows immediately from $f>0$.

\emph{Gradient and total derivative mass.}
Direct differentiation of the potential gives
\[
    \partial_a\Psi(\vec U)
      =
      \left(\frac{\Psi(\vec U)}c\right)^{2-c}
      f\!\left(\frac{\vec U[a]}c\right)^{c-2}
      f'\!\left(\frac{\vec U[a]}c\right).
\]
Using $f'(z)=\min\{f(z)^2,1\}$, the action-dependent factor is
\[
    \begin{cases}
       \left(1-\frac{\vec U[a]}c\right)^{-c},
          &\vec U[a]\leq0,\\
       \left(1+\frac{\vec U[a]}c\right)^{c-2},
          &\vec U[a]\geq0.
    \end{cases}
\]
This is exactly $a_c(\vec U[a])$, and therefore
\[
    \partial_a\Psi(\vec U)
      =
      \left(\frac{\Psi(\vec U)}c\right)^{2-c}
      a_c(\vec U[a])
      >0.
\]

To bound the total derivative mass, use $f'\leq1$ and H\"older's
inequality,
\[
    \sum_a
       f\!\left(\frac{\vec U[a]}c\right)^{c-2}
    \leq
    \left(
       \sum_a
       f\!\left(\frac{\vec U[a]}c\right)^{c-1}
    \right)^{(c-2)/(c-1)}
    |\mathcal A_i|^{1/(c-1)}.
\]
Since the power of the sum equals
$(\Psi(\vec U)/c)^{c-2}$, we obtain
\[
    \sum_a\partial_a\Psi(\vec U)
    \leq
    |\mathcal A_i|^{1/(c-1)}
    <3.
\]

\emph{Weighted curvature.}
For $z\ne0$,
\[
    \frac{a_c'(z)}{a_c(z)}
      =
      \begin{cases}
         (1-z/c)^{-1},&z<0,\\
         (c-2)/(c+z),&z>0.
      \end{cases}
\]
Both branches belong to $[0,1]$. At an input with no zero coordinates,
differentiating the gradient formula gives
\[
    \partial_b\partial_a\Psi(\vec U)
      =
      \mathbf{1}_{\{a=b\}}
      \partial_a\Psi(\vec U)
      \frac{a_c'(\vec U[a])}{a_c(\vec U[a])}
      -
      \frac{c-2}{\Psi(\vec U)}
      \partial_a\Psi(\vec U)\partial_b\Psi(\vec U).
\]
Equivalently,
\[
    \nabla^2\Psi(\vec U)
      =
      \diag\!\left(
         \partial_a\Psi(\vec U)
         \frac{a_c'(\vec U[a])}{a_c(\vec U[a])}
      \right)
      -
      \frac{c-2}{\Psi(\vec U)}
      \nabla\Psi(\vec U)\nabla\Psi(\vec U)^\top.
\]
The second term is positive semidefinite before subtraction. Hence, for
every $\vec v$,
\[
    \vec v^\top\nabla^2\Psi(\vec U)\vec v
    \leq
    \sum_a
       \partial_a\Psi(\vec U)
       \frac{a_c'(\vec U[a])}{a_c(\vec U[a])}
       \vec v[a]^2
    \leq
    \sum_a
       \partial_a\Psi(\vec U)\vec v[a]^2.
\]
Since $\Psi$ is convex, its Hessian is positive semidefinite wherever
it exists. This proves
\[
    0
    \preceq
    \nabla^2\Psi(\vec U)
    \preceq
    \diag\bigl(\partial_a\Psi(\vec U)\bigr)
\]
at every point where the Hessian exists.

\emph{Regularity at zero coordinates.}
It remains to justify the regularity needed when a coordinate crosses
the junction at zero. The scalar function $f$ is continuously
differentiable with Lipschitz derivative. On every compact set of
inputs, the values $f(\vec U[a]/c)$ are uniformly bounded away from
zero. Since the $\ell_{c-1}$ norm is smooth on the strictly positive
orthant, the gradient formula above is locally Lipschitz on all of
$\R^{|\mathcal A_i|}$.

Now fix a line $\vec U+s\vec v$. Every nonconstant coordinate crosses
zero at most once. Away from these finitely many crossing points, the
Hessian exists and the preceding quadratic-form bound applies. A
coordinate that is identically zero has $\vec v[a]=0$ and does not
contribute to the directional second derivative. Thus, almost
everywhere along the line,
\[
    0
    \leq
    \frac{d^2}{ds^2}\Psi(\vec U+s\vec v)
    \leq
    \sum_a
       \partial_a\Psi(\vec U+s\vec v)\vec v[a]^2.
\]
Finally, local Lipschitz continuity of $\nabla\Psi$ makes the first
derivative along the line absolutely continuous on compact intervals,
so this bound can be integrated across the zero crossings. No Hessian
value at a crossing is required.
\end{proof}

\subsection{Multiplicative Stability}
\label{app:calculus-multiplicative}

We next prove the multiplicative stability bounds used in the Taylor and
movement analyses. The first controls the gradient coordinates under a
finite change of the potential input. The second transfers this control
to the normalized probabilities of \algname{}.

\begin{repeatlemma}{lem:multiplicative}[Multiplicative stability]
For every $\vec U,\vec u\in\R^{|\mathcal A_i|}$, every $\eta>0$,
and every action $a\in\mathcal A_i$,
\[
    \left|
       \log
       \frac{\partial_a\Psi_i(\vec U+\eta\vec u)}
            {\partial_a\Psi_i(\vec U)}
    \right|
    \leq
    2\eta\norm{\vec u}_\infty.
\]
For \algname{} with a fixed rate $0<\eta\leq1/16$ and any sequence of
utility vectors in $[0,1]^{|\mathcal A_i|}$ observed after play,
\[
    \left|
       \log
       \frac{\vx_i^{(t+1)}[a]}{\xt_i[a]}
    \right|
    \leq
    2\eta\norm{\ut_i}_\infty
    +
    \frac{8\eta}{1-4\eta}
       \norm{\ut_i-\vec u_i^{(t-1)}}_\infty.
\]
In particular, if $\eta\leq1/32$, then
$\vx_i^{(t+1)}[a]/\xt_i[a]\in[1/2,2]$ for every action $a$.
\end{repeatlemma}

\begin{proof}
We first prove \eqref{eq:gradient-multiplicative}. By
\eqref{eq:weight-log-derivative}, the logarithmic derivative of $a_c$
has absolute value at most one on both open branches. Since the
branches of $\log a_c$ agree at zero, $\log a_c$ is globally
$1$-Lipschitz. Hence,
\[
    \left|
       \log
       \frac{a_c(\vec U[a]+\eta\vec u[a])}
            {a_c(\vec U[a])}
    \right|
    \leq
    \eta|\vec u[a]|.
\]

From the scalar formulas in the preceding subsection,
$0<f'(z)/f(z)\leq1$. Thus, for every action $a$,
\[
    e^{-\eta\norm{\vec u}_\infty/c}
    f\!\left(\frac{\vec U[a]}c\right)
    \leq
    f\!\left(\frac{\vec U[a]+\eta\vec u[a]}c\right)
    \leq
    e^{\eta\norm{\vec u}_\infty/c}
    f\!\left(\frac{\vec U[a]}c\right).
\]
Applying the coordinatewise monotonicity and homogeneity of the
$\ell_{c-1}$ norm gives
\[
    \left|
       \log
       \frac{\Psi(\vec U+\eta\vec u)}
            {\Psi(\vec U)}
    \right|
    \leq
    \frac{\eta}{c}\norm{\vec u}_\infty.
\]
Using the gradient factorization,
\[
    \log
    \frac{\partial_a\Psi(\vec U+\eta\vec u)}
         {\partial_a\Psi(\vec U)}
    =
    (2-c)
    \log
    \frac{\Psi(\vec U+\eta\vec u)}
         {\Psi(\vec U)}
    +
    \log
    \frac{a_c(\vec U[a]+\eta\vec u[a])}
         {a_c(\vec U[a])}.
\]
Therefore,
\[
    \left|
       \log
       \frac{\partial_a\Psi(\vec U+\eta\vec u)}
            {\partial_a\Psi(\vec U)}
    \right|
    \leq
    \left(
       \frac{c-2}{c}+1
    \right)
    \eta\norm{\vec u}_\infty
    \leq
    2\eta\norm{\vec u}_\infty,
\]
which proves \eqref{eq:gradient-multiplicative}.

We next prove the probability-ratio bound. Let
\[
    q_t[a]
      \coloneqq
      a_c(\eta\vec U^{(t)}[a])
      \bigl(1+4\eta\vec u^{(t-1)}[a]\bigr)
\]
denote the unnormalized weight on round $t$. Since
$\norm{\vec u^{(t-1)}}_\infty\leq1$ and $\eta\leq1/16$, all correction
factors are positive.

The cumulative update
$\vec U^{(t+1)}=\vec U^{(t)}+\ut$ and the Lipschitz bound for
$\log a_c$ give
\[
    \left|
       \log
       \frac{a_c(\eta\vec U^{(t+1)}[a])}
            {a_c(\eta\vec U^{(t)}[a])}
    \right|
    \leq
    \eta\norm{\ut}_\infty.
\]
Moreover, the derivative of $s\mapsto\log(1+4\eta s)$ on $[-1,1]$ is
at most $4\eta/(1-4\eta)$, so
\[
    \left|
       \log
       \frac{1+4\eta\ut[a]}
            {1+4\eta\vec u^{(t-1)}[a]}
    \right|
    \leq
    \frac{4\eta}{1-4\eta}
    \norm{\ut-\vec u^{(t-1)}}_\infty.
\]
Hence, if
\[
    B_t
      =
      \eta\norm{\ut}_\infty
      +
      \frac{4\eta}{1-4\eta}
      \norm{\ut-\vec u^{(t-1)}}_\infty,
\]
then every unnormalized weight satisfies
\[
    e^{-B_t}
    \leq
    \frac{q_{t+1}[a]}{q_t[a]}
    \leq
    e^{B_t}.
\]

Let $Q_t=\sum_a q_t[a]$. Since
\[
    \frac{Q_{t+1}}{Q_t}
      =
      \sum_a
         \xt[a]\frac{q_{t+1}[a]}{q_t[a]},
\]
the normalization ratio also belongs to $[e^{-B_t},e^{B_t}]$.
Consequently,
\[
    \left|
       \log
       \frac{\vx^{(t+1)}[a]}{\xt[a]}
    \right|
    \leq
    2B_t,
\]
which is exactly \eqref{eq:iterate-multiplicative}.

Finally, if $\eta\leq1/32$, then
$\norm{\ut}_\infty\leq1$ and
$\norm{\ut-\vec u^{(t-1)}}_\infty\leq2$. Thus,
\[
    2B_t
    \leq
    2\eta+\frac{16\eta}{1-4\eta}
    \leq
    \frac1{16}+\frac47
    =
    \frac{71}{112}
    <
    \log2.
\]
Exponentiating gives
$\vx^{(t+1)}[a]/\xt[a]\in[1/2,2]$ for every action $a$.
\end{proof}

\subsection{The Finite-step Taylor Estimate}
\label{app:calculus-taylor}

We now combine the weighted curvature bound with multiplicative
stability to obtain the finite-step Taylor estimate used in the main
analysis.

\begin{repeatlemma}{lem:taylor}
For every $\vec U,\vec u\in\R^{|\mathcal A_i|}$ with
$\norm{\vec u}_\infty\leq1$ and every $0<\eta\leq1/8$,
\[
    \Psi_i(\vec U+\eta\vec u)-\Psi_i(\vec U)
    \leq
    \eta\ip{\nabla\Psi_i(\vec U)}{\vec u}
    +
    \frac{2\eta^2}{3}
    \sum_a
       \partial_a\Psi_i(\vec U)\vec u[a]^2.
\]
\end{repeatlemma}

\begin{proof}
Consider the segment from $\vec U$ to $\vec U+\eta\vec u$,
parameterized by $\vec U+\theta\eta\vec u$ for
$0\leq\theta\leq1$. By the chain rule,
\[
    \frac{d}{d\theta}
       \Psi(\vec U+\theta\eta\vec u)
    =
    \eta
    \ip{\nabla\Psi(\vec U+\theta\eta\vec u)}{\vec u}.
\]
Where the Hessian exists, differentiating once more gives
\[
    \frac{d^2}{d\theta^2}
       \Psi(\vec U+\theta\eta\vec u)
    =
    \eta^2
    \vec u^\top
       \nabla^2\Psi(\vec U+\theta\eta\vec u)
    \vec u.
\]
The weighted curvature bound from \Cref{lem:calculus} therefore implies,
almost everywhere along the segment,
\[
    \frac{d^2}{d\theta^2}
       \Psi(\vec U+\theta\eta\vec u)
    \leq
    \eta^2
    \sum_a
       \partial_a\Psi(\vec U+\theta\eta\vec u)
       \vec u[a]^2.
\]
At this point the weights are evaluated at the intermediate input
$\vec U+\theta\eta\vec u$, whereas the desired Taylor bound uses the
weights at the initial input $\vec U$. This is where multiplicative
stability enters. By \eqref{eq:gradient-multiplicative},
\[
    \partial_a\Psi(\vec U+\theta\eta\vec u)
    \leq
    e^{2\theta\eta\norm{\vec u}_\infty}
    \partial_a\Psi(\vec U)
    \leq
    \frac43\partial_a\Psi(\vec U),
    \qquad
    0\leq\theta\leq1,
    \numberthis{eq:segment}
\]
where the last inequality follows from
$\norm{\vec u}_\infty\leq1$, $\eta\leq1/8$, and
$e^{1/4}<4/3$. Hence, throughout the segment,
\[
    \frac{d^2}{d\theta^2}
       \Psi(\vec U+\theta\eta\vec u)
    \leq
    \frac{4\eta^2}{3}
    \sum_a
       \partial_a\Psi(\vec U)\vec u[a]^2
\]
almost everywhere.

We now integrate this second-order bound along the segment. The
integral form of Taylor's theorem gives
\[
    \Psi(\vec U+\eta\vec u)-\Psi(\vec U)
      -\eta\ip{\nabla\Psi(\vec U)}{\vec u}
    =
    \int_0^1
       (1-\theta)
       \frac{d^2}{d\theta^2}
          \Psi(\vec U+\theta\eta\vec u)
    \,d\theta.
\]
Using the preceding bound and
$\int_0^1(1-\theta)\,d\theta=1/2$, we obtain
\[
    \Psi(\vec U+\eta\vec u)-\Psi(\vec U)
      -\eta\ip{\nabla\Psi(\vec U)}{\vec u}
    \leq
    \frac{2\eta^2}{3}
    \sum_a
       \partial_a\Psi(\vec U)\vec u[a]^2.
\]
Rearranging proves the lemma.

The argument remains valid when the segment crosses zero coordinates.
Indeed, \Cref{lem:calculus} shows that the first derivative along the
segment is absolutely continuous and that the weighted
second-derivative bound holds almost everywhere, which is sufficient
for the integral Taylor formula above. Taking
$\vec U=\eta\vec U^{(t)}$ and $\vec u=\ut$ gives the Taylor estimate
used in the round-by-round analysis.
\end{proof}

\section{Stability of the Normalized Response}\label{app:movement}

This appendix proves the response-stability estimates used in
\Cref{sec:analysis-movement}. We first establish
\eqref{eq:compensation}, which relates the total derivative mass
$\sum_a\partial_a\Psi(\vec U)$ to the logarithmic sensitivity of the
power weights $a_c(\vec U[a])$. We then derive a differentiation
identity for normalized responses and apply it to the two changes
between consecutive rounds. One path changes the optimistic correction
while keeping the cumulative input fixed, and the other changes the
cumulative input while keeping the correction fixed. The two paths meet
at the intermediate distribution used in the proof of
\Cref{lem:movement}.

Fix a player $i$ and suppress its index throughout the proofs. All sums
over actions are over $\mathcal A_i$, and square roots of probability
vectors are taken coordinatewise. In the path arguments below,
$\theta\in[0,1]$ parametrizes an interpolation between two responses.

\subsection{Compensation for Normalization}
\label{app:movement-compensation}

In \Cref{sec:potential-scalar}, the negative branch of $f$ was chosen
so that the squared relative sensitivity of a small weight decreases
together with the weight itself. After aggregation, normalization
depends on the total derivative mass
$\sum_a\partial_a\Psi(\vec U)$. We now prove the corresponding relation
for the gradient weights of $\Psi$ and the power weights $a_c$.

\begin{lemma}\label{lem:compensation}
For every input $\vec U\in\R^{|\mathcal A_i|}$,
\[
    \sum_a\partial_a\Psi(\vec U)
      \geq
      \frac13
      \min\!\left\{
         \left(\frac{\Psi(\vec U)}c\right)^2,1
      \right\}.
    \numberthis{eq:mass-lower}
\]
For each action $a$ with $\vec U[a]\ne0$,
\[
    \left(
       \frac{a_c'(\vec U[a])}{a_c(\vec U[a])}
    \right)^2
      \leq
      \min\!\left\{
         \left(\frac{\Psi(\vec U)}c\right)^2,1
      \right\}.
    \numberthis{eq:sensitivity}
\]
Consequently,
\[
    \frac{
       \bigl(a_c'(\vec U[a])/a_c(\vec U[a])\bigr)^2
    }{
       \sum_b\partial_b\Psi(\vec U)
    }
    \leq3
    \quad\text{where the derivatives exist}.
\]
\end{lemma}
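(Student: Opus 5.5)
Write $P=\Psi(\vec U)/c$ and $S=\sum_a f(\vec U[a]/c)^{c-1}$, so that $P=S^{1/(c-1)}$. The gradient formula from \Cref{lem:calculus} gives
\[
    \partial_a\Psi(\vec U)
      =P^{2-c}f(z_a)^{c-2}f'(z_a),
    \qquad
    z_a=\vec U[a]/c .
\]
The plan is to prove the two bounds \eqref{eq:mass-lower} and \eqref{eq:sensitivity} separately in terms of $m\coloneqq\min\{P^2,1\}$. Dividing them then gives the consequence at once.

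\emph{Mass lower bound.} I would split into the cases $P\le1$ and $P>1$. Since $P\ge\max_a f(z_a)$, the case $P\le1$ forces every $z_a\le0$, and hence $f'(z_a)=f(z_a)^2$. The mass is then
\[
    P^{2-c}\sum_a f(z_a)^{c}
      \ge P^{2-c}\cdot\frac{\bigl(\sum_a f(z_a)^{c-1}\bigr)^{c/(c-1)}}{|\mathcal A_i|^{1/(c-1)}}.
\]
This is power-mean or Hölder with exponents comparing $\ell_c$ and $\ell_{c-1}$: the inequality $\|x\|_{c-1}\le|\mathcal A_i|^{1/((c-1)c)}\|x\|_c$ gives $\|x\|_c^c\ge\|x\|_{c-1}^c\,|\mathcal A_i|^{-1/(c-1)}$. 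The right-hand side equals $P^{2-c}P^c/|\mathcal A_i|^{1/(c-1)}\ge P^2/3$.

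In the case $P>1$ I would lower bound the mass by the term of a maximizing action $a^\*$. Rather than trying to prove $f(z_{a^\*})\ge1$ directly, I would use $f'(z)=\min\{f^2,1\}$ to write every summand as $P^{2-c}f^{c-2}\min\{f^2,1\}$. I would then split the sum into coordinates with $f\ge1$ and those with $f<1$. For the second group, apply the same Hölder trick as above. For the first group, bound $\sum f^{c-2}$ below by $\sum f^{c-1}/\max f$; since $\max f\le P$, this piece is at least $S_{\ge1}/P^{c-1}$.

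Writing $S=S_{\ge1}+S_{<1}$, the two pieces give mass at least
\[
    P^{2-c}\Bigl(S_{\ge1}/P+S_{<1}^{c/(c-1)}/3\Bigr).
\]
If $S_{\ge1}\ge S/2$, the first term is at least $1/2$. Otherwise $S_{<1}\ge S/2$, but then $S_{<1}\le|\mathcal A_i|<3^{c-1}$, so $P\le 3\cdot 2^{1/(c-1)}$ stays bounded and constants must be chased. This constant chasing is the main obstacle. I expect that the precise constant $1/3$ requires a cleaner argument, perhaps a convexity argument in the variables $f_a^{c-1}$ over the constraint $\sum f_a^{c-1}=P^{c-1}$ with $f_a\le1$, minimizing $\sum f_a^{c}$. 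I would try that first and fall back on a slightly worse constant if needed.

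\emph{Sensitivity bound.} By \eqref{eq:weight-log-derivative}, if $\vec U[a]>0$ then $a_c'/a_c=(c-2)/(c+\vec U[a])\le1$. Moreover $P\ge f(z_a)>1$ in this case, so $m=1$ and the bound holds. If $\vec U[a]<0$ then $a_c'/a_c=(1-z_a)^{-1}=f(z_a)\le\min\{P,1\}$, again because $P\ge f(z_a)$ and $f(z_a)<1$. Squaring gives \eqref{eq:sensitivity}.

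Combining this with \eqref{eq:mass-lower} yields the ratio bound $m/(m/3)=3$.
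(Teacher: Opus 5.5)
Your sensitivity bound and the case $P\le1$ of the mass lower bound are correct and match the paper's argument. The case $P>1$ of the mass lower bound, however, is left unproved, and your split does not give the stated constant. In the subcase $S_{<1}>S/2$ you only obtain $P^{2-c}S_{<1}^{c/(c-1)}/3>2^{-c/(c-1)}P^2/3$. Since $1<c/(c-1)\le 1+1/(1+\log 2)<1.6$, this constant lies roughly between $\tfrac19$ and $\tfrac16$, not $\tfrac13$. You acknowledge this, but falling back on a worse constant proves a different lemma. The ratio bound $3$ is used verbatim in \Cref{lem:input-change}, where $\tfrac{5\eta^2}{12}\cdot 3\cdot\tfrac43=\tfrac{5\eta^2}{3}\le 2\eta^2$. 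With a ratio constant near $9$ that step fails, and every later numerical constant changes.

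The missing idea is a pointwise inequality that handles both cases at once. Write $f_a=f(\vec U[a]/c)$. The domination $f_a\le P$ gives $\min\{f_a^2,1\}\ge f_a^2\min\{1,P^{-2}\}$. For $P\le 1$ both sides equal $f_a^2$, and for $P>1$ the right side $f_a^2/P^2$ is at most both $f_a^2$ and $1$. Summing gives $\sum_a\partial_a\Psi(\vec U)\ge P^{2-c}\min\{1,P^{-2}\}\sum_a f_a^{c}$. The same $\ell_c$-versus-$\ell_{c-1}$ H\"older step you already use gives $\sum_a f_a^c\ge|\mathcal A_i|^{-1/(c-1)}P^c$. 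Hence $\sum_a\partial_a\Psi(\vec U)\ge|\mathcal A_i|^{-1/(c-1)}\min\{P^2,1\}>\tfrac13\min\{P^2,1\}$, with no case split on $P$. This is exactly the paper's proof; you need neither the split into coordinates with $f_a\ge1$ and $f_a<1$ nor a constrained convex minimization.

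If you prefer to keep your decomposition, it also closes once you drop the split at $S/2$. With $\alpha=S_{\ge1}/S$, your two-piece bound reads $\alpha+(1-\alpha)^{c/(c-1)}P^2/3\ge\alpha+(1-\alpha)^{c/(c-1)}/3$. Its derivative in $\alpha$ is at least $1-\tfrac{c}{3(c-1)}>0$, so it is minimized at $\alpha=0$, where it equals $\tfrac13$.
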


\begin{proof}
We first lower bound the total derivative mass. Since the
$\ell_{c-1}$ norm defining $\Psi$ dominates each coordinate,
\[
    0
    <
    f\!\left(\frac{\vec U[a]}c\right)
    \leq
    \frac{\Psi(\vec U)}c
\]
for every action $a$. Using the derivative formula from
\Cref{lem:calculus} together with
$f'(z)=\min\{f(z)^2,1\}$ gives
\[
    \sum_a\partial_a\Psi(\vec U)
      =
      \left(\frac{\Psi(\vec U)}c\right)^{2-c}
      \sum_a
         f\!\left(\frac{\vec U[a]}c\right)^{c-2}
         \min\!\left\{
            f\!\left(\frac{\vec U[a]}c\right)^2,1
         \right\}.
\]

The coordinatewise bound above implies
\[
    \min\!\left\{
       f\!\left(\frac{\vec U[a]}c\right)^2,1
    \right\}
    \geq
    f\!\left(\frac{\vec U[a]}c\right)^2
    \min\!\left\{
       1,
       \left(\frac{\Psi(\vec U)}c\right)^{-2}
    \right\}.
\]
To see this, if $\Psi(\vec U)/c\leq1$, then
$f(\vec U[a]/c)\leq1$, and the two sides are equal. If
$\Psi(\vec U)/c>1$, the right-hand side equals
\[
    \frac{
       f(\vec U[a]/c)^2
    }{
       (\Psi(\vec U)/c)^2
    },
\]
which is at most both $f(\vec U[a]/c)^2$ and $1$.

Substituting this bound into the derivative sum yields
\[
    \sum_a\partial_a\Psi(\vec U)
    \geq
    \left(\frac{\Psi(\vec U)}c\right)^{2-c}
    \min\!\left\{
       1,
       \left(\frac{\Psi(\vec U)}c\right)^{-2}
    \right\}
    \sum_a
       f\!\left(\frac{\vec U[a]}c\right)^c.
\]
It remains to lower bound the final sum. H\"older's inequality gives
\[
    \sum_a
       f\!\left(\frac{\vec U[a]}c\right)^{c-1}
    \leq
    \left(
       \sum_a
          f\!\left(\frac{\vec U[a]}c\right)^c
    \right)^{(c-1)/c}
    |\mathcal A_i|^{1/c}.
\]
The left-hand side equals
$(\Psi(\vec U)/c)^{c-1}$. Raising the inequality to the power
$c/(c-1)$ and rearranging therefore gives
\[
    \sum_a
       f\!\left(\frac{\vec U[a]}c\right)^c
    \geq
    |\mathcal A_i|^{-1/(c-1)}
    \left(\frac{\Psi(\vec U)}c\right)^c.
\]
Combining the last two inequalities and simplifying the powers gives
\[
    \sum_a\partial_a\Psi(\vec U)
    \geq
    |\mathcal A_i|^{-1/(c-1)}
    \min\!\left\{
       \left(\frac{\Psi(\vec U)}c\right)^2,1
    \right\}.
\]
Since $|\mathcal A_i|^{1/(c-1)}<3$, this proves
\eqref{eq:mass-lower}.

We next bound the logarithmic sensitivity of $a_c$. If
$\vec U[a]<0$, then
\[
    \frac{a_c'(\vec U[a])}{a_c(\vec U[a])}
      =
      \left(1-\frac{\vec U[a]}c\right)^{-1}
      =
      f\!\left(\frac{\vec U[a]}c\right).
\]
On this branch,
$f(\vec U[a]/c)\leq1$, while the coordinate bound above gives
$f(\vec U[a]/c)\leq\Psi(\vec U)/c$. Hence,
\[
    \left(
       \frac{a_c'(\vec U[a])}{a_c(\vec U[a])}
    \right)^2
    \leq
    \min\!\left\{
       \left(\frac{\Psi(\vec U)}c\right)^2,1
    \right\}.
\]

If $\vec U[a]>0$, then
$f(\vec U[a]/c)>1$, which implies $\Psi(\vec U)/c>1$. Moreover,
\[
    0
    \leq
    \frac{a_c'(\vec U[a])}{a_c(\vec U[a])}
    =
    \frac{c-2}{c+\vec U[a]}
    \leq1.
\]
Since the minimum in \eqref{eq:sensitivity} is now equal to one, the
same bound follows on the positive branch.

Finally, dividing \eqref{eq:sensitivity} by
\eqref{eq:mass-lower} cancels the common factor
\[
    \min\!\left\{
       \left(\frac{\Psi(\vec U)}c\right)^2,1
    \right\}
\]
and gives precisely the compensation bound \eqref{eq:compensation}.
\end{proof}

At a coordinate with $\vec U[a]=0$, the two one-sided logarithmic
derivatives of $a_c$ may differ, but both satisfy the sensitivity bound
above. Along any affine path used below, a nonconstant coordinate
crosses zero at most once, while a coordinate that remains zero has
zero rate of change. Consequently, \eqref{eq:compensation} holds almost
everywhere along each path, which is sufficient for the integrations
below.

\subsection{Normalized Paths and Square-root Movement}
\label{app:movement-paths}

We next relate the change of a normalized response to the derivatives
of its action weights along a path. The key identity expresses the
squared speed of the square-root probabilities as a variance of
logarithmic derivatives. Integrating this speed then bounds the squared
movement between the endpoints of the path.

\begin{lemma}
\label{lem:normalized-path}
Let $\vx(\theta)$, $0\leq\theta\leq1$, be a Lipschitz path of
strictly positive probability vectors on $\mathcal A_i$. Then, for
almost every $\theta$,
\[
    \left\|
       \frac{d}{d\theta}\sqrt{\vx(\theta)}
    \right\|_2^2
    =
    \frac14
    \sum_a
       \vx(\theta)[a]
       \left(
          \frac{d}{d\theta}\log\vx(\theta)[a]
       \right)^2
    =
    \frac14
    \Var_{a\sim\vx(\theta)}
    \left[
       \frac{d}{d\theta}\log\vx(\theta)[a]
    \right].
    \numberthis{eq:normalized-speed}
\]
Moreover,
\[
    \norm{\sqrt{\vx(1)}-\sqrt{\vx(0)}}_2^2
    \leq
    \int_0^1
       \left\|
          \frac{d}{d\theta}\sqrt{\vx(\theta)}
       \right\|_2^2
    \,d\theta.
    \numberthis{eq:path-integral}
\]
\end{lemma}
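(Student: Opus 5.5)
The plan is to differentiate coordinatewise and then integrate, using Cauchy--Schwarz for the endpoint bound. Since $\vx(\theta)$ is Lipschitz and strictly positive on the compact interval $[0,1]$, its coordinates are bounded below by a positive constant. Hence $\sqrt{\vx(\theta)[a]}$ and $\log\vx(\theta)[a]$ are Lipschitz, and by Rademacher's theorem they are differentiable almost everywhere.

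At any such $\theta$, the chain rule gives
\[
    \frac{d}{d\theta}\sqrt{\vx(\theta)[a]}
    =
    \frac12\sqrt{\vx(\theta)[a]}\,\frac{d}{d\theta}\log\vx(\theta)[a].
\]
Squaring and summing over $a$ yields the first equality in \eqref{eq:normalized-speed}. For the variance form, write $g_a=\frac{d}{d\theta}\log\vx(\theta)[a]$. Then
\[
    \sum_a\vx(\theta)[a]\,g_a=\sum_a\frac{d}{d\theta}\vx(\theta)[a]=\frac{d}{d\theta}\sum_a\vx(\theta)[a]=0,
\]
since the coordinates sum to one for every $\theta$. So $g$ has mean zero under $\vx(\theta)$, and its second moment $\sum_a\vx(\theta)[a]g_a^2$ equals its variance. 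This gives the second equality.

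For \eqref{eq:path-integral}, each coordinate $\sqrt{\vx(\theta)[a]}$ is Lipschitz, hence absolutely continuous. The fundamental theorem of calculus therefore gives
\[
    \sqrt{\vx(1)}-\sqrt{\vx(0)}=\int_0^1\frac{d}{d\theta}\sqrt{\vx(\theta)}\,d\theta
\]
coordinatewise. Squaring each coordinate and applying Cauchy--Schwarz (equivalently Jensen) on $[0,1]$ gives
\[
    \bigl(\sqrt{\vx(1)[a]}-\sqrt{\vx(0)[a]}\bigr)^2\leq\int_0^1\Bigl(\frac{d}{d\theta}\sqrt{\vx(\theta)[a]}\Bigr)^2\,d\theta.
\]
Summing over $a$ and exchanging the finite sum with the integral proves the claim.

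The only delicate point is regularity, namely ensuring the a.e.\ derivatives exist and integrate back to the increments. This is handled by the uniform positive lower bound on the coordinates, which makes $\sqrt{\cdot}$ and $\log$ Lipschitz on the range of the path, so absolute continuity is preserved. Everything else is direct computation.
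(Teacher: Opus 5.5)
Your proof is correct and follows the paper's argument essentially step for step: the uniform positive lower bound on the coordinates for regularity, the chain-rule identity, the mean-zero observation for the variance form, and the fundamental theorem of calculus plus Cauchy--Schwarz for the endpoint bound. The only cosmetic difference is that you apply Cauchy--Schwarz coordinatewise and then sum, whereas the paper first bounds the norm of the integral by the integral of the norm and then applies Cauchy--Schwarz; the two routes give the same bound.
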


\begin{proof}
Since every coordinate of $\vx(\theta)$ is continuous and strictly
positive on $[0,1]$, it is bounded away from zero along this fixed
path. Thus, the coordinatewise logarithms and square roots are
absolutely continuous, and the following derivatives exist almost
everywhere.

For each action $a$, the chain rule gives
\[
    \frac{d}{d\theta}\sqrt{\vx(\theta)[a]}
    =
    \frac12
    \sqrt{\vx(\theta)[a]}
    \frac{d}{d\theta}\log\vx(\theta)[a].
\]
Squaring and summing over actions gives
\[
    \left\|
       \frac{d}{d\theta}\sqrt{\vx(\theta)}
    \right\|_2^2
    =
    \frac14
    \sum_a
       \vx(\theta)[a]
       \left(
          \frac{d}{d\theta}\log\vx(\theta)[a]
       \right)^2.
\]

The logarithmic derivatives have mean zero under $\vx(\theta)$. Indeed,
\[
    \sum_a
       \vx(\theta)[a]
       \frac{d}{d\theta}\log\vx(\theta)[a]
    =
    \sum_a
       \frac{d}{d\theta}\vx(\theta)[a]
    =
    \frac{d}{d\theta}\sum_a\vx(\theta)[a]
    =
    0.
\]
Their second moment is therefore equal to their variance, which proves
\eqref{eq:normalized-speed}.

To compare the endpoints, absolute continuity gives
\[
    \sqrt{\vx(1)}-\sqrt{\vx(0)}
    =
    \int_0^1
       \frac{d}{d\theta}\sqrt{\vx(\theta)}
    \,d\theta.
\]
The triangle inequality followed by Cauchy--Schwarz yields
\[
    \norm{\sqrt{\vx(1)}-\sqrt{\vx(0)}}_2
    \leq
    \int_0^1
       \left\|
          \frac{d}{d\theta}\sqrt{\vx(\theta)}
       \right\|_2
    \,d\theta
    \leq
    \left(
       \int_0^1
          \left\|
             \frac{d}{d\theta}\sqrt{\vx(\theta)}
          \right\|_2^2
       \,d\theta
    \right)^{1/2}.
\]
Squaring proves \eqref{eq:path-integral}.
\end{proof}

We will apply the lemma to normalized positive action weights. Suppose
\[
    \vx(\theta)[a]
    =
    \frac{q_a(\theta)}
         {\sum_b q_b(\theta)},
    \qquad
    q_a(\theta)>0.
\]
Differentiating the logarithm gives
\[
    \frac{d}{d\theta}\log\vx(\theta)[a]
    =
    \frac{d}{d\theta}\log q_a(\theta)
    -
    \sum_b
       \vx(\theta)[b]
       \frac{d}{d\theta}\log q_b(\theta).
\]
Thus, normalization subtracts the $\vx(\theta)$-average logarithmic
derivative from every action. Since subtracting a common scalar does
not change variance, \eqref{eq:normalized-speed} becomes
\[
    \left\|
       \frac{d}{d\theta}\sqrt{\vx(\theta)}
    \right\|_2^2
    =
    \frac14
    \Var_{a\sim\vx(\theta)}
    \left[
       \frac{d}{d\theta}\log q_a(\theta)
    \right].
\]
The two path estimates below follow by substituting the corresponding
unnormalized weights $q_a(\theta)$ into this identity.

\subsection{Changing the Optimistic Correction}
\label{app:movement-correction}

We first compare $\xt$ with the intermediate distribution from the
proof of \Cref{lem:movement}. The cumulative input
$\eta\vec U^{(t)}$ remains fixed, while the optimistic correction
changes from $\vec u^{(t-1)}$ to $\ut$.

\begin{lemma}\label{lem:correction-change}
Fix a round $t$ and $0<\eta\leq1/16$. For $0\leq\theta\leq1$,
let
\[
    \vx(\theta)[a]
      =
      \frac{
         a_c(\eta\vec U^{(t)}[a])
         \bigl(
            1+4\eta
            ((1-\theta)\vec u^{(t-1)}[a]+\theta\ut[a])
         \bigr)
      }{
         \sum_b
         a_c(\eta\vec U^{(t)}[b])
         \bigl(
            1+4\eta
            ((1-\theta)\vec u^{(t-1)}[b]+\theta\ut[b])
         \bigr)
      }.
\]
Then
\[
    \norm{\sqrt{\vx(1)}-\sqrt{\vx(0)}}_2^2
    \leq
    8\eta^2
    \norm{\ut-\vec u^{(t-1)}}_\infty^2.
    \numberthis{eq:correction-movement}
\]
\end{lemma}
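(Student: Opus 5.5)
The plan is to apply \Cref{lem:normalized-path} to the path $\vx(\theta)$ with unnormalized weights
\[
    q_a(\theta)
    =
    a_c(\eta\vec U^{(t)}[a])\bigl(1+4\eta w_a(\theta)\bigr),
    \qquad
    w_a(\theta)=(1-\theta)\vec u^{(t-1)}[a]+\theta\ut[a].
\]
Each $q_a$ is affine in $\theta$. It is strictly positive because $|w_a(\theta)|\leq1$ and $4\eta\leq1/4$. Hence $\vx(\theta)$ is a Lipschitz path of strictly positive probability vectors, and the lemma applies.

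The first factor $a_c(\eta\vec U^{(t)}[a])$ does not depend on $\theta$, so it drops out of the logarithmic derivative. We get
\[
    \frac{d}{d\theta}\log q_a(\theta)
    =
    \frac{4\eta\bigl(\ut[a]-\vec u^{(t-1)}[a]\bigr)}{1+4\eta w_a(\theta)}.
\]
By the normalized form of \eqref{eq:normalized-speed}, the squared speed equals $\tfrac14\Var_{a\sim\vx(\theta)}[\tfrac{d}{d\theta}\log q_a(\theta)]$. We bound this variance by the second moment, and the second moment by the maximum of $(\tfrac{d}{d\theta}\log q_a)^2$ over $a$. Using $1+4\eta w_a(\theta)\geq1-4\eta\geq3/4$, this gives
\[
    \left\|\frac{d}{d\theta}\sqrt{\vx(\theta)}\right\|_2^2
    \leq
    \frac14\cdot\frac{16\eta^2}{(1-4\eta)^2}\norm{\ut-\vec u^{(t-1)}}_\infty^2
    \leq
    \frac{64}{9}\eta^2\norm{\ut-\vec u^{(t-1)}}_\infty^2 .
\]

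Integrating over $\theta\in[0,1]$ with \eqref{eq:path-integral} yields the bound $\tfrac{64}{9}\eta^2\norm{\ut-\vec u^{(t-1)}}_\infty^2$. Since $64/9<8$, this gives \eqref{eq:correction-movement}.

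There is no real obstacle. The only point needing care is the constant: we must use the lower bound $1-4\eta\geq3/4$ on the correction factor, which is where $\eta\leq1/16$ enters. The weaker bound $1/2$ would give the coefficient $16$ instead of $8$.
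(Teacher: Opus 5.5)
Your proposal is correct and matches the paper's proof essentially step for step. Both arguments apply the normalized-path identity of \Cref{lem:normalized-path}, compute the logarithmic derivative $4\eta(\ut[a]-\vec u^{(t-1)}[a])/(1+4\eta w_a(\theta))$, bound the variance by its second moment using the lower bound $3/4$ on the correction factor, and integrate to obtain $\tfrac{64}{9}\eta^2\norm{\ut-\vec u^{(t-1)}}_\infty^2\leq 8\eta^2\norm{\ut-\vec u^{(t-1)}}_\infty^2$.
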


\begin{proof}
The path starts at $\vx(0)=\xt$. At $\theta=1$, the cumulative input
is still $\eta\vec U^{(t)}$, while the correction has changed from
$\vec u^{(t-1)}$ to $\ut$. Thus, $\vx(1)$ is exactly the intermediate
distribution used in the proof of \Cref{lem:movement}.

For every $\theta\in[0,1]$, the interpolated correction satisfies
\[
    \left|
       (1-\theta)\vec u^{(t-1)}[a]+\theta\ut[a]
    \right|
    \leq1.
\]
Since $\eta\leq1/16$,
\[
    \frac34
    \leq
    1+4\eta
       \bigl(
          (1-\theta)\vec u^{(t-1)}[a]+\theta\ut[a]
       \bigr)
    \leq
    \frac54.
\]
Hence all unnormalized weights remain strictly positive, and the path
is Lipschitz.

We now apply the normalized-path identity from
\Cref{lem:normalized-path}. The power weight
$a_c(\eta\vec U^{(t)}[a])$ is constant along this path, so the
logarithmic derivative of the unnormalized weight is
\[
    \frac{
       4\eta
       \bigl(\ut[a]-\vec u^{(t-1)}[a]\bigr)
    }{
       1+4\eta
       \bigl(
          (1-\theta)\vec u^{(t-1)}[a]+\theta\ut[a]
       \bigr)
    }.
\]
Using \eqref{eq:normalized-speed} and bounding the variance by its
second moment gives
\[
    \left\|
       \frac{d}{d\theta}\sqrt{\vx(\theta)}
    \right\|_2^2
    \leq
    \frac14
    \sum_a
       \vx(\theta)[a]
       \left(
          \frac{
             4\eta
             \bigl(\ut[a]-\vec u^{(t-1)}[a]\bigr)
          }{
             1+4\eta
             \bigl(
                (1-\theta)\vec u^{(t-1)}[a]+\theta\ut[a]
             \bigr)
          }
       \right)^2.
\]
The denominator is at least $3/4$, so
\[
    \left\|
       \frac{d}{d\theta}\sqrt{\vx(\theta)}
    \right\|_2^2
    \leq
    \frac{64\eta^2}{9}
    \norm{\ut-\vec u^{(t-1)}}_\infty^2.
\]
Finally, \eqref{eq:path-integral} yields
\[
    \norm{\sqrt{\vx(1)}-\sqrt{\vx(0)}}_2^2
    \leq
    \frac{64\eta^2}{9}
    \norm{\ut-\vec u^{(t-1)}}_\infty^2
    \leq
    8\eta^2
    \norm{\ut-\vec u^{(t-1)}}_\infty^2.
\]
This proves \eqref{eq:correction-movement}.
\end{proof}

\subsection{Changing the Cumulative Input}
\label{app:movement-input}

We next keep the optimistic correction fixed at $\ut$ and change the
potential input from $\eta\vec U^{(t)}$ to
$\eta\vec U^{(t+1)}$. Our goal is to control the resulting strategy
movement by the same gradient-weighted square that appears in
\eqref{eq:budget}. The compensation bound from
\Cref{lem:compensation} is what preserves these gradient weights after
normalization.

\begin{lemma}\label{lem:input-change}
Fix a round $t$ and $0<\eta\leq1/16$. For $0\leq\theta\leq1$,
let
\[
    \vx(\theta)[a]
      =
      \frac{
         a_c(\eta\vec U^{(t)}[a]+\theta\eta\ut[a])
         (1+4\eta\ut[a])
      }{
         \sum_b
         a_c(\eta\vec U^{(t)}[b]+\theta\eta\ut[b])
         (1+4\eta\ut[b])
      }.
\]
Then
\[
    \norm{\sqrt{\vx(1)}-\sqrt{\vx(0)}}_2^2
    \leq
    2\eta^2
    \sum_a
       \partial_a\Psi(\eta\vec U^{(t)})\ut[a]^2.
    \numberthis{eq:score-movement}
\]
\end{lemma}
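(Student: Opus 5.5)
We will follow the same template as \Cref{lem:correction-change}, applying the normalized-path identity of \Cref{lem:normalized-path} to the unnormalized weights
\[
    q_a(\theta)=a_c\bigl(\eta\vec U^{(t)}[a]+\theta\eta\ut[a]\bigr)(1+4\eta\ut[a]).
\]
The correction factor is constant in $\theta$ and lies in $[3/4,5/4]$. Away from the finitely many $\theta$ at which a coordinate of $\vec V(\theta)\coloneqq\eta\vec U^{(t)}+\theta\eta\ut$ crosses zero, the logarithmic derivative is
\[
    \frac{d}{d\theta}\log q_a(\theta)=\eta\,\ut[a]\,\frac{a_c'(\vec V(\theta)[a])}{a_c(\vec V(\theta)[a])}.
\]
The path is Lipschitz, since $\log a_c$ is $1$-Lipschitz. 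Bounding the variance in \eqref{eq:normalized-speed} by the second moment gives, for almost every $\theta$,
\[
    \left\|\frac{d}{d\theta}\sqrt{\vx(\theta)}\right\|_2^2
    \leq\frac{\eta^2}{4}\sum_a\vx(\theta)[a]\,\ut[a]^2\left(\frac{a_c'(\vec V(\theta)[a])}{a_c(\vec V(\theta)[a])}\right)^2 .
\]

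The key step is to remove the normalization using the compensation bound \Cref{lem:compensation}, evaluated at $\vec V(\theta)$. The common factor $(\Psi/c)^{2-c}$ cancels in the ratio, so
\[
    \vx(\theta)[a]=\frac{\partial_a\Psi(\vec V(\theta))(1+4\eta\ut[a])}{\sum_b\partial_b\Psi(\vec V(\theta))(1+4\eta\ut[b])}
    \leq\frac{5}{3}\,\frac{\partial_a\Psi(\vec V(\theta))}{\sum_b\partial_b\Psi(\vec V(\theta))}.
\]
Here the constant $5/3$ is the ratio of the extreme correction values $5/4$ and $3/4$. Combining this with \eqref{eq:compensation}, which says the squared log-sensitivity divided by $\sum_b\partial_b\Psi$ is at most $3$, the speed is bounded by
\[
    \frac{\eta^2}{4}\cdot\frac{5}{3}\cdot 3\sum_a\partial_a\Psi(\vec V(\theta))\,\ut[a]^2
    =\frac{5\eta^2}{4}\sum_a\partial_a\Psi(\vec V(\theta))\,\ut[a]^2 .
\]
This bound holds almost everywhere along the path, by the remark after \Cref{lem:compensation}.

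Next we return the weights to the starting point using multiplicative stability, \eqref{eq:gradient-multiplicative}. Since $\norm{\ut}_\infty\leq1$ and $\eta\leq1/16$, we get $\partial_a\Psi(\vec V(\theta))\leq e^{2\theta\eta}\partial_a\Psi(\eta\vec U^{(t)})\leq e^{1/8}\partial_a\Psi(\eta\vec U^{(t)})$. Integrating over $\theta\in[0,1]$ via \eqref{eq:path-integral} then yields the constant $\tfrac54 e^{1/8}\approx1.42\leq2$, which proves \eqref{eq:score-movement}.

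The main obstacle is the constant bookkeeping in the normalization step. The crude bound $5/3$ on the correction ratio is what spends most of the slack, and it must be paired with exactly the form of \eqref{eq:compensation}. The other delicate point is regularity at zero crossings of coordinates of $\vec V(\theta)$. There we use the fact that one-sided log-derivatives also satisfy \eqref{eq:sensitivity}, so every needed bound holds for almost every $\theta$, which is all the integration requires. If the constant turned out too tight, we would fall back on the sharper variance bound instead of the second-moment bound, but the estimate above already stays below $2$.
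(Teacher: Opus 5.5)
Your proposal is correct and follows the paper's proof essentially step for step. The common steps are the normalized-path identity with the variance bounded by the second moment, the comparison $\vx(\theta)[a]\leq\frac53\,\partial_a\Psi/\sum_b\partial_b\Psi$, the compensation bound \eqref{eq:compensation}, and multiplicative stability to move the weights back to $\eta\vec U^{(t)}$. The only cosmetic difference is that you keep the factor $e^{1/8}$, giving $\tfrac54e^{1/8}<2$, whereas the paper rounds it up to $4/3$ via \eqref{eq:segment}, giving $\tfrac53\leq2$.
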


\begin{proof}
The path starts at the intermediate distribution from the proof of
\Cref{lem:movement}, which uses the old cumulative input and the new
correction. At the other endpoint,
\[
    \eta\vec U^{(t)}+\eta\ut
      =
      \eta\vec U^{(t+1)},
\]
so $\vx(1)=\vx^{(t+1)}$.

Since $\norm{\ut}_\infty\leq1$ and $\eta\leq1/16$, the fixed correction
satisfies
\[
    \frac34
    \leq
    1+4\eta\ut[a]
    \leq
    \frac54.
\]
Thus all weights along the path are strictly positive.

We apply the normalized-path identity from
\Cref{lem:normalized-path}. Away from zero crossings, the logarithmic
derivative of the unnormalized weight of action $a$ is
\[
    \eta\ut[a]
    \frac{
       a_c'(\eta\vec U^{(t)}[a]+\theta\eta\ut[a])
    }{
       a_c(\eta\vec U^{(t)}[a]+\theta\eta\ut[a])
    }.
\]
Bounding the variance in \eqref{eq:normalized-speed} by its second
moment therefore gives
\[
    \left\|
       \frac{d}{d\theta}\sqrt{\vx(\theta)}
    \right\|_2^2
    \leq
    \frac{\eta^2}{4}
    \sum_a
       \vx(\theta)[a]
       \left(
          \frac{
             a_c'(\eta\vec U^{(t)}[a]+\theta\eta\ut[a])
          }{
             a_c(\eta\vec U^{(t)}[a]+\theta\eta\ut[a])
          }
       \right)^2
       \ut[a]^2.
\]

We now compare the corrected probabilities with the normalized gradient
coordinates. Using the bounds $3/4$ and $5/4$ on the correction,
\[
    \vx(\theta)[a]
    \leq
    \frac53
    \frac{
       a_c(\eta\vec U^{(t)}[a]+\theta\eta\ut[a])
    }{
       \sum_b
       a_c(\eta\vec U^{(t)}[b]+\theta\eta\ut[b])
    }.
\]
By the gradient formula in \Cref{lem:calculus}, the common factor in
$\partial_a\Psi$ cancels under normalization. Hence,
\[
    \vx(\theta)[a]
    \leq
    \frac53
    \frac{
       \partial_a\Psi(\eta\vec U^{(t)}+\theta\eta\ut)
    }{
       \sum_b
       \partial_b\Psi(\eta\vec U^{(t)}+\theta\eta\ut)
    }.
\]
Substituting this bound into the square-root speed gives
\[
    \left\|
       \frac{d}{d\theta}\sqrt{\vx(\theta)}
    \right\|_2^2
    \leq
    \frac{5\eta^2}{12}
    \sum_a
       \partial_a\Psi(\eta\vec U^{(t)}+\theta\eta\ut)
       \frac{
          \left(
             a_c'(\eta\vec U^{(t)}[a]+\theta\eta\ut[a])
             /
             a_c(\eta\vec U^{(t)}[a]+\theta\eta\ut[a])
          \right)^2
       }{
          \sum_b
          \partial_b\Psi(\eta\vec U^{(t)}+\theta\eta\ut)
       }
       \ut[a]^2.
\]
The ratio in each summand is exactly the quantity controlled by
\eqref{eq:compensation}. Applying that bound yields
\[
    \left\|
       \frac{d}{d\theta}\sqrt{\vx(\theta)}
    \right\|_2^2
    \leq
    \frac{5\eta^2}{4}
    \sum_a
       \partial_a\Psi(\eta\vec U^{(t)}+\theta\eta\ut)
       \ut[a]^2.
\]

Integrating with \eqref{eq:path-integral} gives
\[
    \norm{\sqrt{\vx(1)}-\sqrt{\vx(0)}}_2^2
    \leq
    \frac{5\eta^2}{4}
    \int_0^1
       \sum_a
          \partial_a\Psi(\eta\vec U^{(t)}+\theta\eta\ut)
          \ut[a]^2
    \,d\theta.
\]
Finally, \eqref{eq:segment} gives
\[
    \partial_a\Psi(\eta\vec U^{(t)}+\theta\eta\ut)
    \leq
    \frac43
    \partial_a\Psi(\eta\vec U^{(t)}),
\]
because $\norm{\ut}_\infty\leq1$ and $\eta\leq1/16$. Therefore,
\[
    \norm{\sqrt{\vx(1)}-\sqrt{\vx(0)}}_2^2
    \leq
    \frac{5\eta^2}{3}
    \sum_a
       \partial_a\Psi(\eta\vec U^{(t)})\ut[a]^2
    \leq
    2\eta^2
    \sum_a
       \partial_a\Psi(\eta\vec U^{(t)})\ut[a]^2.
\]
This proves \eqref{eq:score-movement}. The derivative calculations hold
almost everywhere along the path. As noted after
\Cref{lem:compensation}, each nonconstant coordinate crosses zero at
most once, so these exceptional points do not affect the integral.
\end{proof}

\subsection{Combining the Correction and Input Changes}
\label{app:movement-combination}

The two paths from the preceding subsections meet at the same
intermediate distribution. We now use this common endpoint to combine
the correction-change and cumulative-input bounds and recover the
strategy-movement estimate from the main analysis.

\begin{repeatlemma}{lem:movement}
Suppose player $i$ uses \algname{} with a fixed rate
$0<\eta\leq1/16$ and receives any sequence of utility vectors in
$[0,1]^{|\mathcal A_i|}$ after play. For every $t\geq1$,
\[
    \norm{\sqrt{\vx_i^{(t+1)}}-\sqrt{\xt_i}}_2^2
      \leq4\eta^2\sum_a
         \partial_a\Psi_i(\eta\vec U_i^{(t)})\ut_i[a]^2
        +16\eta^2\norm{\ut_i-\vec u_i^{(t-1)}}_\infty^2.
\]
\end{repeatlemma}

\begin{proof}
Let $\widetilde{\vx}_i$ denote the intermediate distribution with
probabilities
\[
    \widetilde{\vx}_i[a]
      =
      \frac{
         a_c(\eta\vec U_i^{(t)}[a])
         (1+4\eta\ut_i[a])
      }{
         \sum_b
         a_c(\eta\vec U_i^{(t)}[b])
         (1+4\eta\ut_i[b])
      }.
\]
The correction path from \Cref{lem:correction-change} starts at
$\xt_i$ and ends at $\widetilde{\vx}_i$. Hence,
\[
    \norm{\sqrt{\widetilde{\vx}_i}-\sqrt{\xt_i}}_2^2
    \leq
    8\eta^2
    \norm{\ut_i-\vec u_i^{(t-1)}}_\infty^2.
\]
The cumulative-input path from \Cref{lem:input-change} starts at
$\widetilde{\vx}_i$. Since
$\vec U_i^{(t+1)}=\vec U_i^{(t)}+\ut_i$, its endpoint is
$\vx_i^{(t+1)}$. Therefore,
\[
    \norm{\sqrt{\vx_i^{(t+1)}}-\sqrt{\widetilde{\vx}_i}}_2^2
    \leq
    2\eta^2
    \sum_a
       \partial_a\Psi_i(\eta\vec U_i^{(t)})\ut_i[a]^2.
\]

The triangle inequality for the square-root vectors gives
\[
    \norm{\sqrt{\vx_i^{(t+1)}}-\sqrt{\xt_i}}_2
    \leq
    \norm{\sqrt{\vx_i^{(t+1)}}-\sqrt{\widetilde{\vx}_i}}_2
    +
    \norm{\sqrt{\widetilde{\vx}_i}-\sqrt{\xt_i}}_2.
\]
Squaring and using $(r+s)^2\leq2r^2+2s^2$, followed by the two bounds
above, yields
\[
    \norm{\sqrt{\vx_i^{(t+1)}}-\sqrt{\xt_i}}_2^2
    \leq
    4\eta^2
    \sum_a
       \partial_a\Psi_i(\eta\vec U_i^{(t)})\ut_i[a]^2
    +
    16\eta^2
    \norm{\ut_i-\vec u_i^{(t-1)}}_\infty^2.
\]
This proves the lemma. The intermediate distribution
$\widetilde{\vx}_i$ is used only to separate the two changes in the
response and is never played by the algorithm.
\end{proof}

\section{Hellinger Distance and Prediction Error}\label{app:game}

The proof of \Cref{lem:game} uses two properties of the square-root
distance underlying Hellinger distance. First, it controls changes in
expectations of bounded functions. Second, for product distributions it factorizes nicely, i.e.,
its square is at most the sum of the squared distances between the
factors. The second property is what avoids an additional factor in the
number of opponents.

We prove these two facts below. The game-specific application, including
the effect of centering the utility vectors, is carried out in
\Cref{sec:analysis-game}.

\subsection{Hellinger Distance and Differences of Expectations}
\label{app:game-expectations}

We first relate square-root distance to total variation. This immediately
gives the expectation bounds used in the fixed-game analysis.

\begin{lemma}\label{lem:expectation-comparison}
Let $\vx,\vx'$ be probability vectors on the same finite action set.
Then
\[
    \norm{\vx-\vx'}_1
    \leq
    2\norm{\sqrt{\vx}-\sqrt{\vx'}}_2.
    \numberthis{eq:root-tv}
\]
For a function $g$ on this set with values in $[-1,1]$,
\[
    \left|
       \E_{a\sim\vx}[g(a)]
       -
       \E_{a\sim\vx'}[g(a)]
    \right|
    \leq
    2\norm{\sqrt{\vx}-\sqrt{\vx'}}_2.
\]
If $g$ takes values in $[0,1]$, the factor $2$ can be replaced by $1$.
\end{lemma}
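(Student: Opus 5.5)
The plan is to derive \eqref{eq:root-tv} from a factorization of the coordinatewise differences, and then deduce both expectation bounds as short corollaries. For each action $a$ write
\[
    \vx[a]-\vx'[a]
      =
      \bigl(\sqrt{\vx[a]}-\sqrt{\vx'[a]}\bigr)
      \bigl(\sqrt{\vx[a]}+\sqrt{\vx'[a]}\bigr).
\]
Summing absolute values and applying Cauchy--Schwarz gives
\[
    \norm{\vx-\vx'}_1
      \leq
      \norm{\sqrt{\vx}-\sqrt{\vx'}}_2\,
      \norm{\sqrt{\vx}+\sqrt{\vx'}}_2 .
\]
It then remains to bound the second factor. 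Expanding the square,
\[
    \norm{\sqrt{\vx}+\sqrt{\vx'}}_2^2
      =
      2+2\sum_a\sqrt{\vx[a]\vx'[a]}
      \leq 4,
\]
since the Bhattacharyya coefficient $\sum_a\sqrt{\vx[a]\vx'[a]}$ is at most one, again by Cauchy--Schwarz. Taking square roots yields \eqref{eq:root-tv}.

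For the expectation bound with $g$ valued in $[-1,1]$, I would use
\[
    \Bigl|\sum_a g(a)\bigl(\vx[a]-\vx'[a]\bigr)\Bigr|
      \leq
      \norm{g}_\infty\norm{\vx-\vx'}_1
      \leq
      \norm{\vx-\vx'}_1,
\]
and then invoke \eqref{eq:root-tv} to obtain the factor $2$.

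For $g$ valued in $[0,1]$, I would center the function by setting $h=g-\tfrac12$, which takes values in $[-\tfrac12,\tfrac12]$. Because $\sum_a(\vx[a]-\vx'[a])=0$, the two expectation differences coincide:
\[
    \E_{a\sim\vx}[g(a)]-\E_{a\sim\vx'}[g(a)]
      =
      \E_{a\sim\vx}[h(a)]-\E_{a\sim\vx'}[h(a)].
\]
The same Hölder step applied to $h$ then gives the bound $\tfrac12\norm{\vx-\vx'}_1\leq\norm{\sqrt{\vx}-\sqrt{\vx'}}_2$, which is the factor $1$.

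No step here is a real obstacle. The only point that needs care is bounding the sum factor $\norm{\sqrt{\vx}+\sqrt{\vx'}}_2$ by $2$ rather than by a cruder constant, and the Bhattacharyya-coefficient argument above handles this cleanly.
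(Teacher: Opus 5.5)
Your proposal is correct and follows essentially the same route as the paper: the same factorization with Cauchy--Schwarz, the same H\"older step for $[-1,1]$-valued $g$, and the same centering by $\tfrac12$ for $[0,1]$-valued $g$. The only cosmetic difference is that the paper bounds $\norm{\sqrt{\vx}+\sqrt{\vx'}}_2\leq2$ by the triangle inequality using $\norm{\sqrt{\vx}}_2=\norm{\sqrt{\vx'}}_2=1$, whereas you expand the square and use that the Bhattacharyya coefficient is at most one; the two arguments are equivalent.
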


\begin{proof}
For each action,
\[
    \vx[a]-\vx'[a]
      =
      \bigl(\sqrt{\vx[a]}-\sqrt{\vx'[a]}\bigr)
      \bigl(\sqrt{\vx[a]}+\sqrt{\vx'[a]}\bigr).
\]
Hence, by Cauchy--Schwarz,
\[
    \norm{\vx-\vx'}_1
    \leq
    \norm{\sqrt{\vx}-\sqrt{\vx'}}_2
    \norm{\sqrt{\vx}+\sqrt{\vx'}}_2.
\]
Since $\norm{\sqrt{\vx}}_2=\norm{\sqrt{\vx'}}_2=1$, the triangle
inequality gives
\[
    \norm{\sqrt{\vx}+\sqrt{\vx'}}_2
    \leq2,
\]
which proves \eqref{eq:root-tv}. Notice that the argument also allows
zero-probability coordinates.

Now suppose $g$ takes values in $[-1,1]$. Then
\[
    \left|
       \E_{a\sim\vx}[g(a)]
       -
       \E_{a\sim\vx'}[g(a)]
    \right|
    =
    \left|
       \sum_a(\vx[a]-\vx'[a])g(a)
    \right|
    \leq
    \norm{\vx-\vx'}_1.
\]
Combining this with \eqref{eq:root-tv} gives the factor $2$.

If instead $g$ takes values in $[0,1]$, then
$\sum_a(\vx[a]-\vx'[a])=0$, so subtracting $1/2$ from $g$ does not
change the expectation difference. Therefore,
\[
    \left|
       \E_{a\sim\vx}[g(a)]
       -
       \E_{a\sim\vx'}[g(a)]
    \right|
    =
    \left|
       \sum_a
          (\vx[a]-\vx'[a])
          \left(g(a)-\frac12\right)
    \right|
    \leq
    \frac12\norm{\vx-\vx'}_1.
\]
Applying \eqref{eq:root-tv} once more gives the stated factor $1$.
\end{proof}

The two versions of the expectation bound are both used in
\Cref{lem:game}. The $[-1,1]$ bound controls changes in payoff
differences caused by the opponents, while the $[0,1]$ bound controls
the player's centering term.

\subsection{Hellinger Distance between Product Distributions}
\label{app:game-products}

The second property concerns product distributions. The square-root
overlap factors across independent coordinates, which allows the
squared distance between two products to be controlled by the sum of
the squared distances between their factors.

\begin{lemma}\label{lem:product-comparison}
Consider any finite collection of pairs $\vx_j,\vx'_j$ of probability
vectors, where each pair is defined on the same finite set
$\mathcal A_j$. Then
\[
    \left\|
       \sqrt{\bigotimes_j\vx_j}
       -
       \sqrt{\bigotimes_j\vx'_j}
    \right\|_2^2
    \leq
    \sum_j
       \norm{\sqrt{\vx_j}-\sqrt{\vx'_j}}_2^2.
    \numberthis{eq:product}
\]
All products and sums are over the same collection. The statement
also holds for the empty collection, with both sides equal to zero.
\end{lemma}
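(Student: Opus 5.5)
The plan is to prove \eqref{eq:product} through the Bhattacharyya coefficient. For probability vectors $\vx,\vx'$ on a finite set, write
\[
    \rho(\vx,\vx')=\sum_a\sqrt{\vx[a]\vx'[a]}\in[0,1].
\]
Since $\norm{\sqrt{\vx}}_2=\norm{\sqrt{\vx'}}_2=1$, expanding the square gives the identity
\[
    \norm{\sqrt{\vx}-\sqrt{\vx'}}_2^2=2-2\rho(\vx,\vx').
\]
This identity holds verbatim on the product set $\prod_j\mathcal A_j$. Square roots are taken coordinatewise, so the square root of a product distribution is the tensor product of the square roots of its factors. The overlap therefore factorizes as
\[
    \rho\Bigl(\bigotimes_j\vx_j,\bigotimes_j\vx'_j\Bigr)=\prod_j\rho_j,
    \qquad
    \rho_j\coloneqq\rho(\vx_j,\vx'_j).
\]

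Writing $h_j=1-\rho_j\in[0,1]$, the left-hand side of \eqref{eq:product} equals $2\bigl(1-\prod_j(1-h_j)\bigr)$. The right-hand side equals $2\sum_j h_j$. The claim therefore reduces to the elementary inequality
\[
    1-\prod_j(1-h_j)\leq\sum_j h_j,
    \qquad h_j\in[0,1].
\]
I would prove this by induction on the number of factors. The empty product equals $1$, so the empty collection gives $0\leq0$, matching the stated convention. For the inductive step, set $P=\prod_{j<k}(1-h_j)\in[0,1]$. Then
\[
    1-P(1-h_k)=(1-P)+Ph_k\leq(1-P)+h_k,
\]
and applying the inductive hypothesis to $1-P$ completes the step. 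An equivalent route is the union-bound form of the Weierstrass product inequality.

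No step is a real obstacle. The only point needing care is the factorization of $\rho$: the overlap of the products is the product of the overlaps because the sum over the product set splits into a product of sums over the coordinates. Zero-probability coordinates cause no difficulty, since the identity and the factorization involve only nonnegative square roots.
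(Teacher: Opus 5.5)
Your proposal is correct and follows essentially the same argument as the paper. Both write the squared square-root distance as $2(1-\text{overlap})$, factor the overlap over the product set, and reduce to $1-\prod_j\rho_j\le\sum_j(1-\rho_j)$, which you prove by induction where the paper expands the product successively. The only difference is notational: your $h_j=1-\rho_j$ is $1-h_j$ in the paper's notation.
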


\begin{proof}
For each factor $j$, define its square-root overlap by
\[
    h_j
      =
      \sum_{a\in\mathcal A_j}
         \sqrt{\vx_j[a]\vx'_j[a]}.
\]
Cauchy--Schwarz gives $0\leq h_j\leq1$. Expanding the squared
square-root distance gives
\[
    \norm{\sqrt{\vx_j}-\sqrt{\vx'_j}}_2^2
      =
      2(1-h_j).
\]

The corresponding overlap of the product distributions factors into
the product of the individual overlaps. Indeed,
\[
    \sum_{\vec s\in\prod_j\mathcal A_j}
       \prod_j
          \sqrt{\vx_j[\vec s[j]]\vx'_j[\vec s[j]]}
    =
    \prod_j h_j.
\]
Therefore,
\[
    \left\|
       \sqrt{\bigotimes_j\vx_j}
       -
       \sqrt{\bigotimes_j\vx'_j}
    \right\|_2^2
    =
    2\left(1-\prod_j h_j\right).
\]

It remains to compare the product overlap with the individual ones.
Since every $h_j$ belongs to $[0,1]$,
\[
    1-\prod_j h_j
    \leq
    \sum_j(1-h_j).
\]
For example, this follows by expanding $1-\prod_j h_j$ successively
and observing that every remaining product of overlaps is at most one.
Multiplying by two gives
\[
    2\left(1-\prod_j h_j\right)
    \leq
    \sum_j2(1-h_j),
\]
which is exactly \eqref{eq:product}.

For the empty collection, both product distributions are the point
mass on the empty profile, so both sides are zero.
\end{proof}

The two lemmas together give the comparison used in the fixed-game
analysis. If $g$ takes values in $[-1,1]$, then
\[
    \left|
       \E_{\vec s\sim\bigotimes_j\vx_j}[g(\vec s)]
       -
       \E_{\vec s\sim\bigotimes_j\vx'_j}[g(\vec s)]
    \right|
    \leq
    2
    \left(
       \sum_j
          \norm{\sqrt{\vx_j}-\sqrt{\vx_j'}}_2^2
    \right)^{1/2}.
\]
In \Cref{lem:game}, the factors are the opponents' strategies and
\[
    g(\vec s_{-i})
      =
      \mathcal U_i(a,\vec s_{-i})
      -
      \mathcal U_i(b,\vec s_{-i}).
\]
Thus, the change in a payoff difference is controlled directly by the
sum of the opponents' squared Hellinger movements. 

\section{Adversarial Regret with a Learning-Rate Safeguard}
\label{app:safeguard}

We now establish the adversarial guarantee by adding a learning-rate
safeguard to \algname{}. The potential, multiplicative correction,
centered-utility update, and cumulative vector remain unchanged. Only
the learning rate is allowed to decrease.

The construction is inspired by the adaptive-step-size argument of
\citet[Appendix~D]{daskalakis2021near}, which monitors a condition
satisfied under self-play and adjusts the learning rate when that
condition fails. Here we monitor the potential bound obtained in the
fixed-game analysis. If the potential crosses the threshold $4c$, we
decrease the learning rate before the next round. There is no restart
and no switch to a different learning rule.

The analysis has two parts. For an arbitrary sequence of utility
vectors, we show that the learning-rate adjustment keeps
\[
    \Psi_i(\eta_i^{(t)}\vec U_i^{(t)})\leq4c
\]
on every round. We then show that
$1/(\eta_i^{(t)})^2$ increases by at most $49/c$ per round, which
prevents the learning rate from becoming too small. In fixed-game
self-play, the potential never crosses the threshold, so the safeguard
never changes the learning rate and the trajectory remains exactly the
one analyzed in \Cref{sec:analysis}.

\subsection{The Safeguarded Update}
\label{app:safeguard-update}

Fix a player $i$, set $c=2+\log d$, and initialize
\[
    \vec U_i^{(1)}=\zero,
    \qquad
    \vec u_i^{(0)}=\zero,
    \qquad
    \eta_i^{(1)}=\frac1{32\sqrt n}.
\]
On round $t$, the player uses the current learning rate in both parts
of the \algname{} response,
\[
    \xt_i
      \propto
      \nabla\Psi_i(\eta_i^{(t)}\vec U_i^{(t)})
      \odot
      \bigl(\one+4\eta_i^{(t)}\vec u_i^{(t-1)}\bigr).
    \numberthis{eq:safeguard-rule}
\]
After observing $\nut_i$, the player performs the same centering and
cumulative updates as before,
\[
    \ut_i
      =
      \nut_i-\ip{\xt_i}{\nut_i}\one,
    \qquad
    \vec U_i^{(t+1)}
      =
      \vec U_i^{(t)}+\ut_i.
\]

The safeguard is applied after this update. We measure the amount by
which the potential, evaluated using the round-$t$ learning rate,
exceeds $4c$,
\[
    \delta_i^{(t)}
      =
      \bigl[
         \Psi_i(\eta_i^{(t)}\vec U_i^{(t+1)})-4c
      \bigr]_+,
\]
and choose
\[
    \eta_i^{(t+1)}
      =
      \frac{\eta_i^{(t)}}
           {1+\delta_i^{(t)}/c}.
    \numberthis{eq:safeguard-rate}
\]
Thus the learning rate remains unchanged whenever the potential stays
below $4c$ and decreases only after the threshold is crossed.

The potential defining $\delta_i^{(t)}$ uses the round-$t$ rate and
the cumulative vector after incorporating the round-$t$ utility. Any
rate decrease takes effect only on round $t+1$. 

\subsection{The Adversarial Regret Bound}
\label{app:safeguard-bound}

\begin{theorem}[Adversarial regret]\label{thm:adversarial}
Fix $n\geq1$, $d\geq2$, and a player $i$ with
$1\leq|\mathcal A_i|\leq d$. With $c=2+\log d$ and initial rate
$\eta_i^{(1)}=1/(32\sqrt n)$, the rule
\eqref{eq:safeguard-rule}--\eqref{eq:safeguard-rate} satisfies
\[
    \Reg_i^{(T)}
      \leq
      96\sqrt n\,(2+\log d)
      +21\sqrt{T(2+\log d)}
\]
for every sequence of utility vectors
$\nut_i\in[0,1]^{|\mathcal A_i|}$ observed in full after play and
every integer $T\geq1$. The utility vectors may be chosen adaptively,
with no assumption on the other players' behavior. The rule is
deterministic, uncoupled, and independent of the horizon.
\end{theorem}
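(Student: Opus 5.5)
The plan is to maintain one invariant along the safeguarded trajectory:
\[
    \Psi_i(\eta_i^{(t)}\vec U_i^{(t)})\leq 4c
    \qquad\text{for every } t\geq1 .
\]
We then convert it into regret through the certificate in \Cref{lem:calculus}, and finally control how fast $1/(\eta_i^{(t)})^2$ can grow. The base case holds because $\Psi_i(\zero)<3c$.

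For the inductive step, suppose the invariant holds at round $t$ and write $\eta=\eta_i^{(t)}$ and $\delta=\delta_i^{(t)}$. If $\delta=0$, the rate is unchanged and $\Psi_i(\eta\vec U_i^{(t+1)})\leq4c$ by definition of $\delta$. If $\delta>0$, set $\lambda=c/(c+\delta)\in(0,1)$, so that $\eta_i^{(t+1)}=\lambda\eta$. Convexity of $\Psi_i$ on the segment from $\zero$ to $\eta\vec U_i^{(t+1)}$ gives
\[
    \Psi_i(\lambda\eta\vec U_i^{(t+1)})
    \leq
    \lambda\Psi_i(\eta\vec U_i^{(t+1)})+(1-\lambda)\Psi_i(\zero)
    \leq
    \lambda(4c+\delta)+(1-\lambda)3c .
\]
The right-hand side equals $3c+\lambda(c+\delta)=4c$, which closes the induction. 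The same computation gives the terminal bound $\Psi_i(\eta_i^{(T+1)}\vec U_i^{(T+1)})\leq4c$. The certificate then yields
\[
    \Reg_i^{(T)}\leq\frac{3c}{\eta_i^{(T+1)}} .
\]

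It remains to bound $1/\eta_i^{(T+1)}$. First I bound the overshoot $\delta_i^{(t)}$. Both $\vec U_i^{(t)}$ and $\vec U_i^{(t+1)}$ are scaled by the same round-$t$ rate, and this rate is at most $1/32\leq1/16$ since rates only decrease. Hence \Cref{lem:one-round} applies on round $t$. I drop its negative term and bound the positive term using the mass bound $\sum_a\partial_a\Psi_i\leq3$ and $\norm{\ut_i-\vec u_i^{(t-1)}}_\infty\leq2$. This gives $\Psi_i(\eta\vec U_i^{(t+1)})-\Psi_i(\eta\vec U_i^{(t)})\leq 24\eta^2$. Combined with the invariant, $\delta_i^{(t)}\leq24\eta^2$.

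Next I convert this into growth of the inverse squared rate:
\[
    \frac1{(\eta_i^{(t+1)})^2}-\frac1{\eta^2}
    =\frac{(1+\delta/c)^2-1}{\eta^2}
    \leq\frac{48}{c}+\frac{576\eta^2}{c^2}
    \leq\frac{49}{c},
\]
where the last step uses $\eta^2\leq1/1024$ and $c\geq2$. Telescoping gives $1/(\eta_i^{(T+1)})^2\leq 1024n+49T/c$, and subadditivity of the square root gives $1/\eta_i^{(T+1)}\leq32\sqrt n+7\sqrt{T/c}$. Multiplying by $3c$ yields $96\sqrt n\,c+21\sqrt{Tc}$, as claimed.

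The argument uses only single-round properties, valid for adaptive utility sequences, and is deterministic and horizon-free. The step I expect to be delicate is the induction: it must be checked that \Cref{lem:one-round} is applied with a single rate inside each round, which the timing of \eqref{eq:safeguard-rate} guarantees. It must also be checked that convexity together with $\Psi_i(\zero)<3c$ gives exactly the threshold $4c$ after rescaling; the choice of the factor $1+\delta/c$ is precisely what makes this arithmetic close.
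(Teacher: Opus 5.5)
Your proposal is correct and follows the paper's proof step for step. It maintains the invariant $\Psi_i(\eta_i^{(t)}\vec U_i^{(t)})\leq4c$ via convexity toward $\zero$ with the factor $c/(c+\delta_i^{(t)})$, bounds the overshoot by $24(\eta_i^{(t)})^2$ using \Cref{lem:one-round} and the mass bound, shows that $1/(\eta_i^{(t)})^2$ grows by at most $49/c$ per round, and then converts the result through the certificate $\Reg_i^{(T)}\leq 3c/\eta_i^{(T+1)}$.
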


\begin{proof}
\emph{Positivity of the learning rates and response weights.}
The excess $\delta_i^{(t)}$ is finite and nonnegative at every finite
time. Hence, \eqref{eq:safeguard-rate} gives
\[
    0
    <
    \eta_i^{(t+1)}
    \leq
    \eta_i^{(t)}
    \leq
    \frac1{32\sqrt n}
    \leq
    \frac1{32}.
\]
Moreover, centering gives $\norm{\ut_i}_\infty\leq1$. Therefore every
optimistic correction factor satisfies
\[
    1+4\eta_i^{(t)}\vec u_i^{(t-1)}[a]
    \geq
    1-4\eta_i^{(t)}
    \geq
    \frac78.
\]
Together with the positivity of the gradient coordinates from
\Cref{lem:calculus}, this shows that the response is well defined on
every round.

\emph{Keeping the potential below the threshold.}
We first prove
\[
    \Psi_i(\eta_i^{(t)}\vec U_i^{(t)})
    \leq
    4c
    \qquad\text{for every }t\geq1.
    \numberthis{eq:safeguard-cap}
\]
At $t=1$, the input is zero and \Cref{lem:calculus} gives
$\Psi_i(\zero)<3c<4c$.

Suppose \eqref{eq:safeguard-cap} holds at the beginning of round $t$.
If $\delta_i^{(t)}=0$, then
\[
    \Psi_i(\eta_i^{(t)}\vec U_i^{(t+1)})
    \leq
    4c
\]
and the learning rate does not change. Hence
\eqref{eq:safeguard-cap} also holds at $t+1$.

Now suppose $\delta_i^{(t)}>0$. By its definition,
\[
    \Psi_i(\eta_i^{(t)}\vec U_i^{(t+1)})
      =
      4c+\delta_i^{(t)}.
\]
The rate update gives
\[
    \eta_i^{(t+1)}\vec U_i^{(t+1)}
      =
      \frac{c}{c+\delta_i^{(t)}}
      \eta_i^{(t)}\vec U_i^{(t+1)}
      +
      \frac{\delta_i^{(t)}}{c+\delta_i^{(t)}}\zero.
\]
Thus the new potential input lies on the segment between the old
scaled input and zero. By convexity of $\Psi_i$ and
$\Psi_i(\zero)<3c$,
\[
    \Psi_i(\eta_i^{(t+1)}\vec U_i^{(t+1)})
    \leq
    \frac{c}{c+\delta_i^{(t)}}(4c+\delta_i^{(t)})
    +
    \frac{\delta_i^{(t)}}{c+\delta_i^{(t)}}3c
    =
    4c.
\]
This proves \eqref{eq:safeguard-cap}. The argument uses convexity of
the potential along this segment and does not require the potential to
be monotone as the learning rate decreases.

\emph{Bounding the excess above the threshold.}
We next control how far the potential can move above $4c$ before the
rate adjustment is applied. Although the cumulative fixed-rate bound
\eqref{eq:budget} does not telescope when the learning rate changes,
the one-round estimate in \Cref{lem:one-round} remains valid with the
rate used on that round.

Apply \Cref{lem:one-round} with
$\eta=\eta_i^{(t)}$. Since \eqref{eq:safeguard-rule} is exactly the
corresponding one-round response,
\[
    \Psi_i(\eta_i^{(t)}\vec U_i^{(t+1)})
    -
    \Psi_i(\eta_i^{(t)}\vec U_i^{(t)})
    \leq
    2(\eta_i^{(t)})^2
    \sum_a
       \partial_a\Psi_i(\eta_i^{(t)}\vec U_i^{(t)})
       \bigl(\ut_i[a]-\vec u_i^{(t-1)}[a]\bigr)^2
\]
\[
    \hspace{4em}
    -
    (\eta_i^{(t)})^2
    \sum_a
       \partial_a\Psi_i(\eta_i^{(t)}\vec U_i^{(t)})
       \ut_i[a]^2.
\]
The last term is nonpositive. Using
$\sum_a\partial_a\Psi_i\leq3$ from \Cref{lem:calculus} therefore gives
\[
    \Psi_i(\eta_i^{(t)}\vec U_i^{(t+1)})
    -
    \Psi_i(\eta_i^{(t)}\vec U_i^{(t)})
    \leq
    6(\eta_i^{(t)})^2
    \norm{\ut_i-\vec u_i^{(t-1)}}_\infty^2.
\]
Both centered-utility vectors have infinity norm at most one, so
\[
    \norm{\ut_i-\vec u_i^{(t-1)}}_\infty
    \leq
    2.
\]
Consequently,
\[
    \Psi_i(\eta_i^{(t)}\vec U_i^{(t+1)})
    -
    \Psi_i(\eta_i^{(t)}\vec U_i^{(t)})
    \leq
    24(\eta_i^{(t)})^2.
\]
Together with \eqref{eq:safeguard-cap}, this yields
\[
    0
    \leq
    \delta_i^{(t)}
    \leq
    24(\eta_i^{(t)})^2.
\]

\emph{Controlling the decrease of the learning rate.}
The preceding estimate controls the rate update through the reciprocal
squared learning rate. From \eqref{eq:safeguard-rate},
\[
    \frac1{(\eta_i^{(t+1)})^2}
    -
    \frac1{(\eta_i^{(t)})^2}
    =
    \frac{2\delta_i^{(t)}}
         {c(\eta_i^{(t)})^2}
    +
    \frac{(\delta_i^{(t)})^2}
         {c^2(\eta_i^{(t)})^2}.
\]
Using $\delta_i^{(t)}\leq24(\eta_i^{(t)})^2$ gives
\[
    \frac1{(\eta_i^{(t+1)})^2}
    -
    \frac1{(\eta_i^{(t)})^2}
    \leq
    \frac{48}{c}
    +
    \frac{576(\eta_i^{(t)})^2}{c^2}.
\]
Since $\eta_i^{(t)}\leq1/32$ and $c>2$,
\[
    576(\eta_i^{(t)})^2
    \leq
    \frac{576}{1024}
    <
    c,
\]
and hence
\[
    \frac1{(\eta_i^{(t+1)})^2}
    -
    \frac1{(\eta_i^{(t)})^2}
    \leq
    \frac{49}{c}.
\]
Summing from $t=1$ to $T$ telescopes these reciprocal-rate differences,
which gives
\[
    \frac1{(\eta_i^{(T+1)})^2}
    \leq
    \frac1{(\eta_i^{(1)})^2}
    +
    \frac{49T}{c}
    =
    1024n+\frac{49T}{c}.
\]
Thus the safeguard can decrease the learning rate only at the rate
allowed by this finite-horizon bound.

\emph{Converting the potential bound into regret.}
After round $T$, \eqref{eq:safeguard-cap} gives
\[
    \Psi_i(\eta_i^{(T+1)}\vec U_i^{(T+1)})
    \leq
    4c.
\]
Applying the potential certificate from \Cref{lem:calculus},
\[
    c
    +
    \eta_i^{(T+1)}
    \max_a\vec U_i^{(T+1)}[a]
    \leq
    \Psi_i(\eta_i^{(T+1)}\vec U_i^{(T+1)})
    \leq
    4c.
\]
Using \eqref{eq:two-identities},
\[
    \Reg_i^{(T)}
    \leq
    \frac{3c}{\eta_i^{(T+1)}}.
\]
The reciprocal-rate bound now gives
\[
    \Reg_i^{(T)}
    \leq
    3c
    \sqrt{
       1024n+\frac{49T}{c}
    }
    \leq
    96\sqrt n\,c
    +
    21\sqrt{cT}.
\]
Substituting $c=2+\log d$ proves the stated regret bound.

All inequalities above hold pathwise for every realized sequence of
utility vectors. No independence or probabilistic assumption on the
sequence is used, so the result also allows the utility vectors to be
chosen adaptively from the preceding history. The update uses only
quantities available by the end of the current round and does not
depend on $T$.
\end{proof}

\subsection{The Safeguard under Self-play}
\label{app:safeguard-selfplay}

We finally show that the safeguard is inactive in the fixed-game
self-play setting. Thus the adversarial protection does not alter the
trajectory used to obtain the stronger self-play guarantee.

\begin{proposition}
\label{prop:safeguard-selfplay}
Suppose all players use
\eqref{eq:safeguard-rule}--\eqref{eq:safeguard-rate}
in the fixed-game setting of \Cref{thm:self-play}. Then
\[
    \eta_i^{(t)}=\frac1{32\sqrt n}
    \qquad\text{for every player }i\text{ and every }t\geq1.
\]
Their mixed strategies coincide, round by round, with those of the
fixed-rate algorithm. The regret and equilibrium bounds in
\Cref{thm:self-play,cor:cce} therefore remain unchanged.
\end{proposition}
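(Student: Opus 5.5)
The argument is an induction on the round index $t$, carried out jointly for all players. The inductive claim is that the safeguarded trajectory coincides with the fixed-rate trajectory through round $t$. Concretely, $\eta_j^{(s)}=1/(32\sqrt n)$ for every player $j$ and every $s\leq t$. Consequently, the strategies $\vx_j^{(s)}$, utilities $\nut_j$, and cumulative vectors $\vec U_j^{(s+1)}$ for $s\leq t$ agree with those produced by \Cref{alg:learning} with $\eta=1/(32\sqrt n)$.

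The base case holds because every $\eta_j^{(1)}$ is initialized to $1/(32\sqrt n)$ and all other state is zero. Both runs play uniform strategies on round one and observe the same utility vectors, since the game is fixed and the utilities depend only on the profile of mixed strategies.

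For the inductive step, suppose the trajectories agree through round $t$, so the rates used on round $t$ equal $\eta=1/(32\sqrt n)$. I compute $\delta_i^{(t)}=[\Psi_i(\eta\vec U_i^{(t+1)})-4c]_+$ along this common prefix. The prefix is exactly a length-$t$ run of the fixed-rate self-play dynamics in the fixed game. Moreover, the fixed-rate trajectory through round $t$ does not depend on any later rounds, because all players use the fixed rule. I therefore apply \Cref{prop:closure} with horizon $T=t$ and obtain $\Psi_i(\eta\vec U_i^{(t+1)})\leq 4c$ for every player $i$. Hence $\delta_i^{(t)}=0$, and \eqref{eq:safeguard-rate} gives $\eta_i^{(t+1)}=\eta_i^{(t)}=1/(32\sqrt n)$.

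With identical rates and identical states $\vec U_j^{(t+1)}$ and $\vec u_j^{(t)}$, the response \eqref{eq:safeguard-rule} reduces to \eqref{eq:explicit-rule}, so every player chooses the same $\vx_j^{(t+1)}$ in both runs. The observed utility vectors on round $t+1$ then agree by \eqref{eq:feedback}, which closes the induction.

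The one point needing care is that \Cref{prop:closure} is a statement about the joint fixed-rate dynamics of all players. It therefore cannot be applied to a single player while the others might be throttled. This is why the induction must be simultaneous over all players. With the induction in place, the conclusion follows: the rates are constant and the strategies coincide round by round. The bounds of \Cref{thm:self-play} and \Cref{cor:cce} then transfer verbatim, since they depend only on the realized strategies and utilities.
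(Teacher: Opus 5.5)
Your proposal is correct and follows essentially the same route as the paper. Both run a simultaneous induction over all players in which \Cref{prop:closure}, applied to the fixed-rate prefix with horizon $T=t$, gives $\Psi_i(\eta\vec U_i^{(t+1)})\leq 4c$, hence $\delta_i^{(t)}=0$, the rate stays at $1/(32\sqrt n)$, and the strategies coincide round by round.
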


\begin{proof}
We argue by induction on the round. Initially, every player has the
same learning rate, cumulative vector, and preceding centered utility
as in the fixed-rate algorithm. Hence the first-round strategies
coincide. Suppose all players have retained the rate $1/(32\sqrt n)$ through
round $t$. Their played strategies and observations through that round
then coincide with those of fixed-rate self-play, so their updated
cumulative vectors $\vec U_i^{(t+1)}$ also coincide.

Apply \Cref{prop:closure} to this finite prefix. For every player $i$,
\[
    \Psi_i\!\left(
       \frac{\vec U_i^{(t+1)}}{32\sqrt n}
    \right)
    \leq
    4c.
\]
Therefore $\delta_i^{(t)}=0$ for every player, and
\eqref{eq:safeguard-rate} gives
\[
    \eta_i^{(t+1)}
      =
      \frac1{32\sqrt n}.
\]
This proves the induction step simultaneously for all players. Hence
the safeguard never changes a learning rate, and the entire self-play
trajectory agrees with the fixed-rate trajectory.
\end{proof}

The safeguard requires only one additional scalar learning rate for each
player and one evaluation of the closed-form potential after each
round. It does not change the cumulative vector or centered utilities,
introduce an additional play, or require an optimization subroutine.

\subsection{Completing the Main Guarantee}

The self-play result and the adversarial bound together give the main
theorem.

\begin{repeatthm}{thm:main}[Regret bounds for \algname{}]
Consider a fixed $n$-player finite game with at most $d\geq2$ actions
per player and utilities in $[0,1]$. If all players $i\in[n]$ follow
\algname{} (\Cref{alg:learning}) with $c=2+\log d$ and learning rate
$\eta=1/(32\sqrt n)$, observing their exact expected-utility vectors
after each round, then every player $i\in[n]$ has external regret
\[
    \Reg_i^{(T)}\leq \mathcal{O}(\sqrt n\log d).
\]
Moreover, \algname{} is adaptive to adversarial utilities through the
learning-rate safeguard in \Cref{app:safeguard}, which only decreases
the learning rate adaptively and leaves the rest of the update unchanged. The safeguarded rule for any individual
player $i$ guarantees
\[
    \Reg_i^{(T)}\leq \mathcal{O}(\sqrt n\log d+\sqrt{T\log d})
\]
against arbitrary, possibly adaptive, utility vectors
$\nut_i\in[0,1]^{|\mathcal A_i|}$. Both bounds hold uniformly over all $T\geq1$.
\end{repeatthm}

\begin{proof}
The self-play assertion follows from \Cref{thm:self-play}, which gives
the explicit bound
\[
    \Reg_i^{(T)}
    \leq
    96\sqrt n\,(2+\log d).
\]
The adversarial assertion follows from \Cref{thm:adversarial}, which
gives
\[
    \Reg_i^{(T)}
    \leq
    96\sqrt n\,(2+\log d)
    +
    21\sqrt{T(2+\log d)}.
\]
For $d\geq2$,
\[
    2+\log d
    \leq
    \left(1+\frac2{\log2}\right)\log d,
\]
so these two explicit estimates imply the stated asymptotic bounds
with universal constants. Both hold for every finite horizon without
requiring $T$ in advance. Finally,
\Cref{prop:safeguard-selfplay} shows that the safeguard never changes
the learning rate in fixed-game self-play, so the stronger self-play
trajectory and its equilibrium guarantee remain unchanged.
\end{proof}

\end{document}

%% file: notation.tex
\def\[#1\]{\begin{align*}#1\end{align*}}

\NewDocumentCommand{\numberthis}{om}{%
  \IfNoValueTF{#1}{%
    \refstepcounter{equation}\tag{\theequation}%
  }{%
    \tag{#1}%
  }%
  \label{#2}%
}

\newtheorem{theorem}{Theorem}[section]

\newaliascnt{lemma}{theorem}
\newtheorem{lemma}[lemma]{Lemma}
\aliascntresetthe{lemma}

\newaliascnt{proposition}{theorem}
\newtheorem{proposition}[proposition]{Proposition}
\aliascntresetthe{proposition}

\newaliascnt{corollary}{theorem}
\newtheorem{corollary}[corollary]{Corollary}
\aliascntresetthe{corollary}

\newaliascnt{claim}{theorem}

\aliascntresetthe{claim}

\theoremstyle{definition}
\newaliascnt{definition}{theorem}

\aliascntresetthe{definition}

\newaliascnt{question}{theorem}

\aliascntresetthe{question}

\theoremstyle{plain}
\newaliascnt{remark}{theorem}
\newtheorem{remark}[remark]{Remark}
\aliascntresetthe{remark}

\crefname{theorem}{theorem}{theorems}
\Crefname{theorem}{Theorem}{Theorems}
\crefname{lemma}{lemma}{lemmas}
\Crefname{lemma}{Lemma}{Lemmas}
\crefname{proposition}{proposition}{propositions}
\Crefname{proposition}{Proposition}{Propositions}
\crefname{corollary}{corollary}{corollaries}
\Crefname{corollary}{Corollary}{Corollaries}
\crefname{definition}{definition}{definitions}
\Crefname{definition}{Definition}{Definitions}
\crefname{section}{section}{sections}
\Crefname{section}{Section}{Sections}

\newcommand{\R}{\mathbb{R}}

\newcommand{\eps}{\varepsilon}
\newcommand{\one}{\mathbf{1}}

\newcommand{\norm}[1]{\left\lVert #1\right\rVert}

\numberwithin{equation}{section}

\theoremstyle{plain}

\newcommand{\repeatthmname}{}
\newtheorem*{repeatthminner}{\repeatthmname}
\newenvironment{repeatthm}[1]{%
  \renewcommand{\repeatthmname}{Theorem~\ref{#1}}%
  \begin{repeatthminner}%
}{%
  \end{repeatthminner}%
}

\newcommand{\repeatlemmaname}{}
\newtheorem*{repeatlemmainner}{\repeatlemmaname}
\newenvironment{repeatlemma}[1]{%
  \renewcommand{\repeatlemmaname}{Lemma~\ref{#1}}%
  \begin{repeatlemmainner}%
}{%
  \end{repeatlemmainner}%
}

\newcommand{\repeatpropname}{}
\newtheorem*{repeatpropinner}{\repeatpropname}

\newcommand{\repeatcorollaryname}{}
\newtheorem*{repeatcorollaryinner}{\repeatcorollaryname}
\newenvironment{repeatcorollary}[1]{%
  \renewcommand{\repeatcorollaryname}{Corollary~\ref{#1}}%
  \begin{repeatcorollaryinner}%
}{%
  \end{repeatcorollaryinner}%
}

\renewcommand{\vec}[1]{\bm{#1}}
\newcommand{\vx}{\vec{x}}
\newcommand{\nut}{\vec{\nu}^{(t)}}
\newcommand{\xt}{\vx^{(t)}}
\newcommand{\ut}{\vec{u}^{(t)}}
\newcommand{\zero}{\vec{0}}
\renewcommand{\one}{\vec{1}}
\newcommand{\Reg}{\operatorname{Reg}}
\newcommand{\E}{\mathbb{E}}
\newcommand{\ip}[2]{\left\langle #1,#2\right\rangle}
\DeclareMathOperator{\diag}{diag}
\DeclareMathOperator{\Var}{Var}

\newcommand{\algname}{\texttt{MORM}}